\def\draft{0}
\def\anonymize{0}

\documentclass[11pt,oneside]{article}
\usepackage[margin=1.0in]{geometry}
\usepackage{amsmath}
\usepackage{amsfonts}
\usepackage{amsthm}
\usepackage{enumerate}
\usepackage[T1]{fontenc}
\usepackage{mathrsfs}
\usepackage{mathdots}
\usepackage{graphicx}
\usepackage{mathtools}
\graphicspath{ {./images/} }
\usepackage{xcolor}
\usepackage{hyperref}
\hypersetup{colorlinks=true,linkcolor=red,citecolor=blue,filecolor=magenta,urlcolor=cyan}
\usepackage{subfiles}
\usepackage{pgfplots}
\usepackage{tikz}
\usepackage{algpseudocode}
\usepackage{algorithm}
\usepackage{tikz}
\usepackage{bbm}
\usepackage{multirow}
\usepackage{geometry}
\usepackage{array} 
\pgfplotsset{compat=1.18} 

\usepackage[utf8]{inputenc}


\usepackage[style=alphabetic,sorting=nyt,maxbibnames=9,abbreviate=false,backend=biber]{biblatex}

\usepackage{boldline} 

\newtheorem{theorem}{Theorem}[section]
\newtheorem{corollary}[theorem]{Corollary}
\newtheorem{lemma}[theorem]{Lemma}
\newtheorem{proposition}[theorem]{Proposition}
\newtheorem{claim}[theorem]{Claim}

\theoremstyle{definition}
\newtheorem{remark}[theorem]{Remark}
\newtheorem{definition}[theorem]{Definition}

\newcommand{\rand}{\mathrm{rand}}

\newcommand{\grand}{g^\rand}
\newcommand{\hrand}{h^\rand}
\newcommand{\coins}{\mathrm{coins}}
\newcommand{\post}{\mathrm{post}}
\newcommand{\Hb}{\mathrm{H}}

\numberwithin{equation}{section}

\newcommand{\D}{\mathcal{D}}

\newcommand{\X}{\mathcal{X}}
\newcommand{\U}{\mathcal{U}}
\newcommand{\V}{\mathcal{V}}
\newcommand{\C}{\mathcal{C}}
\newcommand{\Ha}{\mathcal{H}}

\newcommand{\F}{\mathcal{F}}

\newcommand{\poly}{\mathrm{poly}}
\newcommand{\Bern}{\mathrm{Bern}}
\newcommand{\image}{\mathrm{image}}

\newcommand{\zo}{\{0,1\}}

\newcommand{\Pa}{\mathcal{P}}

\newcommand{\Sa}{\mathcal{S}}

\newcommand{\E}{\mathop{\mathbb{E}}}

\newcommand{\remove}[1]{}

\providecommand{\customgenericname}{}
\newcommand{\newcustomtheorem}[2]{%
  \newenvironment{#1}[1]
  {%
   \renewcommand\customgenericname{#2}%
   \renewcommand\theinnercustomgeneric{##1}%
   \innercustomgeneric
  }
  {\endinnercustomgeneric}
}

\newcustomtheorem{customthm}{Theorem}
\newcustomtheorem{customlemma}{Lemma}
\newcustomtheorem{customdef}{Definition}
\newcustomtheorem{customcor}{Corollary}

\ifnum\draft=1
\newcommand{\questionc}[1]{\textcolor{red}{\textbf{Question:} #1}} 
\newcommand{\discuss}[1]{\textcolor{red}{\textbf{Discuss:} #1}} 
\newcommand{\notec}[1]{\textcolor{blue}{\textbf{Note:} #1}}
\newcommand{\salil}[1]{{ {\color{olive}{(salil)~#1}}}}
\newcommand{\cynthia}[1]{{ {\color{teal}{(cynthia)~#1}}}}
\newcommand{\silvia}[1]{{ {\color{purple}{(silvia)~#1}}}}
\newcommand{\todo}[1]{{ {\color{red}{(Todo)~#1}}}}
\else 
\newcommand{\questionc}[1]{}
\newcommand{\discuss}[1]{} 
\newcommand{\notec}[1]{}
\newcommand{\salil}[1]{}
\newcommand{\cynthia}[1]{}
\newcommand{\silvia}[1]{}
\newcommand{\todo}[1]{}
\fi

\title{Complexity-Theoretic Implications of Multicalibration\ifnum\draft=1 \\ {\small \textsc{Working Draft: Please Do Not Distribute}}\fi}

\ifnum\anonymize=1
 \author{}
\date{}
\else
 \author{S\'ilvia Casacuberta\thanks{Email: \texttt{silvia.casacuberta.puig@cs.ox.ac.uk}. Supported by the Herchel Smith Undergraduate Science Research Program.}
        \and
        Cynthia Dwork\thanks{Email: \texttt{dwork@seas.harvard.edu}. Supported by grant G-2020-13941 of the Alfred P. Sloan Foundation and the Simons Foundation collaboration project 733782.}
        \and 
        Salil Vadhan\thanks{Email: \texttt{salil\_vadhan@harvard.edu}. Supported by a Simons Investigator Award.} \hspace{1cm}
        }
\date{
Harvard University\\[2 ex]
July 29, 2024}
\fi

\addbibresource{refs.bib}

\begin{document}

\maketitle

\begin{abstract}

We present connections between the recent literature on multigroup fairness for prediction algorithms and classical results in computational complexity.
\emph{Multiaccurate} predictors are correct in expectation on each member of an arbitrary collection of pre-specified sets.
\emph{Multicalibrated} predictors satisfy a stronger condition: they are calibrated on each set in the collection.  

Multiaccuracy is equivalent to a regularity notion for functions defined by Trevisan, Tulsiani, and Vadhan (2009).  
They showed that, given a class $\F$ of (possibly simple) functions, an arbitrarily complex function $g$ can be approximated by a low-complexity function $h$ that makes a small number of oracle calls to members of $\F$, where the notion of approximation requires that $h$ cannot be distinguished from $g$ by members of~$\F$.  This complexity-theoretic Regularity Lemma is known to have implications in different areas, including in complexity theory, additive number theory, information theory, graph theory, and cryptography.  
Starting from the stronger notion of {\em multicalibration}, we obtain stronger and more general versions of a number of applications of the Regularity Lemma, including the Hardcore Lemma, the Dense Model Theorem, and the equivalence of conditional pseudo-min-entropy and unpredictability.  
For example, we show that \emph{every} boolean function (regardless of its hardness) has a small collection of disjoint hardcore sets, where the sizes of those hardcore sets are related to how balanced the function is on corresponding pieces of an efficient partition of the domain. 

\end{abstract}

\thispagestyle{empty}
\newpage

{
  \hypersetup{linkcolor=black}
  \tableofcontents
}

\thispagestyle{empty}

\newpage
\pagenumbering{arabic}

\section{Introduction}\label{sec:introduction}

In this paper, we give novel complexity-theoretic consequences of recent results in the algorithmic fairness literature regarding ``multicalibration.''  Before stating our results we review the concept of multicalibration as well as the complexity-theoretic Regularity Lemma that provides the context for our theorems.

\subsection{Multicalibration in algorithmic fairness} \label{sec:intro-mc}

Algorithms increasingly inform decisions that can deeply affect our lives, from hiring, to healthcare diagnoses, to granting of release on bail.  A major concern that arises in this context is whether or not prediction algorithms are \textit{fair} across different subpopulations and minority groups \cite{buolamwini2018gender, vincent2018amazon, larson2016we}. 
Part of the effort within the field of algorithmic fairness 
has focused on developing mathematical frameworks to formally define what it means for an algorithm to be ``fair''. 
The various definitions that have been proposed in the literature roughly fall into two main categories: individual fairness notions \cite{dhprz12, ilvento2019metric} and group fairness notions \cite{chouldechova2017fair, barocas2017fairness}.

The \textit{multigroup} framework was proposed as a way to bridge individual fairness, which requires similar treatment for individuals who are similar with respect to a given task, and group fairness, which requires that (typically disjoint) demographic groups receive similar treatment on average \cite{hkrr18, kearns2018preventing}. 
The underlying principle is the following: we want to establish a group fairness notion that is to be satisfied, simultaneously, for every one of a pre-specified collection of \textit{large, identifiable, subgroups}. 
This versatile framework allows us to consider the intersection of different minority subgroups, such as the intersection of gender, race, and socioeconomic status.  
Multicalibration in particular has proven to be a fruitful notion with wide applications including a new paradigm for loss minimization in machine learning.
See, {\it e.g.},~\cite{barda2021addressing, dkrry21, gkrsw21, kkgkr22, gjra23, gkr23, nrrx23}.

H\'ebert-Johnson, Kim, Reingold, and Rothblum~\cite{hkrr18} 
introduced the notion of a {\em multicalibrated (MC) predictor}, which guarantees calibrated predictions across
every subpopulation from a prespecified family $\F$ of potentially intersecting subsets of $\X$.
More formally, let $\X$ be a domain of individuals.  Consider a collection $\F$ of subpopulations, each described as a boolean indicator function $f: \X \rightarrow \{0,1\}$. 
Then, given an arbitrary and unknown function $g$ mapping individuals in $\X$ to $[0,1]$, and a distribution $\D$ on $\X$, we say that a predictor $h: \X \rightarrow [0,1]$ is a \textit{$(\F,\epsilon)$-multicalibrated (MC)} predictor of $g$ with respect to $\D$ if for all $f \in \F$ and for all $v \in \image(h)$:
\begin{equation}\label{eq:intro-mc}
    \Big| \E_{x \sim \D}[f(x) \cdot (g(x) - h(x)) \mid h(x) = v] \Big| \leq \epsilon.
\end{equation}
Remarkably, H\'ebert-Johnson et al. proved that ``low-complexity'' multicalibrated predictors $h$ exist, provided we slightly relax the definition to only require (\ref{eq:intro-mc}) on level sets $h(x)=v$ that are not too small. We will state this result in more detail below in Section~\ref{sec:intro-mcpartitions}, using a more convenient formulation in terms of partitions of $\X$.  

{\em Multiaccuracy} is a relaxation of multicalibration in which the predictor is merely required to be accurate in expectation on each $f \in \F$~\cite{hkrr18}.
Formally, $h$ is an {\em $(\F,\epsilon)$-multiaccurate (MA)} predictor of $g$ with respect to distribution $\D$ if for all $f \in \F$:
 \begin{equation}\label{eq:intro-ma}
    \Big| \E_{x \sim \D}[f(x) \cdot (g(x) - h(x))] \Big| \leq \epsilon.
\end{equation}

\subsection{The Complexity-Theoretic Regularity Lemma} \label{sec:intro-regularity}

It turns out that the notion of $(\F,\epsilon)$-multiaccuracy is exactly equivalent to the notion of {\em $(\F,\epsilon)$-indistinguishability} defined and studied in 2009 by Trevisan, Tulsiani, and Vadhan~\cite{ttv09} in the complexity theory literature.  Specifically, they proved the following abstract, complexity-theoretic Regularity Lemma.

\begin{theorem}[Regularity Lemma~\cite{ttv09}, informally stated] \label{thm:intro-regularity}
For every finite domain $\X$, every function $g : \X\rightarrow [0,1]$, every distribution $\D$ on $\X$, and every $\epsilon>0$, there exists
a function $h : \X\rightarrow [0,1]$ such that:
\begin{enumerate}
    \item $h$ has ``low complexity'' relative to $\F$.  Specifically, $h$ can be computed by a boolean circuit that has $O(1/\epsilon^2)$ oracle gates instantiated with functions from $\F$ and has size $\poly(\log |\X|,1/\epsilon)$.
    \item $h$ is {\em $(\F,\epsilon)$-indistinguishable} from $g$.  That is,  
    for all $f \in \F$, we have:
 \begin{equation}\label{eq:intro-indist}
    \Big| \E_{x \sim \D}[f(x) \cdot (g(x) - h(x))] \Big| \leq \epsilon.
\end{equation}
\end{enumerate}
\end{theorem}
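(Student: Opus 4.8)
The plan is to prove this by an iterative ``energy-increment'' argument (equivalently, a boosting or projected-gradient-descent argument, in the spirit of Impagliazzo's proof of the Hardcore Lemma and of weak Frieze--Kannan regularity). Define the potential $\Phi(h) = \E_{x \sim \D}[(g(x) - h(x))^2] \in [0,1]$, and start from the constant predictor $h_0 \equiv 1/2$, so that $\Phi(h_0) \le 1/4$. Given $h_t$, if $h_t$ is already $(\F,\epsilon)$-indistinguishable from $g$ we stop and output $h = h_t$; otherwise, by the negation of \eqref{eq:intro-indist} there exist $f_t \in \F$ and a sign $\sigma_t \in \{+1,-1\}$ with $\sigma_t \cdot \E_{x \sim \D}[f_t(x)(g(x) - h_t(x))] > \epsilon$, and we update $h_{t+1}(x) = \Pi_{[0,1]}\bigl(h_t(x) + \epsilon\,\sigma_t\, f_t(x)\bigr)$, where $\Pi_{[0,1]}$ clips into $[0,1]$ pointwise.

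The heart of the argument is that each update decreases $\Phi$ by at least $\epsilon^2$. Writing $\tilde h_{t+1}(x) = h_t(x) + \epsilon\sigma_t f_t(x)$ for the pre-clipping value, one expands
\[
\E_{x \sim \D}\bigl[(g(x)-\tilde h_{t+1}(x))^2\bigr] = \Phi(h_t) - 2\epsilon\sigma_t\,\E_{x \sim \D}[f_t(x)(g(x)-h_t(x))] + \epsilon^2\,\E_{x \sim \D}[f_t(x)^2].
\]
Since $f_t$ is $\{0,1\}$-valued, $f_t^2 = f_t \le 1$, so $\E[f_t^2] \le 1$; together with the distinguishing inequality this gives $\E[(g-\tilde h_{t+1})^2] \le \Phi(h_t) - 2\epsilon^2 + \epsilon^2 = \Phi(h_t) - \epsilon^2$. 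Finally, because $g(x) \in [0,1]$ for every $x$ and $\Pi_{[0,1]}$ is the (pointwise, hence $L^2(\D)$) projection onto the convex set $[0,1]$, clipping only moves $\tilde h_{t+1}$ closer to $g$: $(g(x) - h_{t+1}(x))^2 \le (g(x) - \tilde h_{t+1}(x))^2$ for all $x$, and thus $\Phi(h_{t+1}) \le \Phi(h_t) - \epsilon^2$. Since $\Phi \ge 0$ and $\Phi(h_0) \le 1/4$, the process must halt after $T \le 1/(4\epsilon^2)$ iterations, at which point $h = h_T$ is $(\F,\epsilon)$-indistinguishable from $g$, establishing item~2.

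For item~1, observe that $h = h_T$ is computed from $x$ by making $T = O(1/\epsilon^2)$ oracle calls to $f_1,\dots,f_T \in \F$ with, between consecutive calls, only the fixed operation ``add $\pm\epsilon$ times the oracle answer, then clip to $[0,1]$''. Every intermediate value $h_t(x)$ lies in a fixed grid of $O(1/\epsilon)$ rationals (the points $1/2 + k\epsilon$ together with $0$ and $1$), so each $h_t(x)$ is carried with $O(\log(1/\epsilon))$ bits of precision and each arithmetic step is a circuit of size $\poly(\log(1/\epsilon))$; the inputs $x$ fed to the oracle gates have length $O(\log|\X|)$. Hence $h$ is computed by a circuit with $O(1/\epsilon^2)$ oracle gates and total size $\poly(\log|\X|,1/\epsilon)$, as claimed.

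The two places that require care are (i) tracking the potential decrement through the clipping step — handled above via non-expansiveness of the projection onto $[0,1]$ and the fact that $g$ takes values in $[0,1]$ — and (ii) checking that the intermediate predictors take only $O(1/\epsilon)$ distinct values, so the running arithmetic does not inflate the circuit's bit-complexity. It is worth noting explicitly that the construction is \emph{existential}: selecting $f_t$ at each step uses knowledge of $g$ and $\D$, but the theorem asserts only that a low-complexity $h$ \emph{exists}, so this is not a gap; if one additionally posits an efficient weak-agnostic-learning (distinguisher) oracle for $\F$, the very same loop becomes an algorithm, which is the form in which the multicalibration literature uses it.
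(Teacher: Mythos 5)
Your proof follows exactly the energy-increment / projected-gradient argument that Trevisan, Tulsiani, and Vadhan use in~\cite{ttv09}: potential $\Phi(h)=\E_{\D}[(g-h)^2]$ starting from $h_0\equiv 1/2$, a $\poly$ decrement of $\epsilon^2$ per update against the witnessing distinguisher, and non-expansiveness of the pointwise projection onto $[0,1]$ to absorb the clipping. The key computation $\E[(g-\tilde h_{t+1})^2]=\Phi(h_t)-2\epsilon\sigma_t\E[f_t(g-h_t)]+\epsilon^2\E[f_t^2]\le \Phi(h_t)-\epsilon^2$ is correct, and the termination bound $T\le 1/(4\epsilon^2)=O(1/\epsilon^2)$ gives the claimed oracle count, so item~2 and the oracle-gate part of item~1 are established.

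One small inaccuracy in the circuit-size accounting: you assert that every intermediate $h_t(x)$ lies in the grid $\{0,1\}\cup\{1/2+k\epsilon\}$. This is not quite right as stated: once a coordinate clips to $0$ or $1$, the next unclipped step yields $\epsilon$ or $1-\epsilon$, which are not of the form $1/2+k\epsilon$ (nor $0$ or $1$) unless $\epsilon$ happens to divide $1/2$. The standard fix is to replace $\epsilon$ by a nearby power of $2$, say $\epsilon'=2^{-\lceil\log_2(1/\epsilon)\rceil}$, before running the loop; then $1/2$, $0$, $1$, and all increments are exact multiples of $\epsilon'$, the intermediate values do lie on an $O(1/\epsilon)$-point grid, and the $O(\log(1/\epsilon))$-bit-precision conclusion (and hence the $\poly(\log|\X|,1/\epsilon)$ circuit size) follows cleanly. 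With that one-line repair the argument is complete and coincides with the proof the paper cites.
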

Notice that Condition (\ref{eq:intro-indist}) is identical to the Definition (\ref{eq:intro-ma}) of multiaccuracy.

\paragraph{Regularity and Complexity.} 
This is referred to as a {\em Regularity Lemma} because it says that an arbitrarily complex function $g$ can be `simulated' by a low-complexity function $h$, in such a way that the family $\F$ of tests cannot distinguish them.  This of a  similar spirit to
Szemer\'edi's Regularity Lemma~\cite{sze78}, whereby an arbitrarily complex graph is shown to be, in a certain sense, indistinguishable from the union of a constant number of Erd\H{o}s-R\'enyi bipartite graphs \cite{sze78}.  Indeed, in \cite{ttv09}, it is shown that Theorem~\ref{thm:intro-regularity} implies the Frieze--Kannan Weak Regularity Lemma for graphs~\cite{fk99}, which is a lower-complexity variant of Szemer\'edi's Regularity Lemma.
In a typical complexity-theoretic application, $\F$ consists of Boolean circuits of some polynomial size in the length $n=\log |\X|$ of inputs and $\epsilon=1/\poly(n)$. In this case, the Regularity Lemma says that the simulator $h$ can also be computed by circuits of size
$\poly(n)$.

In the fairness literature, as well as in uniform-complexity applications in complexity and cryptography (cf. \cite{vz13}), it is important to consider the complexity of \emph{learning} such a predictor $h$ given samples $(x, y)$ where $x \sim \D$ and $y \sim \Bern(g(x))$. 
This may be computationally hard even if $g$ is ``easy''; for example, if $g$ is in the class $\F$.
In this paper, however, we are only concerned with the oracle complexity of $h$; i.e., the complexity of evaluating $h$ given oracle gates for functions $f \in \F$ (which is trivial if $g \in \F$).
We remark that many of the learning algorithms in the fairness literature assume an agnostic learner for $\F$ as an oracle, which also trivializes the learning task if $g \in \F$.

\paragraph{Applications.}  In addition to the Frieze--Kannan Weak Regularity Lemma for graphs, Theorem~\ref{thm:intro-regularity} can be used to derive several other fundamental theorems in various areas of theoretical computer science.  These include Impagliazzo's Hardcore Lemma~\cite{imp95}, the Dense Model Theorem~\cite{gt08,tz08,rttv08}, 
Yao's XOR Lemma \cite{ttv09, gnw11}, the Leakage Simulation Lemma in leakage-resilient cryptography \cite{jp14, ccl18}, characterizations of pseudo-entropy~\cite{vz12,zhe14}, chain rules for computational entropy \cite{gw13, jp14}, Chang's Inequality in Fourier analysis of boolean functions \cite{imr14}, and equivalences between weak notions of zero knowledge \cite{clp15}. 

\paragraph{Fractional vs. Boolean Functions.} Note that we allow all of the functions $f,g,h$ to be $[0,1]$-valued rather than just Boolean.  We think of a $[0,1]$-valued function $h : \X\rightarrow [0,1]$ as representing the randomized Boolean function $\hrand : \X\rightarrow \zo$ where for all $x\in \X$ we have that $\Pr_{\coins(h)}[\hrand(x)=1] = h(x)$.  Then Condition~(\ref{eq:intro-indist}) is essentially equivalent to requiring that the distributions of $(X,\grand(X))$ and $(X,\hrand(X))$, where $X\sim \D$, are computationally indistinguishable by the family $\F$, up to an advantage of $\epsilon$.\footnote{This equivalence holds up to a small modification to the family $\F$, in particular to account from changing the domain of the functions from $\X$ to $\X\times \zo$.} 
In the Regularity Lemma, it does not matter much if we restrict $f,g,h$ to be deterministic Boolean functions, but for multicalibrated predictors it is crucial that $h$ is 
fractional (else there could only be two nonempty level sets, $h(x)=0$ and $h(x)=1$).

\subsection{Multicalibrated partitions} \label{sec:intro-mcpartitions}

It will be more convenient for us to work with an equivalent formulation of multicalibration in terms of \emph{partitions} $\Pa \subseteq 2^\X$ of the domain $\X$, 
as was done in \cite{gkrsw21,grsw22}.
The pieces $P\in\Pa$ of the partition correspond to the level sets $h(x)=v$ in (\ref{eq:intro-mc}); furthermore, it can be shown that we can assume without loss of generality that on piece $P$, we can take the value $v$ to be equal to $v_P = \E_{x \sim \D\vert_P}[g(x)]$, where $\D|_P$ denotes the distribution $\D$ conditioned on being in $P$.  In this language, the
Multicalibration Theorem can be stated as follows:

\begin{theorem}[Multicalibration Theorem \cite{hkrr18}, informally stated]\label{thm:intro-mcpartition}
Let $\X$ be a finite domain, $\F$ a class of functions $f\colon \X \rightarrow [0,1]$, $g\colon \X \rightarrow [0,1]$ an arbitrary function, $\D$ a probability distribution over $\X$, and $\epsilon, \gamma > 0$. There exists a partition $\Pa$ of $\X$ such that:
\begin{enumerate}
    \item $\Pa$ has $k=O(1/\epsilon)$ parts.
    \item $\Pa$ has ``low complexity'' relative to $\F$. 
    Specifically, there is a Boolean circuit\footnote{That is, $C$ is a circuit with Boolean gates of fan-in at most 2 that has $\lceil \log |\X|\rceil$ input gates and $\lceil \log k\rceil$ output gates.} $C : \X\rightarrow [k]$ of size $\poly(1/\epsilon,1/\gamma,\log |\X|)$ with $O(1/\epsilon^2)$ oracle gates instantiated with functions from $\F$ such that $\Pa=\{C^{-1}(1),\ldots,C^{-1}(k)\}$.
    \item \label{cond:MC} $\Pa$ is \emph{$(\F, \epsilon, \gamma)$-approximately multicalibrated} (MC) for $g$ on $\D$: that is, for all $f \in \F$ and all $P \in \Pa$ such that $\Pr_{x \sim \D}[x \in P] \geq \gamma$, we have
\begin{equation}\label{eq:intro-mcp}
    \Big|\E_{x \sim \D|_P} [f(x) \cdot (g(x) - v_P)] \Big| \leq \epsilon 
\end{equation}
where $v_P := \E_{x \sim \D|_P}[g(x)]$ and $\D|_P$ denotes the conditional distribution $\D|_{h(x)\in P}$.
\end{enumerate}
\end{theorem}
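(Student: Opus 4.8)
The plan is to build the partition by an iterative energy‑increment (``boosting'') argument, in the spirit of one of the proofs of the Regularity Lemma (Theorem~\ref{thm:intro-regularity}) in~\cite{ttv09}. I would initialize with the trivial partition $\Pa_0=\{\X\}$ and maintain the invariant that every piece $P$ of the current partition $\Pa_t$ carries the value $v_P:=\E_{x\sim\D|_P}[g(x)]$. At step $t$: if $\Pa_t$ is already $(\F,\epsilon,\gamma)$‑approximately multicalibrated for $g$ on $\D$, output $\Pa=\Pa_t$ and stop; otherwise, by definition there is a piece $P\in\Pa_t$ with $\Pr_{x\sim\D}[x\in P]\ge\gamma$ and a test $f\in\F$ with $|\E_{x\sim\D|_P}[f(x)(g(x)-v_P)]|>\epsilon$, and I would \emph{refine} $P$ into the sub‑pieces on which the rounded value $\lfloor f(x)\rfloor_\lambda$ (i.e.\ $f(x)$ rounded to the nearest multiple of $\lambda:=\epsilon/4$) is constant, leaving all other pieces of $\Pa_t$ untouched. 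Rounding $f$ ensures each refinement produces at most $1+1/\lambda$ sub‑pieces and keeps the bookkeeping Boolean.

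To show the process halts I would track the potential $\Phi(\Pa):=\E_{x\sim\D}[\,v_{P(x)}^2\,]$, where $P(x)$ is the piece of $\Pa$ containing $x$. Since $v_{P(x)}$ is the conditional expectation of $g$ given the piece, the change caused by refining $P$ is exactly $\Phi(\Pa_{t+1})-\Phi(\Pa_t)=\Pr[P]\cdot\mathrm{Var}_{\D|_P}\!\big[v_{P'(x)}\big]\ge 0$ (here $P'(x)$ is the new sub‑piece of $x$), so $\Phi$ is non‑decreasing and bounded in $[0,1]$. The key step is the quantitative increment: by the one‑dimensional (Cauchy--Schwarz) projection bound,
\begin{equation*}
\mathrm{Var}_{\D|_P}\!\big[v_{P'(x)}\big]\ \ge\ \frac{\big(\E_{\D|_P}[(g-v_P)\,\lfloor f\rfloor_\lambda]\big)^2}{\mathrm{Var}_{\D|_P}[\lfloor f\rfloor_\lambda]}\ \ge\ (\epsilon-\lambda)^2\ \ge\ \tfrac{9}{16}\epsilon^2 ,
\end{equation*}
where I use $\mathrm{Var}_{\D|_P}[\lfloor f\rfloor_\lambda]\le 1$, the identity $\E_{\D|_P}[v_{P'(x)}\lfloor f\rfloor_\lambda]=\E_{\D|_P}[g\cdot\lfloor f\rfloor_\lambda]$ (because $\lfloor f\rfloor_\lambda$ is constant on each sub‑piece and $v_{P'}$ averages $g$ there), and $|\E_{\D|_P}[(g-v_P)(\lfloor f\rfloor_\lambda-f)]|\le\lambda$ together with the violation $|\E_{\D|_P}[(g-v_P)f]|>\epsilon$. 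Hence every step increases $\Phi$ by at least $\tfrac{9}{16}\gamma\epsilon^2$, so the loop terminates after $T=O(1/(\gamma\epsilon^2))$ iterations, and the output partition satisfies~\eqref{eq:intro-mcp} by the stopping rule.

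For the circuit, the membership map $C:\X\to[k]$ would be assembled by composing, along the refinement tree, the $T$ rounded tests $\lfloor f\rfloor_\lambda$---one oracle gate into $\F$ plus $\poly(\log|\X|,1/\epsilon)$ Boolean rounding/bookkeeping gates per test---so that $C$ has $O(T)$ oracle gates and size $T\cdot\poly(\log|\X|,1/\epsilon)$. I expect the main obstacle to be matching the \emph{sharp} parameters in the statement (namely $k=O(1/\epsilon)$ parts and $O(1/\epsilon^2)$ oracle gates, shaving the naive $1/\gamma$ factors above): the fix is to insist that the pieces always be the level sets of a predictor valued in the $\epsilon$‑grid $\{0,\epsilon,\dots,1\}$---i.e.\ at each step also re‑round the new conditional means to the grid---which caps the number of pieces at $O(1/\epsilon)$ for the \emph{entire} run and, as a bonus, makes the total mass of pieces lighter than $\gamma$ at most $O(\gamma/\epsilon)$, so the (small, uncertified) pieces can be swept into the error by running the loop with a slightly smaller accuracy parameter. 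Getting every quantity---number of parts, oracle gates, circuit size, and residual multicalibration error---to land simultaneously at the stated bounds requires carefully coupling the rounding granularity $\lambda$, the grid spacing, and the working $\epsilon,\gamma$; this accounting (as carried out by H\'ebert-Johnson, Kim, Reingold, and Rothblum~\cite{hkrr18}) is the delicate part, whereas the energy‑increment lemma that drives termination is routine.
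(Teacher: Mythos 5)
Your proposal takes a genuinely different route from the paper. The paper proves Theorem~\ref{thm:intro-mcpartition} (in its formal form, Theorem~\ref{thm:mcpartition}, Appendix~\ref{sec:mcpartitions}) essentially by \emph{black-boxing} the existing multicalibration algorithms from the fairness literature: it invokes an MCoA (``multicalibration on average'') predictor with accuracy $\epsilon' = \epsilon\gamma/2$, $\lambda$-discretizes the predictor's values to cap the number of level sets at $O(1/\epsilon)$ (Claim~\ref{claim:numlevelsets}), and then observes that the $\epsilon \leftarrow \epsilon\gamma$ reparametrization converts MCoA into $(\F,\epsilon,\gamma)$-approximate multicalibration (Claim~\ref{claim:mcoatoapproxmc}). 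The partition is then the level sets of that predictor, with the complexity parameters lifted directly from \cite{hkrr18}. Your approach instead re-derives the algorithm from scratch via an energy-increment boosting argument. This is the right underlying idea --- it is essentially how those imported theorems are themselves proved --- and your potential $\Phi(\Pa)=\E_{\D}[v_{P(x)}^2]$, the identity $\Phi(\Pa_{t+1})-\Phi(\Pa_t)=\Pr[P]\cdot\mathrm{Var}_{\D|_P}[v_{P'(x)}]$, and the Cauchy--Schwarz lower bound on the variance increment are all correct.

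There is, however, a real gap between what your argument as written delivers and what the theorem claims. Refining \emph{one} offending piece of mass $\ge\gamma$ per round gives a potential increment of $\Omega(\gamma\epsilon^2)$, hence $T=O(1/(\gamma\epsilon^2))$ oracle calls, whereas the statement asserts only $O(1/\epsilon^2)$ oracle gates, with $1/\gamma$ appearing only in the circuit \emph{size} $t$. Your proposed fix --- re-rounding conditional means to an $\epsilon$-grid to cap the number of parts at $O(1/\epsilon)$ --- addresses the part count but does not address the iteration count, which is what determines the number of oracle gates; and the merging it entails can \emph{decrease} $\Phi$ (by up to $O(\epsilon^2)$ per merge), so its compatibility with the termination argument is not automatic. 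The $\gamma$-free oracle bound is precisely what the MCoA formulation buys: if one drives the iteration by an ``on-average'' violation $\E_{P\sim\Pa(\D)}\big|\E_{\D|_P}[f\cdot(g-v_P)]\big|>\epsilon'$, each round can update many pieces at once and the potential increment becomes $\Omega((\epsilon')^2)$ independent of $\gamma$; the $1/\gamma$ then enters only when $\epsilon'$ is instantiated as $\Theta(\epsilon\gamma)$ to pass from the on-average guarantee to the per-large-piece guarantee, and it lands in $t$ rather than $q$. You flag this accounting as ``the delicate part,'' which is fair, but to match the theorem's parameters the on-average criterion has to be used during the boosting loop, not just as a post-hoc reparametrization; without that change your argument proves a weaker bound than the statement.
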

Note that the pieces $P$ of probability mass smaller than $\gamma$ (for which (\ref{eq:intro-mcp}) doesn't apply), take up at most a $\gamma\cdot k = O(\gamma/\epsilon)$ fraction of $\D$, which can be made arbitrarily small by taking $\gamma \ll \epsilon$. 

On each piece $P$ of probability mass at least $\gamma$, 
(\ref{eq:intro-mcp}) says that $g$ is indistinguishable from the constant function $v_P$, i.e. the function $h_P : P\rightarrow [0,1]$ where for all $x\in P$, $h_P(x)=v_P$.
Viewing $h_P$ as representing a randomized function $h_P^\rand$, we have $h^\rand_P(x)\sim \Bern(v_P)$ for all $x\in P$, where $\Bern(v)$ is the Bernoulli distribution with expectation $v$.   The key point is that the Bernoulli parameter $v$ is the same value (namely $v_P$) for all $x\in P$.
Thus we informally refer to $h_P$ and $h_P^\rand$ as a {\em constant-Bernoulli function}.  As we will see in the proofs of our results, indistinguishability from a constant-Bernoulli function is a very powerful condition, and this is why we are able to get so much mileage out of the MC Theorem (Theorem~\ref{thm:intro-mcpartition}).

\subsection{Our contributions}

Although the Multicalibration Theorem (Theorem~\ref{thm:intro-mcpartition}) was introduced for the purpose of algorithmic fairness, we now see that it can be viewed as a strengthening of the complexity-theoretic Regularity Lemma (Theorem~\ref{thm:intro-regularity}). 
Thus, in our work, we examine the complexity-theoretic implications of the Multicalibration Theorem. 
Doing so, we obtain stronger and more general versions of (1) Impagliazzo's Hardcore Lemma (IHCL), 
(2) Characterizations of pseudo-average-min-entropy (PAME), and (3) the Dense Model Theorem (DMT).  
In concurrent work to ours, Dwork, Lee, Lin, and Tankala~\cite{dllt23} explore an intermediate notion of graph regularity that corresponds to multicalibration and lies between Frieze-Kannan weak regularity and Szemer\'edi Regularity, and show that Szemer\'edi Regularity corresponds to a stronger notion called {\em strict} multicalibration. 
\begin{figure}[h!]
\centering
    \includegraphics[width=9.5cm]{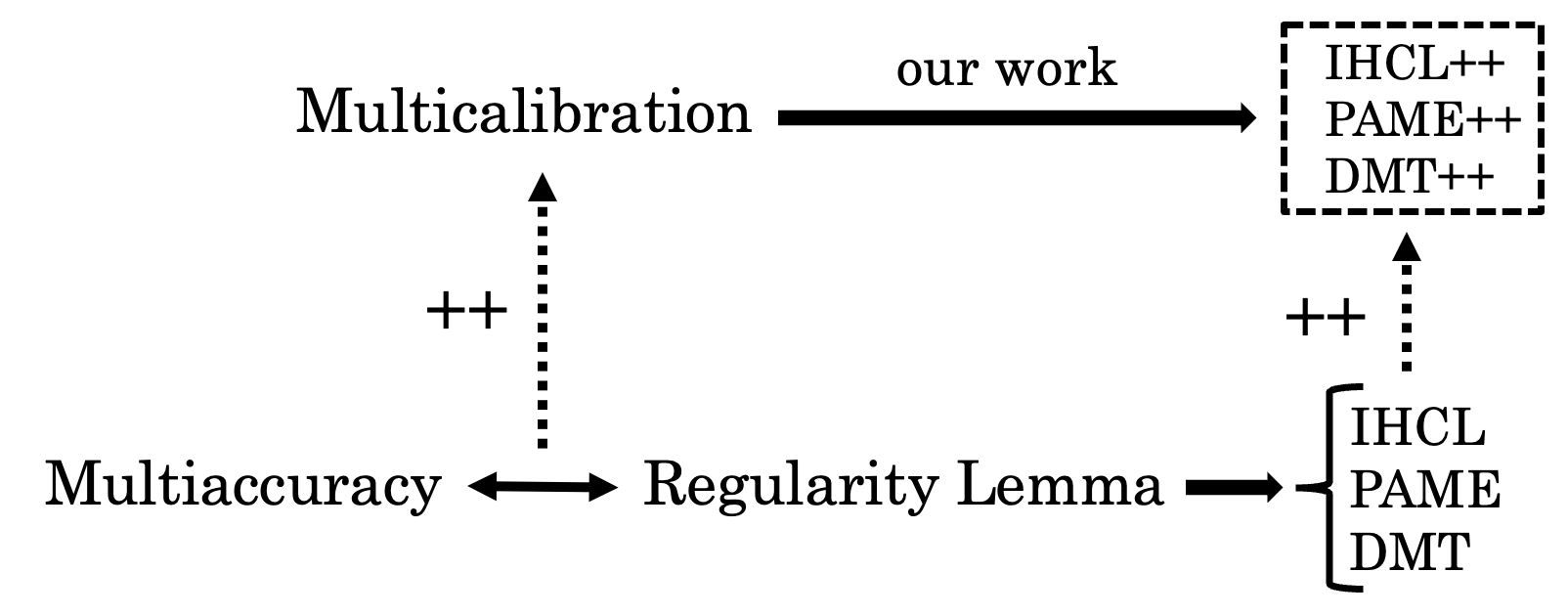}
    \label{fig:Qdiagram}
\end{figure}

We elaborate on our results below, denoting the strengthened theorems as IHCL++, PAME++, and DMT++.
For simplicity, here we will state some of the results for the special case where the initial distribution $\D$ on inputs is the uniform distribution on $\X$ and the functions $g$ and $f\in \F$ are deterministic boolean functions; generalizations to arbitrary distributions and fractional/randomized functions can be found in the later technical sections.

\medskip
 \noindent
 \textbf{Impagliazzo's Hardcore Lemma (IHCL) and IHCL++.} 
    Impagliazzo's Hardcore Lemma (IHCL) \cite{imp95} is a fundamental result in complexity theory stating that if a function $g$ is somewhat hard to compute on average by a family $\F$ of boolean functions,
    then there is a large-enough subset $H$ of the inputs (called the ``hardcore set'') for which the function is very hard to compute, in the sense that $g$ is indistinguishable from a random function by a family $\F '$ of distinguishers of complexity similar to that of $\F$.  A stronger, and optimal, version was obtained by Holenstein~\cite{hol05}:
    \begin{theorem}[IHCL~\cite{imp95, hol05}, informally stated] \label{thm:intro-IHCL}
    Let $\F$ be a family of boolean functions on $\X$, $\epsilon>0$, and
    $g: \X \rightarrow \{0, 1\}$ a function that is {\em $(\F', \delta)$-hard}, meaning that $\Pr[f(x)\neq g(x)]\geq \delta$ for all functions $f$ that have ``low complexity'' relative to $\F$.\footnote{Specifically, take $\F'$ to consist of all functions computable by a boolean circuit of size $\poly(1/\epsilon,1/\delta,\log |\X|)$ with oracle gates instantiated by functions from $\F$.}
    Then there exists a set $H \subseteq \X$ of size at least $2 \delta |\X|$ such that $g$ is $(\F, 1/2-\epsilon)$-hard on $H$.
\end{theorem}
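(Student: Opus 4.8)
The plan is to derive Theorem~\ref{thm:intro-IHCL} from the Multicalibration Theorem (Theorem~\ref{thm:intro-mcpartition}): the multicalibrated partition of $\X$ yields an essentially explicit \emph{hardcore measure} whose density is controlled by how balanced $g$ is on each piece, and one then passes from a measure to a set by the standard reduction. (This parallels the derivation of IHCL from the Regularity Lemma in \cite{ttv09}, but the partition makes the hardcore measure especially transparent.)

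First I would apply Theorem~\ref{thm:intro-mcpartition} to $g$ with $\D$ uniform on $\X$, error parameter $\epsilon_{\mathrm{MC}}=\Theta(\delta\epsilon)$, and $\gamma\ll\epsilon_{\mathrm{MC}}/k$, obtaining a low-complexity partition $\Pa=\{P_1,\dots,P_k\}$ with $k=O(1/\epsilon_{\mathrm{MC}})$ and values $v_P=\E_{x\sim\D|_P}[g(x)]$. The first step is a lower bound on the total balancedness $\sum_P|P|\cdot\min(v_P,1-v_P)$. Let $h^*$ be the majority-vote predictor, $h^*(x)=1$ iff $v_{P(x)}\ge 1/2$, where $P(x)$ is the piece containing $x$. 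Since the partition circuit has size $\poly(1/\epsilon_{\mathrm{MC}},1/\gamma,\log|\X|)=\poly(1/\epsilon,1/\delta,\log|\X|)$, the function $h^*$ lies in $\F'$, so $(\F',\delta)$-hardness gives $\Pr_{x\sim\D}[h^*(x)\ne g(x)]\ge\delta$; but this probability equals $\E_{x\sim\D}[\min(v_{P(x)},1-v_{P(x)})]$, whence $\sum_P|P|\cdot\min(v_P,1-v_P)\ge\delta|\X|$. (This uses only Condition~2 of Theorem~\ref{thm:intro-mcpartition}, not multicalibration, and holds for all pieces irrespective of mass.)

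Next I would define $\mu\colon\X\to[0,1]$ piecewise: on a piece $P$ with $v_P\le 1/2$ set $\mu(x)=1$ if $g(x)=1$ and $\mu(x)=v_P/(1-v_P)$ if $g(x)=0$ (symmetrically if $v_P>1/2$), so that $\E_{x\sim\D|_P}[\mu(x)]=2\min(v_P,1-v_P)$ and hence $\E_{x\sim\D}[\mu(x)]\ge 2\delta$. The core claim is that $\mu$ is hardcore for $\F$: for every $f\in\F$, $\Pr_{x\sim\mu}[f(x)=g(x)]\le 1/2+\epsilon$, where $x\sim\mu$ denotes drawing $x$ with probability proportional to $\D(x)\mu(x)$. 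Writing $\Delta_P=\E[f\mid g=1,x\in P]-\E[f\mid g=0,x\in P]$, a direct computation shows that (i)~the multicalibration inequality~(\ref{eq:intro-mcp}) on $P$ is exactly $v_P(1-v_P)\,|\Delta_P|\le\epsilon_{\mathrm{MC}}$, hence $\min(v_P,1-v_P)\,|\Delta_P|\le 2\epsilon_{\mathrm{MC}}$; and (ii)~the $\mu$-mass of $P$ is $2\min(v_P,1-v_P)|P|$ while the $\mu$-conditional probability of $\{f=g\}$ inside $P$ equals $\tfrac{1}{2}(1+\Delta_P)$. Averaging over pieces with weights proportional to $\mu$-mass, the overall bias is at most
\[
\frac{\sum_P\min(v_P,1-v_P)|P|\cdot|\Delta_P|}{\sum_P 2\min(v_P,1-v_P)|P|}\;\le\;\frac{2\epsilon_{\mathrm{MC}}|\X|+\gamma k|\X|}{2\delta|\X|}\;\le\;\epsilon,
\]
where the $\gamma k|\X|$ term absorbs the pieces of mass below $\gamma$ (on which~(\ref{eq:intro-mcp}) need not hold but $|\Delta_P|\le 1$ anyway, and whose total mass is at most $\gamma k$), and the last inequality uses $\epsilon_{\mathrm{MC}}=\Theta(\delta\epsilon)$ and $\gamma=\Theta(\delta\epsilon/k)$.

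Finally I would invoke the standard measure-to-set reduction (Holenstein~\cite{hol05}) to convert the density-$2\delta$ hardcore measure $\mu$ into a set $H$ with $|H|\ge 2\delta|\X|$ on which $g$ is $(\F,1/2-\epsilon)$-hard; applying the threshold reduction within each piece separately, and discarding pieces too small or too unbalanced for~(\ref{eq:intro-mcp}) to be meaningful, yields the disjoint hardcore sets $H_P\subseteq P$ of size $\approx 2\min(v_P,1-v_P)|P|$ promised by IHCL++. I expect this last step to be the main obstacle: extracting a single good set with the optimal density $2\delta$ (rather than $\delta$) from a hardcore measure is exactly the delicate point separating Holenstein's sharp IHCL from Impagliazzo's original, and one must verify that the quantifier swap ``for every $f$ the threshold-averaged bias is small'' $\Rightarrow$ ``some threshold set has small bias for every $f$'' goes through (by a minimax argument, or a union bound over a finite $\F$). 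By contrast the bookkeeping above---tuning $\epsilon_{\mathrm{MC}}$ and $\gamma$ so the low-mass and near-constant pieces are negligible, which is also what forces $\F'$ to have complexity polynomial in $1/\delta$ and $1/\epsilon$---is routine.
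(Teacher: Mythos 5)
Your proposal is correct and follows essentially the same route as the paper: apply the MC Theorem, lower-bound $\sum_P |P|\, b_P$ via the majority-vote circuit built from the partition-membership function (this is the paper's Proposition~\ref{prop:kpdelta}), construct per-piece hardcore mass proportional to $\min(v_P,1-v_P)$ (your $\mu$ is, up to normalization, the same piecewise distribution the paper extracts in Lemma~\ref{thm:tfae}(\ref{item:dist2b}) and then glues together in Section~\ref{sec:ihcl++toihcl}), and finish with the standard measure-to-set conversion. The one presentational difference is that you avoid the paper's auxiliary balance threshold $\tau$: by weighting the per-piece biases by $b_P|P|$, the uniform bound $b_P|\Delta_P|\le 2\epsilon_{\mathrm{MC}}$ makes the contribution of low-balance pieces negligible automatically, whereas the paper discards pieces with $b_P<\tau$ and tracks the loss separately. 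This is a mild streamlining, not a different argument, and both routes rely on identically the same three ingredients.

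One small point worth flagging: your inline claim that the MC inequality ``is exactly $v_P(1-v_P)|\Delta_P|\le\epsilon_{\mathrm{MC}}$'' uses that $g$ is Boolean (which the theorem assumes), and the step $\min(v_P,1-v_P)|\Delta_P|\le 2\epsilon_{\mathrm{MC}}$ uses $v_P(1-v_P)\ge b_P/2$; both are fine but should be stated rather than implicit. Your worry about the final quantifier swap in the set-extraction step is the right one to have, and the resolution is exactly as you guessed: the standard argument (Impagliazzo / Klivans--Servedio, reproduced in the paper's Appendix~\ref{sec:setihcl}) samples $x$ into $H$ with probability $\mu(x)$ and union-bounds over $\F$, which is why the set version of the theorem secretly carries a size restriction $\log|\F|\lesssim |\X|\epsilon^2\delta^2$ that the informally-stated Theorem~\ref{thm:intro-IHCL} elides.
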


   In \cite{ttv09}, it was shown that the Regularity Lemma implies IHCL, but with a hardcore density of $\delta$ (as in \cite{imp95}), instead of the optimal $2\delta$ from \cite{hol05}.  ($2\delta$ is optimal because erring with probability $1/2$ on a set of density $2\delta$ yields a global error probability of $\delta$.)
   
   Using the Multicalibration Theorem, we prove:
    \begin{theorem}[IHCL$++$, informal version] \label{thm:intro-ihcl++}
     Let $g: \X \rightarrow \{0,1\}$ be an arbitrary function, $\F$ a family of boolean functions, and $\epsilon>0$. There exists partition $\Pa$ of $\X$ that has ``low complexity'' relative to $\F$ such that for every (large enough) $P \in \Pa$, there is a set $H_P \subseteq P$ of size at least $2 b_P |P|$ such that $g$ is $(\F, 1/2-\epsilon_P)$-hard on $H_P$, where $b_P = \min\{\E_{x \sim P}[g(x)], 1 - \E_{x \sim P}[g(x)]\}$ and $\epsilon_P = \epsilon/b_P$.
\end{theorem}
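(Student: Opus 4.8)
The plan is to invoke the Multicalibration Theorem (Theorem~\ref{thm:intro-mcpartition}) on $g$, $\F$, and the uniform distribution $\D$ on $\X$, with indistinguishability parameter set to $\epsilon$ and $\gamma$ a free small parameter, obtaining a partition $\Pa$ of the stated low complexity; the hardcore sets are then extracted one piece at a time. The key observation is that on any piece $P$ of mass at least $\gamma$, condition~\eqref{eq:intro-mcp} says precisely that $g$ restricted to $P$ is $(\F,\epsilon)$-indistinguishable from the constant $v_P = \E_{x\sim \D|_P}[g(x)]$; informally $g$ ``looks like'' a $v_P$-biased coin on $P$, and this should be enough to carve out a balanced hardcore piece of relative density $2b_P$. (This $2b_P$ is the Holenstein-optimal density: erring with probability $1/2$ on a $2b_P$ fraction of $P$ produces global error only $b_P$ on $P$.)

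Concretely, fix such a $P$ and assume without loss of generality (replacing $g$ by $1-g$) that $v_P \le 1/2$, so $b_P = v_P$. Let $P_1 = \{x\in P: g(x)=1\}$, $P_0 = P\setminus P_1$, and define the hardcore measure $\mu_P : P\to [0,1]$ by $\mu_P(x) = 1$ on $P_1$ and $\mu_P(x) = v_P/(1-v_P)$ on $P_0$. A direct computation gives $\E_{x\sim\D|_P}[\mu_P(x)] = 2v_P = 2b_P$ together with the $g$-balance identity $\E_{\D|_P}[\mu_P(x)\,g(x)] = \E_{\D|_P}[\mu_P(x)(1-g(x))]$. Using the identity $\mathbbm{1}[f(x)=g(x)] = \tfrac12\big(1+(2f(x)-1)(2g(x)-1)\big)$, valid for boolean $f,g$, the agreement of any $f\in\F$ with $g$ under $\D|_P$ reweighted by $\mu_P$ is
\[
\Pr_{x\sim \mu_P\cdot\D|_P}[f(x)=g(x)] \;=\; \frac12 \;+\; \frac{\E_{\D|_P}\big[\mu_P(x)(2f(x)-1)(2g(x)-1)\big]}{2\,\E_{\D|_P}[\mu_P(x)]},
\]
and the $g$-balance of $\mu_P$ makes the $f$-free part of the numerator vanish, leaving it equal to $\tfrac{2}{1-v_P}\,\E_{\D|_P}[f(x)(g(x)-v_P)]$, whose absolute value is at most $\tfrac{2\epsilon}{1-v_P}$ by~\eqref{eq:intro-mcp}. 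Dividing by $2\,\E_{\D|_P}[\mu_P] = 4v_P$ and using $1-v_P\ge\tfrac12$ yields $\Pr[f=g]\le \tfrac12 + \epsilon/v_P = \tfrac12 + \epsilon_P$ for every $f\in\F$; that is, $g$ is $(\F,\tfrac12-\epsilon_P)$-hard with respect to $\mu_P$ on $P$. (When $b_P\le 2\epsilon$ the conclusion is vacuous, since then $\epsilon_P\ge\tfrac12$, which is why ``large enough'' need only mean mass at least $\gamma$ — and, for a non-trivial statement, $b_P$ bounded below.)

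It remains to replace the fractional measure $\mu_P$ by an honest set $H_P\subseteq P$ with $|H_P|\ge 2b_P|P|$: take $H_P$ to be all of $P_1$ together with a uniformly random $v_P/(1-v_P)$-fraction of $P_0$, and apply a Chernoff bound with a union bound over $\F$ (or, to avoid any loss in density and hardness, an LP-duality/minimax argument), exactly as in the standard passage from the measure version of Impagliazzo's Hardcore Lemma to its set version. The low complexity of $\Pa$ is inherited verbatim from Theorem~\ref{thm:intro-mcpartition}. I expect the only genuinely delicate point to be this measure-to-set rounding — making sure that the union bound over the (possibly large) family $\F$ does not erode the density $2b_P$ or the hardness $\tfrac12-\epsilon_P$ below the claimed thresholds; the core reduction from indistinguishability-from-a-constant to a hardcore measure is the short calculation above, and it is essentially forced by the choice of $\mu_P$.
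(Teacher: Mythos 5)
Your proposal is correct and follows essentially the same route as the paper: apply the Multicalibration Theorem to get a low-complexity partition, then on each large piece $P$ exploit indistinguishability of $g$ from the constant $v_P$ to build a hardcore measure by up-weighting the minority class (your $\mu_P$ is exactly the paper's density-$2b_P$ distribution $\Ha_P$ from Lemma~\ref{thm:tfae}(4), up to normalization), and finally pass from measure to set by the standard Chernoff/union-bound argument. The only cosmetic difference is that you verify the hardness by a direct Yao-identity calculation, whereas the paper factors through its Lemma~\ref{thm:tfae}(3) (indistinguishability of $\D|_{g=0}$ and $\D|_{g=1}$), and you correctly flag the one point the informal statement hides — the measure-to-set step needs an upper bound on $|\F|$, which the paper makes explicit in its set version (Theorem~\ref{thm:ihcl++set}).
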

That is, instead of finding a single, globally dense hardcore set $H$, we find many ``local'' hardcore sets $H_P$, each of which is dense within its piece $P$ of the partition.
We illustrate the key differences between IHCL and IHCL$++$ in Figure~\ref{fig:ihcl}.

\begin{figure}[h!]
\centering
    \includegraphics[width=15cm]{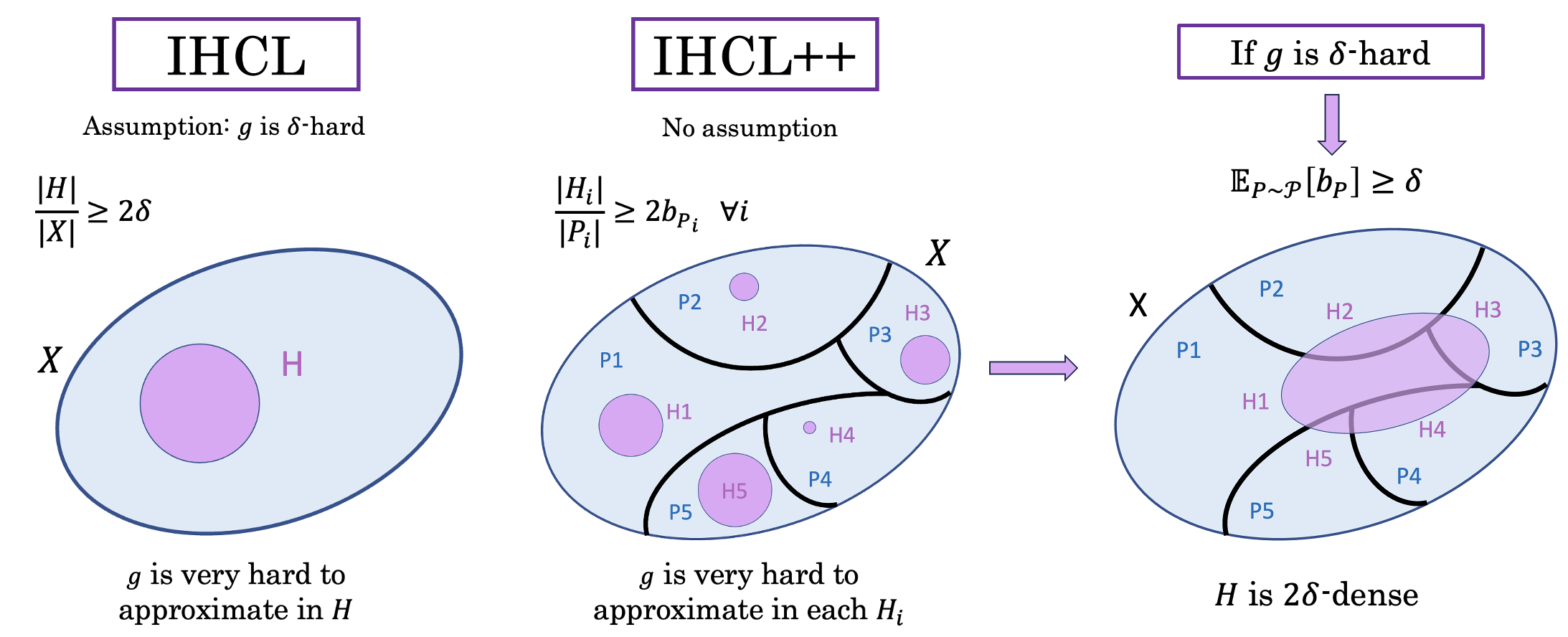}
    \caption{Illustration of the difference between IHCL and IHCL$++$, and how to recover IHCL from our ICHL$++$.}\label{fig:ihcl}
\end{figure}

Here (and in our other results), the ``low complexity'' of the partition $\Pa$ is formulated in the same way as in the MC Theorem (Theorem~\ref{thm:intro-mcpartition}).  Naturally we prove Theorem~\ref{thm:intro-ihcl++} by applying the MC Theorem to the function $g$; see Section~\ref{sec:prooftechniques} for more on our proof techniques.

The {\em balance parameter} $b_P$ provides the moral equivalent of 
the hardness parameter $\delta$ in IHCL (since we make no hardness assumptions).
We further show that our IHCL++ implies the original IHCL theorem with optimal density parameter $2\delta$.
To do so, we observe that when we bring back the assumption that $g$ is $(\F',\delta)$-hard, then $\E_{P}[b_P] \geq \delta$, where the expectation is taken over sampling piece $P$ with probability $|P|/|\X|$.  
That is, if $g$ is $\delta$-hard, then $g$ is not too imbalanced on average over the pieces of the partition.
Otherwise, we would be able to predict $g(x)$ well on average by determining which piece $P\in\Pa$ contains $x$ (because $\Pa$ has low complexity relative to $\F$) and then guessing the majority value on the piece $P$ (which we can hardwire into our Boolean circuit for each of the $k=O(1/\epsilon)$ pieces). 
We are then able to ``glue'' together the hardcore sets $H_P$ for the pieces $P \in \Pa$, yielding a hardcore set $H \subseteq \X$ that occupies at least a $2 \delta$ fraction of the domain $\X$.  (Actually we get a hardcore {\em distribution} of density at least $2\delta$, but this can be converted to a hardcore {\em set} by a standard probabilistic argument~\cite{imp95}.)

A partition-based variant of the IHCL was previously formulated and proved in the work of Reingold, Trevisan, Tulsiani, and Vadhan~\cite{rttv08}.  Their result is similar in spirit to IHCL++, but with two important differences.
First, their partition has complexity {\em exponential} in $1/\epsilon$, in contrast to the polynomial complexity we obtain through the MC Theorem.  The exponential complexity severely limits the complexity-theoretic applicability of their result.   
Second, they maintain the original assumption of IHCL that the function $g$ is weakly hard on average, whereas in our IHCL$++$ we remove it and find local hardcore sets $H_P$ for an arbitrary function $g$.  However, their proof can be modified to also remove the hardness assumption and yield a similar conclusion to ours.  Indeed, their proof, which proceeds by iteratively partitioning the domain, can be viewed in retrospect as constructing an MC partition with exponential complexity. At the time, exponential complexity seemed inherent in such iterative partitioning proofs~\cite{rttv08,ttv09}. In this light, the power of the MC Theorem and successors~\cite{hkrr18,gkrsw21,grsw22}
is that they give us the same kind of indistinguishability as provided by iterative partitioning but with polynomial complexity.  Intuitively, the savings in complexity comes from using merging steps in addition to partitioning ones to avoid making too many, too small pieces.

\medskip
\noindent
\textbf{Characterizations of pseudo-average min-entropy (PAME).} 
    A result of Vadhan and Zheng shows that we can characterize pseudoentropy, which is a computational analogue of Shannon entropy, in terms of hardness of sampling \cite{vz12}.\footnote{Vadhan and Zheng prove their results for both uniform and non-uniform families of distinguishers.  We focus on the non-uniform case.}
    In \cite{zhe14}, they also provide a characterization for the related notion of {\em pseudo-average min-entopy (PAME)}, which is the computational analogue of average min-entropy \cite{dors04}.\footnote{For a joint distribution $(X,C)$, the {\em average min-entropy} of $C$ given $X$ is defined as  $\tilde{\Hb}_{\infty}(C|X) = \log \left( 1/\E_{x \sim X}\big[2^{-\Hb_{\infty}(C|_{X=x})}\big]\right),$ where all logarithms in the paper are taken base 2 unless otherwise specified.} 
 
Informally, for a joint distribution $(X, B)$ and a class of distinguishers $\F$, $B$ has {\em $(\F, \epsilon)$-PAME at least $k$ given $X$} if there exists a random variable $C$ jointly distributed with $X$ such that 1) the distributions $(X, B)$ and $(X, C)$ are $(\F, \epsilon)$-indistinguishable, and 2) $C|X$ has average min-entropy at least $k$.
The PAME Theorem of Vadhan and Zheng shows that the PAME of $B$ given $X$ is precisely characterized by the hardness of predicting $B$ given $X$:

\begin{theorem}[PAME~\cite{vz12,zhe14}, informally stated] \label{thm:intro-PAME}
    Let $\F$ be a family of boolean functions on $\X=\zo^n\times \zo^\ell$, $\epsilon>0$, and let $(X, B)$ be a joint distribution over $\{0, 1\}^n \times \{0,1\}^{\ell}$, where $\ell = O(\log n)$.
    Suppose that $B$ is $(\F', \delta)$-hard to predict from $X$, meaning that $\Pr[f(X)\neq B]\geq \delta$ for all functions $f : \zo^n\rightarrow \zo$ that have ``low complexity'' relative to $\F$. Then $B$ has $(\F, \epsilon)$-PAME at least $\log(1/(1-\delta))$ given $X$.
\end{theorem}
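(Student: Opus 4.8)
The plan is to deduce the PAME Theorem from the Multicalibration Theorem (Theorem~\ref{thm:intro-mcpartition}) by constructing the witnessing high-min-entropy variable $C$ explicitly: I will take \emph{$C$ given $X=x$ to be $B$ conditioned on $X$ lying in the cell of $x$ in a multicalibrated partition of the input space $\zo^n$}. Concretely, write $g_b(x):=\Pr[B=b\mid X=x]$ for each $b\in\zo^\ell$; since $\ell=O(\log n)$ there are only $L:=2^\ell=\poly(n)$ such functions. For $\ell=1$ one applies Theorem~\ref{thm:intro-mcpartition} directly to $g_1$; in general I would apply a vector-valued multicalibration result to the family $\{g_b\}_{b\in\zo^\ell}$, run with the test class $\{x\mapsto f(x,b_0): f\in\F,\ b_0\in\zo^\ell\}$ (each such test has the complexity of a single $\F$-oracle call) and accuracy parameter $\epsilon_{\mathrm{MC}}=\epsilon/\poly(n)$. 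Either way I obtain a partition $\Pa$ of $\zo^n$ computed by a $\poly(n,1/\epsilon)$-size circuit with $\poly(n,1/\epsilon)$ oracle calls to $\F$, such that on every cell $P$ of non-negligible mass, every $f\in\F$, and every $b\in\zo^\ell$, $\big|\E[f(X,b)\,(g_b(X)-v_{P,b})\mid X\in P]\big|\le\epsilon_{\mathrm{MC}}$, where $v_{P,b}:=\Pr[B=b\mid X\in P]=\E[g_b(X)\mid X\in P]$.

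Define $(X,C)$ by sampling $x\sim X$, letting $P=P(x)$ be its cell, and outputting $(x,c)$ with $c$ drawn from the probability vector $(v_{P,b})_{b\in\zo^\ell}$; no renormalization is needed since this is exactly a conditional distribution of $B$. To verify that $(X,B)$ and $(X,C)$ are $(\F,\epsilon)$-indistinguishable, decompose an arbitrary test as $f(x,b)=\sum_{b_0\in\zo^\ell}f(x,b_0)\,\mathbbm 1[b=b_0]$, so that
\[
\E[f(X,B)]-\E[f(X,C)]=\sum_{b_0\in\zo^\ell}\E_{x\sim X}\big[f(x,b_0)\,(g_{b_0}(x)-v_{P(x),b_0})\big].
\]
On the non-negligible cells each summand is bounded by $\epsilon_{\mathrm{MC}}$ by multicalibration (with test $x\mapsto f(x,b_0)$), and the negligible cells contribute at most $L$ times their total mass; choosing the multicalibration parameters so that both terms are $\le\epsilon/2$ finishes this part.

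The min-entropy guarantee is the clean heart of the argument. By construction $2^{-\Hb_\infty(C|_{X=x})}=\max_b v_{P(x),b}$, and hence
\[
\E_{x\sim X}\big[2^{-\Hb_\infty(C|_{X=x})}\big]=\sum_{P\in\Pa}\Pr[X\in P]\cdot\max_b v_{P,b}=\Pr\big[f^{\mathrm{maj}}(X)=B\big],
\]
where $f^{\mathrm{maj}}(x):=b^*_{P(x)}$ and $b^*_P$ is a value of $B$ attaining $\max_b v_{P,b}$ (the ``cell-majority'' predictor). Since $f^{\mathrm{maj}}$ just runs the partition circuit and then does a table lookup over the $\poly(n,1/\epsilon)$ cells, it has low complexity relative to $\F$, so the $(\F',\delta)$-hardness hypothesis gives $\Pr[f^{\mathrm{maj}}(X)=B]\le 1-\delta$. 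Therefore $\tilde{\Hb}_\infty(C\mid X)=\log\!\big(1/\E_{x\sim X}[2^{-\Hb_\infty(C|_{X=x})}]\big)\ge\log(1/(1-\delta))$ with no loss, and combined with the previous paragraph this shows $B$ has $(\F,\epsilon)$-PAME at least $\log(1/(1-\delta))$ given $X$. (Viewed per-cell, this is exactly the PAME$++$ statement one would state and then ``glue.'')

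The one real obstacle is the general-$\ell$ step: producing a \emph{single} partition $\Pa$ of $\zo^n$ that is simultaneously multicalibrated for all $L=\poly(n)$ functions $g_b$, while keeping both (i) the partition circuit polynomial-size with few $\F$-oracle calls \emph{and} (ii) the total probability mass of cells too small for the multicalibration guarantee negligible. The naive route---take the common refinement of the $L$ single-function multicalibrated partitions---preserves (i) but produces up to $(1/\epsilon_{\mathrm{MC}})^{\Theta(L)}$ cells whose small ones can cover almost all of $X$, destroying (ii) and hence the indistinguishability bound. So one must instead use the genuinely vector-valued multicalibration procedure, which maintains a single partition and repairs all $L$ prediction coordinates together; equivalently, one can work with the $[0,1]$-valued predictors $h_b\approx g_b$ from a vector-valued Regularity Lemma (Theorem~\ref{thm:intro-regularity}), at the cost of a renormalization step whose error one must then control against $\F$. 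The remaining $L=\poly(n)$-factor losses, and converting the multicalibration $\gamma$-slack into a negligible additive error, are routine in the complexity-theoretic regime.
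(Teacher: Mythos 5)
Your construction is in the same spirit as the paper's (Theorem~\ref{thm:pame++} plus the recovery argument in Section~\ref{sec:pamefrompame++}): apply a multiclass/vector-valued multicalibration theorem, define $C$ cell-wise as the conditional distribution of $B$ on the cell, verify indistinguishability from the MC guarantee, and convert the min-entropy bound into a hardness claim about the cell-majority predictor. Your min-entropy computation is actually a touch cleaner than the paper's: the identity $\E_{x\sim X}\bigl[2^{-\Hb_\infty(C|_{X=x})}\bigr] = \Pr[f^{\mathrm{maj}}(X)=B]$ is exact, giving $\tilde\Hb_\infty(C\mid X)\ge\log(1/(1-\delta))$ with no slack, whereas the paper glues only ``good'' cells and accumulates a $(1-O(\epsilon))$ factor it must scrub out afterward by perturbing the distribution.

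The problem is the obstacle you name in your last paragraph: it is not something you can discharge by invoking ``the genuinely vector-valued multicalibration procedure.'' That procedure is the paper's Theorem~\ref{thm:mcmc} (from Gopalan et al.), and its complexity parameters are \emph{exponential} in the number of classes: $t = O\bigl(L\,(1/(\epsilon^4\gamma))^{L}\log(|\X|/\epsilon)\bigr)$, $q = O\bigl((L/\epsilon^2)^{O(L)}\bigr)$, $k = O\bigl(L\,(1/\epsilon^4)^{L-1}\bigr)$. With $\ell=O(\log n)$ you have $L=2^\ell=\poly(n)$, so these blow up superpolynomially. This destroys your argument at two places simultaneously. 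First, the number of cells $k$ is exponential, so the total mass of small cells is bounded only by $k\gamma$ and cannot be driven below $\epsilon$ with a poly-size partition, so the indistinguishability bound fails. Second, $f^{\mathrm{maj}}$ runs the partition circuit, which now has superpolynomial size, so it lies outside any class $\F'$ of polynomial relative complexity and the hardness hypothesis does not apply to it. Consequently the multicalibration route only recovers the PAME theorem for $\ell=O(\log\log n)$, not the stated $\ell=O(\log n)$; the paper says this explicitly (Section~\ref{sec:pamefrompame++}) and lists a poly($L$)-complexity multiclass MC theorem as an open problem in Section~\ref{sec:conclusions}. Theorem~\ref{thm:intro-PAME} with the stated parameters is due to Vadhan and Zheng and is proved there by a min-max/uniform-covering argument, not via multicalibration, which is how they avoid the exponential-in-$L$ bottleneck. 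As written, your proof therefore has a genuine gap for the stated parameter range; it does correctly establish the $\ell=O(\log\log n)$ variant.
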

The reason for the setting $\ell=O(\log n)$ in Theorem~\ref{thm:intro-PAME} is that the ``low complexity'' parameter for $\F'$ includes a factor of $2^\ell$.

Like the MC Theorem and IHCL++, our PAME++ will involve partitions. 
For $P \subseteq \X$, let $(X_P, B_P )$ denote $(X, B)$ conditioned on $X \in P$, and denote the min-entropy of a random variable $B$ by $\Hb_{\infty}(B)=\min_b \log(1/\Pr[B=b])$, where all logarithms in our paper are base 2 unless otherwise specified.

\begin{theorem}[PAME$++$, informally stated]
    Let $(X, B)$ be a joint distribution over $\{0,1\}^n \times \{0,1\}^{\ell}$, where $B=g(X)$ and $\ell = O(\log \log n)$, $\epsilon>0$, and let $\F$ be a family of boolean functions
    on $\zo^n\times \zo^\ell$.
    There exists a partition $\Pa$ of $\X = \{0,1\}^n$ that has ``low complexity'' relative to $\F$ such that $B_P$ has $(\F, \epsilon)$-PAME at least $\Hb_{\infty}(B_P)$ given $X_P$ for every (large enough) $P \in \Pa$.
\end{theorem}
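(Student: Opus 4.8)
The plan is to instantiate the Multicalibration Theorem (Theorem~\ref{thm:intro-mcpartition}) on the Boolean projections of $g$ and, on each resulting piece, reduce the PAME guarantee to indistinguishability from a product distribution. Recall that to certify that $B_P$ has $(\F,\epsilon)$-PAME at least $\Hb_\infty(B_P)$ given $X_P$ we must exhibit a random variable $C_P$ jointly distributed with $X_P$ that is $(\F,\epsilon)$-indistinguishable from $B_P$ given $X_P$ and has $\tilde{\Hb}_\infty(C_P|X_P)\ge \Hb_\infty(B_P)$. I will always take $C_P$ to be \emph{independent} of $X_P$ with the same marginal as $B_P$; equivalently, for every $x\in P$ the conditional distribution $C_P|_{X_P=x}$ equals $v_P:=\big(\Pr[B_P=b]\big)_{b\in\zo^\ell}$. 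Then $\Hb_\infty(C_P|_{X_P=x})=\Hb_\infty(B_P)$ for every $x$, so the average-min-entropy requirement holds with equality --- which is exactly why the bound in the theorem is $\Hb_\infty(B_P)$ and not something larger.

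It remains to arrange that this product $C_P$ is indistinguishable from $B_P$. For any test $f\colon\zo^n\times\zo^\ell\to[0,1]$, writing $g_b(x):=\mathbbm{1}[g(x)=b]$ and $v_P^{(b)}:=\Pr[B_P=b]=\E_{x\sim X_P}[g_b(x)]$, one has (using that exactly one $g_b(x)$ equals $1$)
\[
\E[f(X_P,C_P)]-\E[f(X_P,B_P)]=\sum_{b\in\zo^\ell}\E_{x\sim X_P}\big[(v_P^{(b)}-g_b(x))\,f(x,b)\big].
\]
Hence it suffices to produce a single partition $\Pa$ of $\X=\zo^n$ that is simultaneously $(\F',\epsilon/2^\ell,\gamma)$-approximately multicalibrated for all $2^\ell$ Boolean functions $\{g_b\}_{b\in\zo^\ell}$, where $\F':=\{x\mapsto f(x,b):f\in\F,\ b\in\zo^\ell\}$: then for each piece $P$ of mass at least $\gamma$ the triangle inequality over the $2^\ell$ values of $b$ bounds the displayed difference by $2^\ell\cdot(\epsilon/2^\ell)=\epsilon$, and small pieces are excluded exactly as in Theorem~\ref{thm:intro-mcpartition} (these are the ``large enough'' pieces). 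Since an oracle gate for a function in $\F'$ is an oracle gate for a function in $\F$ with one argument hardwired, ``low complexity relative to $\F'$'' coincides with ``low complexity relative to $\F$''.

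To produce such a $\Pa$, I view $g$ as a $\Delta(\zo^\ell)$-valued function (sending $x$ to the point mass on $g(x)$) and run the partitioning argument underlying Theorem~\ref{thm:intro-mcpartition} coordinate-by-coordinate: apply the MC Theorem for each $g_b$ with respect to $\F'$ and accuracy $\epsilon/2^\ell$, take the common refinement of the resulting calibration partitions, and then carry out the multiaccuracy-boosting (partition-and-merge) phase for all $2^\ell$ coordinates simultaneously inside each cell. This is the step that forces $\ell=O(\log\log n)$: the combined partition has on the order of $(2^\ell/\epsilon)^{O(2^\ell)}$ pieces and uses $2^\ell\cdot O((2^\ell/\epsilon)^2)$ oracle gates, which is $\poly(\log|\X|,1/\epsilon)$ --- and in particular low-complexity in the sense of Theorem~\ref{thm:intro-mcpartition} --- precisely when $2^\ell=O(\log|\X|)=O(n)$.

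The main obstacle is exactly this last step: multicalibration is \emph{not} preserved under arbitrary refinement, so one cannot simply multicalibrate each $g_b$ in isolation and intersect. What survives refinement is the calibration of a coordinate whose predicted value is determined by the cell (which is why refining the calibration partitions first is safe), after which multiaccuracy must be re-established \emph{inside} each cell for all $2^\ell$ coordinates at once via the merging-based boosting argument. Making this precise while simultaneously keeping the piece count and circuit size polynomial, and correctly propagating the mass-$\gamma$ threshold through the common refinement so that the pieces on which the final guarantee is asserted are genuinely the large ones, is the technical heart of the proof --- and is where the $\ell=O(\log\log n)$ restriction is unavoidable. The remaining bookkeeping (verifying the indistinguishability bound is with respect to the stated class $\F$ on $\zo^n\times\zo^\ell$, and that $\ell=O(\log\log n)$ also keeps the slicing blow-up $|\F'|\le 2^\ell|\F|$ harmless) is routine.
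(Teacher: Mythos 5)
Your reduction from PAME$++$ to multicalibration of the Boolean slices matches the paper's proof in its essential structure: you take $C_P$ to be a product with $X_P$ carrying the marginal of $B_P$, observe that $\tilde\Hb_\infty(C_P|X_P)=\Hb_\infty(B_P)$ trivially, express $\E[f(X_P,C_P)]-\E[f(X_P,B_P)]$ as a sum over $b\in\zo^\ell$ of slice-level multicalibration errors, and triangle-inequality the $2^\ell$ terms against accuracy $\epsilon/2^\ell$. This is precisely the paper's argument (Section~\ref{sec:pame}), and the class $\F'$ you form by hardwiring the second coordinate is the same as the paper's.

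Where you diverge is that the paper does \emph{not} attempt to build the simultaneously-multicalibrated partition from scratch. It invokes the Multiclass Multicalibration Theorem of Gopalan, Kalai, Reingold, Sharan, and Wieder (stated in the paper as Theorem~\ref{thm:mcmc}, which is Theorem~9.2 of \cite{gkrsw21}) as a black box, applied to $\F'$, the randomized function $g^\rand$, and accuracy $\epsilon' = \epsilon/2^\ell$. That theorem directly produces an $(\F',\epsilon',\gamma)$-approximately multicalibrated partition for the multiclass setting with the quoted $t,q,k$ parameters (which have the $L^{O(L)}$-type dependence), and the $\ell=O(\log\log n)$ restriction falls out of reading off those parameters. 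Your proposal instead tries to re-derive an equivalent theorem by slicing, refining, and re-running a merge-based boosting phase coordinate-by-coordinate; you correctly identify that multicalibration does not survive arbitrary refinement and that the re-boosting step inside each cell is where the work is, but you leave exactly that step — ``the technical heart of the proof'' in your own words — as an unproven sketch. For the purposes of proving PAME$++$ as stated, this is a genuine gap: the paper sidesteps it by citing an existing result, whereas you assert a construction that is not carried out.

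Two smaller issues of bookkeeping. First, your quantitative threshold is off: you write that low complexity holds ``precisely when $2^\ell=O(\log|\X|)=O(n)$,'' i.e.\ $\ell=O(\log n)$, but the piece count and gate count you yourself state, $(2^\ell/\epsilon)^{O(2^\ell)}$, are polynomial in $n=\log|\X|$ only when $2^\ell$ is roughly $\log n$, which is the $\ell=O(\log\log n)$ restriction in the theorem statement; later you say the restriction is $\ell=O(\log\log n)$, so your parameter discussion is internally inconsistent. Second, the paper's definition of PAME allows an arbitrary joint $(X,C)$, so there is nothing wrong with choosing $C_P$ to be a product and hence getting $\tilde\Hb_\infty(C_P|X_P)=\Hb_\infty(B_P)$ exactly, but it is worth flagging that this is why the conclusion is $\Hb_\infty(B_P)$ rather than something larger — you did note this, and it is the same choice the paper makes.
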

That is, on each (large-enough) piece $P$ of the partition $\Pa$, the PAME of $B|_P$ given $X|_P$ is the same as the min-entropy of $B|_P$ without conditioning.  That is, $(X|_P,B|_P)$ is indistinguishable from having as much average-min-entropy as it would if $X|_P$ and $B|_P$ were independent. 

To prove this theorem, we apply
a \emph{Multiclass} Multicalibration Theorem from the literature \cite{gkrsw21,gksz22,dllt23} to the randomized function $\grand(x) \sim B|_{X=x}$. These theorems incur a doubly-exponential dependence on $\ell$ in the complexity of the partition (recall that $\ell$ is the logarithm of the size of the output set), which is why our PAME++ Theorem restricts to $\ell=O(\log\log n)$ instead of the $\ell=O(\log n)$ in the original PAME Theorem.  

Note that $\Hb_{\infty}(B_P) = \log(1/m_P)$, where $m_P = \max_{y\in \zo^\ell} \Pr[B_P=y]$ provides a multiclass generalization of our previous notion of balance $b_P$. (Specifically, $m_P=1-b_P$ when $\ell=1$.)
As in the case of IHCL$++$, the more balanced $g$ is on a cell $P\in\Pa$, the better the corresponding PAME lower bound. 
Also as in the case of IHCL$++$, we are able to recover the original PAME statement (in the case that $\ell=O(\log\log n)$) from our PAME$++$ theorem.  Specifically, when we bring back the assumption that $g$ is $\delta$-hard to predict, we have that $\E_{P}[m_P] \leq 1-\delta$, where $P$ is sampled according to its mass under $X$. 
Again, we are able to ``glue'' together the distributions $C_P$ of high average min-entropy (which are indistinguishable from $B_P$) for all the pieces $P \in \Pa$ that have enough mass according to the distribution $X$ and on which $g$ is balanced enough.
This yields a conditional distribution $C|X$ over $\{0,1\}^{\ell}$ that has PAME at least $\log(1/(1-\delta))$.

We remark that Vadhan and Zheng~\cite{vz12} also formulate and prove an analogue of the PAME Theorem for distinguishers and predictors $f$ that are given by uniform probabilistic algorithms.  
Our results only apply to the nonuniform case (e.g., Boolean circuits); a uniform treatment of multicalibration is an interesting problem for future work. 

\medskip
\noindent
\textbf{The Dense Model Theorem (DMT).} 
    This result originated from additive number theory~\cite{gt08,tz08} and states that 
    if $R$ is a pseudorandom set, whereby we mean that the uniform distribution on $R$ is indistinguishable from the uniform distribution on the entire domain $\X$ by some family of tests, and 
    $S$ is a dense subset of $R$,  then there exists a truly dense set $M \subseteq \X$ (called the {\em model} for $S$) that is indistinguishable from $S$ by a related family of tests. 
    The DMT was a crucial proof component used in Green and Tao's celebrated result that there exist arbitrarily long arithmetic progressions among the prime numbers~\cite{gt08}.
    
    A more general, complexity-theoretic version of the Dense Model Theorem was given by Reingold, Trevisan, Tulsiani, and Vadhan~\cite{rttv08}.
    Following Impagliazzo~\cite{imp08,imp09},  the idea of a dense subset $S$ of pseudorandom set $R$ can be generalized to that of a {\em pseudodense} set $S$, which is defined in the theorem statement below.

\begin{theorem}[DMT~\cite{gt08,tz08,rttv08,imp08,imp09}, informally stated] \label{thm:intro-dmt}
    Let $\F$ be a family of boolean functions on domain $\X$.
    For every $\epsilon, \delta > 0$, let $S \subseteq \X$ be $(\F', \epsilon, \delta)$-pseudodense, meaning that
    $\Pr_{x \in \X}[f(x)=1] \geq \delta \cdot \Pr_{x \in S} [f(x)=1] - \epsilon,$
    for all functions $f$ of ``low complexity'' relative to $\F$.
    Then there exists a set $M$ of density $|M|/|\X| \geq \delta - O(\epsilon)$ 
    such that the uniform distribution on $M$ is $(\F, \epsilon/\delta)$-indistinguishable from the uniform distribution on $S$.
\end{theorem}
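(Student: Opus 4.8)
The plan is to derive the Dense Model Theorem from the Multicalibration Theorem, in the same way that IHCL$++$ yields IHCL: first apply the MC Theorem to the indicator function of $S$ to obtain a low-complexity partition on whose pieces $S$ is ``locally pseudorandom'' (a DMT$++$-style statement, needing no pseudodensity hypothesis), and then glue the resulting local models into a single dense model, invoking pseudodensity only at the gluing step to pin down the global density.

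First I would apply Theorem~\ref{thm:intro-mcpartition} with $g = \mathbbm{1}_S$ and $\D$ uniform on $\X$, for suitable internal parameters $\epsilon_0,\gamma>0$ (to be tuned against $|S|/|\X|$). This produces a partition $\Pa=\{P_1,\dots,P_k\}$ of low complexity relative to $\F$, with $v_P = |S\cap P|/|P|$ equal to the density of $S$ inside $P$, such that for every piece $P$ of mass at least $\gamma$ and every $f\in\F$,
\[ v_P\cdot\bigl|\E_{x\in S\cap P}[f(x)] - \E_{x\in P}[f(x)]\bigr| \le \epsilon_0, \]
i.e.\ the uniform distribution on $S\cap P$ is $(\F,\epsilon_0/v_P)$-indistinguishable from the uniform distribution on $P$. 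In particular the whole piece $P$ already serves as a local model for $S\cap P$ within $P$; the only thing missing is to reweight and rescale the pieces so that their union has the right global density and the induced piece-weights match those of $U(S)$.

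For the gluing, fix a threshold $t$ of order $(|S|/|\X|)/\delta$ and call a piece $P$ \emph{dense} if $v_P>t$. Because $\Pa$ has low complexity, the map $x\mapsto\mathbbm{1}\{x \text{ lies in a dense piece}\}$ is a low-complexity test, so feeding it into the pseudodensity inequality --- together with the Markov-type bound $\Pr_{x\in\X}[v_{P(x)}>t]\le (|S|/|\X|)/t$ --- shows that $S$ places only an $O(\epsilon/\delta)$ fraction of its mass on dense (or on too-small, mass-$<\gamma$) pieces. I would then build $M$ by discarding those pieces and, on each remaining piece $P$, including a subset $M_P\subseteq P$ of size $\lambda\,v_P|P|$ with $\lambda = 1/\max\{v_P : P \text{ not dense}\}\ge 1/t$, chosen so that $U(M_P)$ is $(\F,\epsilon_1)$-indistinguishable from $U(P)$ --- a uniformly random subset of that size fools every $f\in\F$ except with negligible probability, or one derandomizes. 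Setting $M=\bigcup_P M_P$: the weight $|M_P|/|M|$ that $U(M)$ puts on $P$ equals $|S\cap P|/|S|$ up to the $O(\epsilon/\delta)$ renormalization from discarding, so by the triangle inequality $U(M)$ is $O(\epsilon/\delta)$-indistinguishable from $U(S)$ (the per-piece errors $\epsilon_1+\epsilon_0/v_P$, weighted by $v_P|P|/|S|$, sum to at most $\epsilon_1+\epsilon_0|\X|/|S|$, which is small for the right $\epsilon_0,\epsilon_1$); and the density of $M$ is $\lambda\cdot\sum_{P \text{ not dense}}v_P|P|/|\X| = \lambda\cdot(|S|/|\X|)\cdot(1-O(\epsilon/\delta)) \ge \delta-O(\epsilon)$ after tuning $t$.

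The hard part is exactly this last density accounting. The naive blow-up of $S$ has density $(|S|/|\X|)/\max_P v_P$, which can be far below $\delta$ when $S$ is concentrated in one small but very dense piece --- and ruling out that scenario is precisely what the pseudodensity hypothesis buys us. So the proof hinges on (i) choosing the dense/non-dense threshold $t$ so that the discarded $S$-mass and the lost density are \emph{simultaneously} $O(\epsilon)$, which makes the interplay of $\epsilon$, $\delta$, and $t$ tight; and (ii) checking that every test handed to the pseudodensity inequality (membership in the dense pieces, or in the single densest piece) genuinely has low complexity relative to $\F$ --- which holds only because the MC partition itself does. Exhibiting the subsets $M_P$ that fool all of $\F$ at once, and tuning the MC Theorem's internal parameters $\epsilon_0,\gamma$ against $|S|/|\X|$, are routine by comparison.
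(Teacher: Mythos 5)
Your high-level plan — apply MC to $\mathbbm{1}_S$, get local models on each piece, glue using pseudodensity — is aligned with the paper's DMT$++$ strategy, but the density accounting in the gluing step has a genuine gap, and it occurs at precisely the place you flag as ``the hard part.'' Let me make it concrete. Write $\sigma = |S|/|\X|$, $\alpha = \Pr_{x\in\X}[x\in D^*]$, $\beta = \Pr_{x\in S}[x\in D^*]$, where $D^*$ is the union of dense pieces. Your Markov bound gives $\alpha \le \sigma/t$, and pseudodensity applied to $\mathbbm{1}_{D^*}$ gives $\alpha \ge \delta\beta - \epsilon$, hence $\beta \le (\sigma/t + \epsilon)/\delta$. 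To conclude $\beta = O(\epsilon/\delta)$ you would need $\sigma/t = O(\epsilon)$, i.e.\ $t = \Omega(\sigma/\epsilon)$. But you also need $\lambda \ge (\delta - O(\epsilon))/\sigma$ to make $|M|/|\X| = \lambda\sigma(1-\beta) \ge \delta - O(\epsilon)$, and since $M_P\subseteq P$ forces $\lambda \le 1/\max\{v_P : P\ \text{not dense}\}$, which can be as small as $1/t$, you also need $t = O(\sigma/\delta)$. These two constraints on $t$ are incompatible unless $\delta = O(\epsilon)$, i.e.\ the trivial regime. With your stated choice $t = \Theta(\sigma/\delta)$ the pseudodensity bound only yields $\beta \le 1 + \epsilon/\delta$, which is vacuous, so the density of $M$ is not controlled.

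The underlying problem is that you invoke pseudodensity once, on the aggregate set $D^*$, which is too coarse to control the per-piece weights. The paper instead applies pseudodensity \emph{per piece}: for each $P$, taking $f^P = \mathbbm{1}_P$ (low-complexity because the partition membership map $\hat{f}$ is) gives $\delta\Pr_{x\sim\Sa}[x\in P] \le \Pr_{x\sim\V}[x\in P] + \epsilon$ (Proposition~\ref{prop:pseudodense}). This directly caps the weight the glued model puts on each piece, which is what the density bound actually requires. Moreover, the paper does not apply MC to $\mathbbm{1}_S$ on $\X$ with the uniform distribution. It works on an augmented domain $S\cup V$ (with $V$ a fresh copy of $\X$) under the balanced mixture $\D = \frac{1}{2}\Sa + \frac{1}{2}\V$, so that $g=\mathbbm{1}_S$ has overall $\D$-mean $1/2$ regardless of how sparse $S$ is. That normalization is what makes the balance parameters $v_P$ well behaved and the subsequent parameter tuning ($\epsilon' = \epsilon^2\delta$, $\gamma = \epsilon\epsilon'$, $\tau = \epsilon\delta$) go through cleanly; on the un-augmented domain, as in your proposal, $v_P$ is typically of order $\sigma$, which can be far below $\delta$, and the scaling factor $\lambda$ you introduce to compensate for this is exactly what collides with the pseudodensity bound as above. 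The paper also keeps the model as a distribution rather than committing to random subsets $M_P$; that sidesteps the extra union-bound-over-$\F$ step you would otherwise need, and matches the classical conversion from dense model distributions to dense model sets (analogous to Appendix~\ref{sec:setihcl}). Finally, your local indistinguishability — $U(S\cap P)$ vs.\ $U(P)$ — is a valid consequence of MC, but the paper instead uses statement~\ref{item:0s1s} of Lemma~\ref{thm:tfae} to compare the conditional distribution on $g=1$ to the one on $g=0$, i.e.\ $\Sa|_P$ to $\V|_P$, which is the comparison the gluing argument actually needs.
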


Using the Multicalibration Theorem, we instead obtain the following: 

\begin{theorem}[DMT$++$, informally stated]
    Let $S,V$ be two disjoint sets, let $\F$ be a family of boolean functions on $S\cup V$, and let $\epsilon>0$.
    There exists a ``low-complexity'' partition $\Pa$ of $S \cup V$ such that for every (large enough) piece $P$, the uniform distribution over $P \cap V$ is $(\F, \epsilon_P)$-indistinguishable from $P\cap S$, for an appropriate choice of $\epsilon_P$.
\end{theorem}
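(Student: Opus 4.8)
The plan is to apply the Multicalibration Theorem (Theorem~\ref{thm:intro-mcpartition}) to the indicator function of $S$. Concretely, set $\X = S \cup V$, let $\D$ be the uniform distribution on $\X$, and let $g = \mathbbm{1}_S : \X \to \{0,1\}$ be the boolean function that is $1$ on $S$ and $0$ on $V$. Invoking Theorem~\ref{thm:intro-mcpartition} with this $g$, the family $\F$, parameter $\epsilon$, and some $\gamma \ll \epsilon$, we obtain a low-complexity partition $\Pa$ of $\X$ that is $(\F,\epsilon,\gamma)$-approximately multicalibrated for $g$. Since $\D|_P$ is uniform on $P$, we have $v_P = \E_{x \sim \D|_P}[g(x)] = |P \cap S|/|P|$ for each $P \in \Pa$.

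Next I would unfold the multicalibration guarantee~(\ref{eq:intro-mcp}) on a single large-enough piece $P$ (one with $\Pr_{x \sim \D}[x \in P] \ge \gamma$). Write $U_S^P$, $U_V^P$, and $U^P$ for the uniform distributions on $P \cap S$, $P \cap V$, and $P$ respectively; a piece $P$ is \emph{nondegenerate} when $0 < v_P < 1$, i.e.\ both $P \cap S$ and $P \cap V$ are nonempty (pieces with $v_P \in \{0,1\}$ lie entirely within $S$ or within $V$ and carry no content). Two elementary identities drive the argument: first, $\E_{x \sim U^P}[f(x) g(x)] = v_P \cdot \E_{x \sim U_S^P}[f(x)]$ since $g = \mathbbm{1}_S$; second, $U^P$ is the mixture $v_P \cdot U_S^P + (1-v_P)\cdot U_V^P$, so $\E_{x \sim U^P}[f(x)] = v_P \E_{x \sim U_S^P}[f(x)] + (1-v_P)\E_{x \sim U_V^P}[f(x)]$. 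Substituting these into $\E_{x \sim \D|_P}[f(x)(g(x) - v_P)]$ collapses it to $v_P(1-v_P)\bigl(\E_{x \sim U_S^P}[f(x)] - \E_{x \sim U_V^P}[f(x)]\bigr)$, so the multicalibration bound becomes
\[
\Bigl| \E_{x \sim U_S^P}[f(x)] - \E_{x \sim U_V^P}[f(x)] \Bigr| \le \frac{\epsilon}{v_P(1-v_P)}
\]
for every $f \in \F$. Hence the theorem holds with $\epsilon_P := \epsilon/(v_P(1-v_P))$; this is best (namely $4\epsilon$) when $P$ is balanced between $S$ and $V$ and degrades as $v_P$ approaches $0$ or $1$, mirroring the $\epsilon/\delta$ dependence in the classical DMT (Theorem~\ref{thm:intro-dmt}).

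I expect the only real work beyond this unfolding is to recover the classical Dense Model Theorem from DMT$++$, in analogy with how IHCL$++$ recovers IHCL. There the extra ingredient is to reintroduce the pseudodensity hypothesis on $S$ and argue that it forces $S$ to occupy at least roughly a $\delta$ fraction of the mass on average over the pieces where it appears --- otherwise a low-complexity distinguisher could be built from the (low-complexity) partition circuit $C$ by outputting a piece-dependent constant --- and then to ``glue'' the local models $P \cap V$ across pieces, with appropriate reweighting, into a single global model $M \subseteq \X$ of density $\delta - O(\epsilon)$ that is indistinguishable from $U_S$ (a triangle inequality over pieces accounts for the $\epsilon/\delta$ loss). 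Handling the nondegenerate-piece bookkeeping and discarding the small-mass pieces (total mass $O(\gamma/\epsilon)$) is routine. The forward direction above, by contrast, is essentially immediate once $g$ is taken to be $\mathbbm{1}_S$.
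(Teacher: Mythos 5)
Your proposal is correct and follows essentially the same route as the paper: apply the Multicalibration Theorem to $g=\mathbbm{1}_S$ and observe that the multicalibration bound on each piece collapses algebraically to $v_P(1-v_P)\,\bigl|\E_{U_S^P}[f]-\E_{U_V^P}[f]\bigr|\le\epsilon$, which is precisely the content of Lemma~\ref{thm:tfae}, item~(\ref{item:0s1s}), that the paper invokes. The only cosmetic difference is the choice of reference distribution $\D$ --- you use the uniform distribution on $S\cup V$, whereas the paper's Theorem~\ref{thm:dmt++} uses the balanced mixture $\frac12\Sa+\frac12\V$ of the distributions on $S$ and $V$ separately, a choice that matters mainly when $|S|\ll|V|$ and is what makes the subsequent recovery of the classical DMT go through cleanly.
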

To see the connection between this statement and the original DMT, think of $V$ as a disjoint copy of the entire domain $\X$ that $S$ lives in.  
Then we think of $P\cap V$ as a model for $P\cap S$, one that has true density $|P\cap V|/|V|$.  
If we bring back the assumption that $S$ is $\delta$-pseudodense, then we can take an appropriate convex combination of the models $P\cap V$ to obtain a distribution that is truly $\delta$-dense in $\X$ and is indistinguishable from $S$.  
Specifically, our model distribution will select $P$ with probability $|P\cap S|/|S|$ and then return a uniformly random element of $P\cap V$.  
As usual, we will only use the pieces $P$ that are both large enough and not too imbalanced; i.e., ones where neither $|P\cap S|/|S|$ nor $|P\cap V|/|V|$ is too small.  
We remark that, given a dense model distribution, it is possible to obtain a dense set as in Theorem~\ref{thm:intro-dmt} by a standard probabilistic argument, similarly to the IHCL~\cite{imp95}.

To prove Theorem~\ref{thm:dmt++}, we apply the Multicalibration Theorem to characteristic function of the set $S$ on the distribution that selects a uniformly random element of $S$ with probability 1/2 and selects a uniformly random element of $V$ with probability $1/2$.

\subsection{Common themes}

\label{sec:themes} 

There are several common themes that recur across all of our $++$ theorems.

\emph{Reproducing the original theorems locally.} 
Because multicalibration yields a low-complexity partition of the domain, all of our stronger and more general $++$ theorems find a low-complexity partition of the domain such that the original theorem is reproduced ``locally'' in each of the pieces of the partition simultaneously. 
In the case of IHCL, we find a hardcore set within each piece of the partition. 
In the case of PAME, we construct a distribution within in each piece of the partition.
For the DMT, we construct a model within each piece of the partition. 

\emph{Our theorems remove the original assumptions.} 
All of the IHCL, PAME, and DMT theorems assume some kind of computational hardness in their hypotheses: In IHCL, the function $g$ is assumed to be weakly hard on average. In the case of PAME, $B$ is assumed to be unpredictable given $X$.
In the case of the DMT, the set $S$ is assumed to be pseudodense in $\X$. 
In all of the three original theorems, given the hardness assumption, an object is then constructed achieving: 1) A stronger hardness condition, and 2) Some density guarantee.

In the case of IHCL (Theorem~\ref{thm:intro-IHCL}), the original theorem finds a subset $H$ of the domain  such that 1) the input function is maximally unpredictable inside $H$, and 2) $H$ occupies at least a certain fraction of the domain.
For PAME (Theorem~\ref{thm:intro-PAME}), the original theorem builds a conditional distribution $C|X$ that is 1) indistinguishable from the input distribution, and 2) that has at least some amount of average min-entropy.
For the DMT (Theorem~\ref{thm:intro-dmt}), the original theorem constructs a set that is 1) indistinguishable from uniform distribution on the pseudodense set $S$ and 2) satisfies a lower bound on its density. 

When we use the Multicalibration theorem instead of the Regularity/Multiaccuracy theorem, we find that 
we can maintain 1) the strong hardness conclusion in each piece of the partition, but \emph{without requiring the hardness assumption in the hypotheses}, and so our stronger theorems hold for an arbitrary function (in the case of IHCL$++$), an arbitrary conditional distribution (in the case of PAME$++$), and an arbitrary set (in the case of DMT$++$).
That is, we show that \emph{every} input object admits a certain regular partition/decomposition, where this ``regularity'' captures the required hardness/unpredictability/indistinguishability in the sense of the original theorems.

\emph{Hardness vs Density.} 
In the original theorems, the density lower bounds depend on a parameter that is given by the hardness assumption.
For example, in the case of IHCL and PAME, the input function/distribution is assumed to be $\delta$-hard for some $\delta$, and then the hardcore set given by IHCL is shown to have density at least $2\delta$ and the indistinguishable distribution given by PAME is shown to have average min-entropy at least $\log(1/(1-\delta))$.  In DMT, the input set $S$ is assumed to be $\delta$-pseudodense, and the conclusion gives a model of true density at least $\delta$.
Given that our $++$ theorems remove the hardness assumption, we no longer have such a hardness parameter. Instead, the density lower bounds that we show in our stronger $++$ theorems relate to how balanced our function/object is on each piece of the partition. 

\emph{Recovering the original theorems.} 
Moreover, our $++$ theorems are also stronger than the original ones because we are able to derive the original theorems as a corollary of our new results. 
Indeed, we show that if we bring back the assumption from the original theorems, we can ``stitch'' together the objects that we have built within each piece $P$ in our $++$ theorems (i.e., hardcore sets in the case of IHCL$++$, high-entropy distributions in the case of PAME$++$, and dense models in the case of DMT$++$) so that the resulting object satisfies the conclusion of the original theorems.  We are able to lower-bound the weighted average of the balance parameters in terms of the hardness parameter from our assumption, crucially using the fact that the partition is of low complexity.

These themes are summarized in Table~\ref{tab:summary}.

\begin{table}[h]
\centering
\def\arraystretch{1.3}
\scalebox{0.7}
{\begin{tabular}{V{4} c|c|c|c|c|c|c V{4}}
\hlineB{4}
& \textbf{IHCL} & \textbf{IHCL$++$} & \textbf{PAME} & \textbf{PAME$++$} & \textbf{DMT} & \textbf{DMT$++$} \\
\hline
\textbf{Input object} & \multicolumn{2}{c|}{Function $f$} & \multicolumn{2}{c|}{Joint distribution $(X, B)$} & \multicolumn{2}{c V{4}}{Sets $S, V$} \\
\hline 
\raisebox{-0.5cm}{\textbf{Assumption}} & \raisebox{-0.5cm}{$f$ is} & & \raisebox{-0.5cm}{$B$ is} & & \raisebox{-0.5cm}{$S$ is $(\F', \epsilon, \delta)$-} & \\
\textbf{on the input} & $(\F', \delta)$-hard & \raisebox{0.3cm}{---} & $(\F', \delta$)-hard & \raisebox{0.3cm}{---}  & pseudodense & \raisebox{0.3cm}{---} \\[0.1cm]
&  & & \raisebox{0.3cm}{given $X$} &  & & \\[-0.2cm]
\hline 
\raisebox{-0.7cm}{\textbf{Indist.}} & \raisebox{-0.6cm}{$\exists$ hardcore } & \raisebox{-0.4cm}{$\exists$ low-complexity} & \raisebox{-0.6cm}{$\exists$ dist. $C$ s.t.} & \raisebox{-0.6cm}{$\exists$ $O(1/\epsilon)$ dists. $C_P$ s.t.} & \raisebox{-0.6cm}{$\exists$ set $M$ s.t.} & \raisebox{-0.5cm}{$\exists$ $O(1/\epsilon)$ sets} \\
\raisebox{-0.1cm}{\textbf{guarantee}} & set $H$ & \raisebox{0.13cm}{partition w/ $O(1/\epsilon)$} & \footnotesize{$(X, C) \approx_{(\F, \epsilon)} (X, B)$} & \footnotesize{$(X_P, B_P) \approx_{(\F, \epsilon)} (X_P, C_P)$} & $\U_M \approx_{(\F, \epsilon/\delta)} \U_S$  &  $S_P, V_P$ s.t. \\
& & \raisebox{0.3cm}{hardcore sets $H_P$} &  &  & & \raisebox{0.2cm}{\footnotesize{$\U_{S_P} \approx_{(\F, \epsilon_P)} \U_{V_P}$}} \\
\hline
\raisebox{0.5cm}{\textbf{Density}} & \raisebox{0.2cm}{$|H|/|\X| \geq 2\delta$} & \raisebox{0.2cm}{$|H_P| / |\X| \geq 2 b_P$} & \raisebox{0.5cm}{\phantom{\huge X}$\hspace*{-0.5cm}\tilde{\Hb}_{\infty}(C|X)$}
& \raisebox{0.2cm}{$\tilde{\Hb}_{\infty}(C_P|X_P) \geq \Hb(B_P)$} & \raisebox{0.2cm}{$|M|/|\X| \geq \delta - O(\epsilon)$} & \raisebox{0.2cm}{$|P \cap V| / |V|$} \\[-0.7cm]
\raisebox{-0.1cm}{\textbf{guarantee}} & & & \raisebox{-0.1cm}{$\geq \log(1/(1-\delta))$} 
&  &  &  
\\[0.2cm]
\hline
\raisebox{-0.1cm}{\textbf{Recovery}} & \raisebox{-0.1cm}{---} & \raisebox{-0.1cm}{Recovers IHCL} & \raisebox{-0.1cm}{---} & \raisebox{-0.1cm}{Recovers PAME} & \raisebox{-0.1cm}{---} & \raisebox{-0.1cm}{Recovers DMT} \\
[0.2cm]
\hlineB{4}
\end{tabular}}
    \caption{Summary of how the results in this paper generalize the original IHCL, PAME, and DMT theorems. The notation $\U_{S}$ denotes the uniform distribution over the set $S$, and $\approx_{(\F, \epsilon)}$ denotes $(F, \epsilon)$-indistinguishability.}
    \label{tab:summary}
\end{table}

\subsection{Proof techniques}
\label{sec:prooftechniques}
As discussed in Section~\ref{sec:intro-mcpartitions}, the power of the Multicalibration Theorem (Theorem~\ref{thm:intro-mcpartition}) is that it partitions the domain $\X$ into pieces $P$ such that on each piece, $g$ is indistinguishable from a constant-Bernoulli function.  This is a powerful condition, as we see through the following lemma.

\begin{lemma}[Characterizing Indistinguishability from Constant-Bernoulli Functions, informally stated]
Let $\F$ be a family of boolean functions on $\X$, and let $\epsilon>0$. Suppose that $g : \X\rightarrow \zo$
is $(\F,\epsilon)$-indistinguishable from a constant-Bernoulli function with 
expectation $v=\E_{x\in \X}[g(x)]$ and balance $b=\min\{v,1-v\}$. Then, each of the following hold up to small changes in the family $\F$ and/or the parameter $\epsilon$ (denoted as $\F'$ and $\epsilon'$):
\begin{enumerate}
\item There is a set $H$ of density $2b$ in $\X$ such that $g$ is $(\F',1/2-\epsilon'/b)$-hard on $H$.
\item The distribution $(X,g(X))$ is $(\F',\epsilon)$-indistinguishable from the distribution $(X,\Bern(v))$, where $X$ is sampled uniformly from $\X$.
\item The uniform distributions on $g^{-1}(1)$ and $g^{-1}(0)$ are $(\F',\epsilon'/b)$-indistinguishable from each other. 
\end{enumerate}
\end{lemma}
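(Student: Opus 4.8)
The plan is to unpack the hypothesis, which says
\[
\Big| \E_{x \in \X}[f(x)(g(x)-v)] \Big| \leq \epsilon \qquad \text{for all } f\in\F,
\]
and extract from it each of the three conclusions by relatively soft manipulations — the real content is already in the one-line indistinguishability condition, and each part is an exercise in choosing the right family $\F'$ and the right ``test'' function. First I would handle Part~2, which is essentially just a change of viewpoint: $(X,g(X))$ and $(X,\Bern(v))$ live over $\X\times\zo$, and any distinguisher $f'$ on $\X\times\zo$ can be written as $f'(x,b) = b\cdot f_1(x) + (1-b)\cdot f_0(x)$ for $f_0,f_1:\X\to\zo$. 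Computing $\E[f'(X,g(X))] - \E[f'(X,\Bern(v))]$ telescopes to $\E_x[(f_1(x)-f_0(x))(g(x)-v)]$, so letting $\F'$ be (closed under) differences $f_1-f_0$ of members of $\F$, the bound is inherited directly with the same $\epsilon$. This also clarifies the footnote about ``small changes to $\F$ to account for the change of domain.''

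Next I would do Part~3. Write $v = \Pr[g(X)=1]$, so $|g^{-1}(1)| = v|\X|$ and $|g^{-1}(0)| = (1-v)|\X|$. For any test $f\in\F$, the distinguishing advantage between $\U_{g^{-1}(1)}$ and $\U_{g^{-1}(0)}$ is
\[
\Big| \E_{x\in g^{-1}(1)}[f(x)] - \E_{x\in g^{-1}(0)}[f(x)] \Big|
= \Big| \frac{1}{v}\E_x[f(x)g(x)] - \frac{1}{1-v}\E_x[f(x)(1-g(x))] \Big|.
\]
Combining the two terms over a common denominator $v(1-v)$, the numerator simplifies (the $\E_x[f(x)]$ pieces partially cancel) to $\frac{1}{v(1-v)}\big|\E_x[f(x)(g(x)-v)]\big| \leq \frac{\epsilon}{v(1-v)}$. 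Since $b=\min\{v,1-v\}$ and the larger of $v,1-v$ is at least $1/2$, we have $v(1-v) \geq b/2$, giving advantage at most $2\epsilon/b$; absorbing the constant, this is $\epsilon'/b$ with $\epsilon' = 2\epsilon$. (Here $\F' = \F$ suffices, possibly closed under negation.)

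Part~1 is the one I expect to be the main obstacle, since it must reconstruct the (optimal-density) Hardcore Lemma locally, and the natural route is to invoke the min-max / LP-duality argument in the style of \cite{imp95,hol05,ttv09} rather than derive it by hand. The plan is: suppose for contradiction that \emph{every} set $H$ of density $2b$ admits a low-complexity $f$ with $\Pr_{x\in H}[f(x)=g(x)] > 1/2 + \epsilon'/b$; equivalently, every density-$2b$ \emph{distribution} is predicted with advantage $>\epsilon'/b$ by some $f\in\F'$. By the standard duality argument this yields a single (fractional, low-complexity relative to $\F$) function $\tilde f$ — a weighted majority of polynomially many members of $\F$ — that agrees with $g$ with advantage more than $2b\cdot(\epsilon'/b) = 2\epsilon'$ in the appropriate averaged sense, hence $\big|\E_x[\tilde f(x)(g(x)-v)]\big|$ is bounded below by something like $2\epsilon'$ after accounting for the $v$-shift. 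Taking $\epsilon'$ to be a suitable constant multiple of $\epsilon$ and $\F'$ to be the class of small weighted-majority combinations of $\F$-functions (the same ``low-complexity relative to $\F$'' notion used throughout the paper), this contradicts the hypothesis $\big|\E_x[f(x)(g(x)-v)]\big|\le\epsilon$ for the relevant $f$. The delicate points will be (i) tracking the exact relationship between the density $2b$, the hardness slack $\epsilon'/b$, and the final $\epsilon$ — this is exactly where the $2\delta$-vs-$\delta$ optimality of Holenstein's version enters, and where the $1/b$ rescaling of $\epsilon$ comes from — and (ii) the standard conversion from a hardcore \emph{measure} of density $2b$ to a hardcore \emph{set} of density $\ge 2b$ via a probabilistic argument \cite{imp95}, which I would cite rather than reprove. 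I would note that, alternatively, Part~1 follows from a careful reading of the IHCL$++$ argument restricted to a single piece $P=\X$, since that theorem is itself proved from the MC Theorem via precisely this constant-Bernoulli indistinguishability lemma.
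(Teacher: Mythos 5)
Your treatments of Parts 2 and 3 match the paper's proof essentially step for step: the paper likewise reduces Part~2 to a ``telescoping'' test $f''(x)=f'(x,1)-f'(x,0)$, and proves Part~3 by the same common-denominator rearrangement giving advantage $\frac{1}{v(1-v)}\left|\E_x[f(x)(g(x)-v)]\right|\le \frac{\epsilon}{v(1-v)}$. So far so good.

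The significant divergence is Part~1, where you reach for the min-max / boosting machinery of the classical Hardcore Lemma. That route is not wrong in principle, but it misses the fact that \emph{Part~3 already does all the work}, and as a result it is both heavier and lossier than what the situation requires. The paper's proof of Part~1 is a one-step direct construction: define the reweighted distribution
\[
\mu(x) \;=\; \begin{cases}
\tfrac{1}{2v}\,\D(x), & g(x)=1,\\[2pt]
\tfrac{1}{2(1-v)}\,\D(x), & g(x)=0,
\end{cases}
\]
i.e., the uniform (weight-$\tfrac12$/weight-$\tfrac12$) mixture of $\D|_{g^{-1}(1)}$ and $\D|_{g^{-1}(0)}$. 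Density $2b$ in $\D$ is immediate from $b=\min\{v,1-v\}$, and $\E_{x\sim\mu}[g(x)]=\tfrac12$ by construction. Then
\[
\Pr_{x\sim\mu}[f(x)=g(x)] \;=\; \tfrac12 + \tfrac12\bigl(\E_{\mu_1}[f]-\E_{\mu_0}[f]\bigr) \;\le\; \tfrac12 + \tfrac12\cdot\tfrac{\epsilon}{v(1-v)},
\]
where the inequality is exactly your Part~3. No contradiction, no duality, no weighted-majority amplification, and crucially \emph{no enlargement of $\F$}: the hardness in Part~1 holds against the same $\F$ as Part~3. By contrast, your proposed route would incur (i) a significant blowup in the class $\F'$ (the min-max argument naturally produces distinguishers that are threshold combinations of many $\F$-members, and the hypothesis you want to contradict would then need to apply to that enlarged class, not to $\F$), and (ii) a constant-factor loss in the density/hardness tradeoff relative to the clean $2b$ that the explicit $\mu$ delivers for free. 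I'd also flag that your fallback suggestion --- derive Part~1 by specializing IHCL$++$ to the single piece $P=\X$ --- is circular in this paper's architecture: IHCL$++$ is proved by applying precisely this lemma (Part~1) to each piece. Finally, a small point of precision: the formal version of the lemma (Lemma~2.12) produces a hardcore \emph{distribution}; the conversion to a hardcore \emph{set} is a separate probabilistic step (Appendix~\ref{sec:setihcl}) that additionally needs an upper bound on $|\F|$, so it is worth keeping the two cleanly separated rather than folding the set conversion into Part~1 itself.
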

The three parts of the lemma are what we use in the proofs of the IHCL$++$, PAME$++$, and IHCL$++$, respectively.  Indeed, $H$ is our local hardcore set, $\Bern(v)$ is our local distribution of high average min-entropy, and $g^{-1}(0)$ is our local dense model of $g^{-1}(1)$. 

\subsection{Organization of the paper}
We begin by introducing the necessary definitions and notation in Section~\ref{sec:prelims}, including the precise Multicalibration Theorem, followed by our lemma characterizing indistinguishability from constant-Bernoulli functions.
Section~\ref{sec:ihcl} is devoted to Impagliazzo's Hardcore Lemma and our IHCL$++$ theorem.
Section~\ref{sec:pame} presents our results related to the PAME theorem, and Section~\ref{sec:dmt} is concerned with the Dense Model Theorem.
We conclude in Section~\ref{sec:conclusions} by discussing directions for future work.
An expository and preliminary version of the results presented in this paper can be found in the thesis of one of the authors \cite{casacuberta2023thesis}, although there are some relevant differences content-wise.

\section{Notation and Preliminaries}\label{sec:prelims}

We denote the domain by $\X$, the class of distinguishers by $\F = \{f\}$, the function to which we apply the multicalibration theorem by $g$, and the MA/MC predictor by $h$. 
We use $\Pa = \{P\} \subseteq 2^{\X}$ to denote a partition of the domain.
The notation $\Pr_{x \in \X}$ means that $x$ is sampled uniformly from $\X$, whereas $x \sim \D$ denotes that $x$ is sampled according to distribution $\D$.
We denote the constant 0 and 1 functions by $\mathbf{0}$ and $\mathbf{1}$, respectively.
We say that a class of functions $\F$ is \emph{closed under negation} if for all $f \in \F$, the function $-f$ is also in $\F$.
All the logarithms in this paper are assumed to be in base 2.

Following the intuition that we provided in Section~\ref{sec:introduction}, the formal definition of multiaccuracy is as follows:

\begin{definition}[Multiaccuracy \cite{hkrr18,kgz19}]\label{def:ma}
{\rm
Let $\X$ be a finite domain, $\F$ a collection of functions $f\colon \X \rightarrow [0,1]$, $g\colon \X \rightarrow [0,1]$ an arbitrary function, $\D$ a probability distribution over $\X$, and $\epsilon > 0$. We say that $h\colon \X \rightarrow [0,1]$ is an $(\F, \epsilon)$\emph{-multiaccurate} (MA) predictor for $g$ on $\D$ if, for all $f \in \F$,
\[
    \Big| \E_{x \sim \D}[ f(x) \cdot (g(x) - h(x))] \Big| \leq \epsilon.
\]
}
\end{definition}

We think of a $[0,1]$-valued function $f$ as describing a randomized $\{0,1\}$-valued function $f^{\rand}$ where $\Pr_{\coins(f^{\rand})}[f^{\rand}(x)=1] = f(x)$. 
Then,
\[
    \E_{x \sim \D}[f(x) \cdot (g(x)-h(x))] = \E_{\substack{x \sim \D, \, \coins(f^{\rand}), \\ \coins(g^{\rand}), \,\coins(h^{\rand})}}\big[f^{\rand}(x) \cdot (g^{\rand}(x) - h^{\rand}(x))\big].
\]
Therefore, the notion of multiaccuracy applies to the randomized functions as well.

The starting point of this work is the observation that multiaccuracy corresponds \emph{exactly} to the classical notion of indistinguishability with respect to a class of functions:

\begin{definition}[$(\F, \epsilon)$-indistinguishability \cite{ttv09}]\label{def:indist}
\rm{
    Let $\X$ be a finite domain, $\F$ a class of functions $f\colon \X \rightarrow \{0,1\}$, $g\colon \X \rightarrow [0,1]$, $\D$ a distribution on $\X$ and $\epsilon>0$. We say that a function $h\colon \X \rightarrow [0,1]$ is $(\F, \epsilon)$-\emph{indistinguishable} from $g$ on $\D$ if, for all $f \in \F$,
    \[
        \Big| \E_{x \sim \D}[f(x) \cdot (g(x) - h(x))] \Big| \leq \epsilon.
    \]
}
\end{definition}

Multicalibration is a stronger notion than multiaccuracy, where the predictor $h$ satisfies that $\E_{x \sim \D|_{h(x)=v}}[f(x) \cdot (g(x) - v)] \leq \epsilon$ for every $v \in \textrm{range}(h)$ and every $f \in \F$ \cite{hkrr18}, where $\D|_{h(x)=v}$ denotes the conditional distribution.
Thus the level sets of $h$ induce a partition $\Pa$ of the domain, and as we show in Appendix~\ref{sec:mcpartitions}, the value of $h$ in each piece $P \in \Pa$ can be made to be equal to the expected value of $g$ over $P$, which we denote by $v_P$:

\begin{definition}[Balance of $g$]\label{def:vpbp}
    Given an arbitrary function $g: \X \rightarrow [0,1]$ and a partition $\Pa = \{P\}$ of $\X$, we let $v_P = \E_{x \sim \D|_P}[g(x)]$ for each $P \in \Pa$ and $b_P = \min\{v_P, 1-v_P\}\leq 1/2$, where $\D|_P$ denotes the conditional distribution $\D|_{h(x)\in P}$
    We call $b_P$ the \emph{balance} of $g$ on $P$.
\end{definition}

In particular, $b_P=1/2$ corresponds to $g^{\rand}$ being perfectly balanced; i.e.,
\[
    \Pr_{\substack{x \sim \D|_P, \\ \coins(g^{\rand})}}[g^{\rand}(x)=1] = \Pr_{\substack{x \sim \D|_P, \\ \coins(g^{\rand})}}[g^{\rand}(x)=0] = 1/2,
\]
whereas $b_P = 0$ corresponds to $g$ being completely imbalanced; i.e., $g^{\rand}(x)$ is always 0 or 1.

Moreover, as explained in the introduction, we need to relax the notion of multicalibration to \emph{approximate multicalibration} by introducing a lower bound $\gamma$ on the size of each $P \in \Pa$ (according to distribution $\D$).
This yields the definition of an \emph{approximate MC partition}:

\begin{definition}[Approximate MC partition]\label{def:mcpartition}
{\rm
Let $\X$ be a finite domain, $\F$ a class of functions $f\colon \X \rightarrow [0,1]$, $g\colon \X \rightarrow [0,1]$ an arbitrary function, $\D$ a probability distribution over $\X$, and $\epsilon, \gamma > 0$. We say that a partition $\Pa$ of $\X$ is \emph{$(\F, \epsilon, \gamma)$-approximately multicalibrated} (MC) for $g$ on $\D$ 
if for all $f \in \F$ and all $P \in \Pa$ such that $\Pr_{x \sim \D}[x \in P] \geq \gamma$,
\[
    \Big|\E_{x \sim \D|_P} [f(x) \cdot (g(x) - v_P)] \Big| \leq \epsilon
\]
where $v_P := \E_{x \sim \D|_P}P[g(x)]$ and $\D|_P$ denotes the conditional distribution $\D|_{h(x)\in P}$.
}
\end{definition}

Note that achieving Definition~\ref{def:mcpartition} is trivial if we allow $O(1/\gamma)$ pieces $P$.
We will want to achieve a partition that is much smaller; the goal is to satisfy approximate multcalibration with only $O(1/\epsilon)$ pieces, where $\epsilon \gg \gamma$.
This turns out to be possible, as demonstrated by Theorem~\ref{thm:mcpartition}.

In the case where $\D$ corresponds to the uniform distribution over $\X$, then $\Pr_{x \sim \D}[x \in P] = |P|/|\X|$.
That is, in this case, Definition~\ref{def:mcpartition} should be understood as saying that we do not make any guarantees about sets that are too small (namely, about sets that occupy less than a $\gamma$ fraction of the space). Additionally, in order to keep track of the impact of the size of each $P \in \Pa$, we introduce the following notation:

\begin{definition}\label{def:densityparam}
{\rm
Given $P \subseteq \X$, we 
let $\eta_P = \Pr_{x \sim \D}[x \in P]$ denote the \emph{size} parameter of $P$ in $\X$. If $\D$ corresponds to the uniform distribution over $\X$, then $\eta_P := |P| / |\X|$.}
\end{definition}

We also use the following notation for sampling the pieces $P$ from $\Pa$:

\begin{definition}
    Given a partition $\Pa$ of $\X$ and a distribution $\D$ over $\X$, $\Pa(\D)$ denotes the distribution on $\Pa$ that selects each $P \in \Pa$ with probability $\sum_{x \in P} \D(x)$.
\end{definition}

Next, we study the notion of complexity of a partition.
We use the number of wires of a circuit as the circuit size measure.

\smallskip
\textbf{Complexity of a partition.} As we developed in the introduction, a key property of a multicalibrated partition is that it is a \textit{low-complexity} partition of the domain $\X$. We now formalize this idea.

\begin{definition}[Relative complexity of a function {\cite[Definition 6]{jp14}}]\label{def:relcomplexity}
{\rm
Let $\F$ be a family of functions $f\colon \X \rightarrow [0,1]$. A function $h$ has \emph{complexity $(t, q)$ relative to $\F$} if it can be computed by an oracle-aided circuit of size $t$ with $q$ oracle gates, where each oracle gate is instantiated with a function from~$\F$.}
\end{definition}

The notion of relative complexity captures the idea that we can make oracle calls to the functions in $\F$ without these factoring into the complexity. 
The algorithms to construct MA and MC predictors $h$ use an oracle for a weak agnostic learner for the family $\F$ \cite{ttv09, hkrr18, gkrsw21, ghkrs23}.
In such a case, the parameter $q$ corresponds to the number of oracle calls to the weak agnostic learner $q$. 


\begin{definition}\label{def:Ftq}
{\rm
Given an arbitrary class of functions $\F$, we denote by $\F_{t, q}$ the class of functions that have complexity at most $(t, q)$ relative to $\F$.}
\end{definition}

\begin{remark}\label{remark:dropqs}
{\rm
From Definitions \ref{def:relcomplexity} and \ref{def:Ftq} it follows that when working with the class $\F_{t, q}$ we can always assume that $q \leq t$.}
\end{remark}

\begin{remark}
{\rm
In some applications, $\F$ is a class of distinguishers implemented by size-$s$ circuits. In that case, every function in $\F_{t, q}$ can be computed by a circuit
of size $t + sq$.}
\end{remark}

That is, in measuring complexity, $t$ is additive, whereas $q$ is multiplicative. 
In combinatorial applications, usually only $q$ matters. The parameter $q$ can be independent of $|\X|$, but $t$ generally cannot, as at least $\log|\X|$ wires are needed to read an input from $\X$.

\begin{definition}\label{def:Ftqk}
{\rm
Given a set of functions $\F = \{f\}$ on a finite domain $\X$, $\F_{t, q, k}$ denotes the class of partitions $\Pa$ of $\X$ such that there exists  $\hat{f} \in \F_{t, q}$, $\hat{f}: \X \rightarrow [k]$, satisfying $\Pa = \{\hat{f}^{-1}(1), \ldots, \hat{f}^{-1}(k)\}$.
}
\end{definition}

The condition $P_i = \hat{f}^{-1}(i)$ stated in Definition~\ref{def:Ftqk} ensures that we can always know to which level set each $x \in \X$ belongs to by performing an oracle call to a function in $\F_{t, q}$. 
Intuitively, we are associating each $P \in \Pa$ with an integer in $[k]$, and then Definition~\ref{def:Ftqk} requires the existence of a function in $F_{t, q}$ that we use to query to which $P$ each $x \in \X$ belongs to. 
We call this function $\hat{f}$ the \textit{partition membership function}. 
Naturally, this $\hat{f}$ is constant on each $P \in \Pa$.

Having formalized the complexity class $\F_{t, q, k}$ of partitions, we can now state the theorem that is the backbone of all our results in this paper:

\begin{theorem}[Multicalibration Theorem \cite{hkrr18}]\label{thm:mcpartition}
Let $\X$ be a finite domain, $\F$ a class of functions $f\colon \X \rightarrow [0,1]$, $g\colon \X \rightarrow [0,1]$ an arbitrary function, $\D$ a probability distribution over $\X$, and $\epsilon, \gamma > 0$. There exists an $(\F, \epsilon, \gamma)$-approximately multicalibrated partition $\Pa$ of $\X$ for $g$ on $\D$  
such that $\Pa \in \F_{t, q, k}$, where
\begin{itemize}
    \item[{\rm 1.}] $t = O(1/(\epsilon^4 \gamma) \cdot \log(|\X|/\epsilon))$,
    \item[{\rm 2.}] $q = O(1/\epsilon^2)$,
    \item[{\rm 3.}] $k = O(1/\epsilon)$.
\end{itemize}
\end{theorem}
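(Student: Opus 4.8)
The plan is to prove the theorem by the iterative ``boosting'' procedure that underlies all multicalibration algorithms, working directly with partitions (rather than with a predictor) so that no translation step is needed at the end. I maintain a partition $\Pa$ of $\X$ together with its canonical values $v_P=\E_{x\sim\D|_P}[g(x)]$, and I enforce throughout the invariant that the values $\{v_P\}_{P\in\Pa}$ lie in a fixed grid of width $\Theta(\epsilon)$ and that distinct pieces of $\Pa$ carry distinct grid values; this caps $|\Pa|$ at $O(1/\epsilon)$ by construction and is precisely the ``merging'' mechanism mentioned in the introduction. Initialize $\Pa$ to the single piece $\X$.

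Each round proceeds as follows. If $\Pa$ is already $(\F,\epsilon,\gamma)$-approximately MC for $g$ on $\D$ we stop; otherwise an exhaustive search over $\F$ (which we are free to do, since the theorem constrains only the complexity of \emph{evaluating} the final partition, not of finding it) reveals a piece $P\in\Pa$ with $\eta_P\ge\gamma$ and an $f\in\F$ with $\bigl|\E_{x\sim\D|_P}[f(x)(g(x)-v_P)]\bigr|>\epsilon$. I refine $P$ by splitting it according to the ($\Theta(\epsilon)$-grid-discretized) value of $f$, recompute the conditional mean of $g$ on each sub-piece, and then re-merge every piece of the updated partition into the unique $\Theta(\epsilon)$-grid cell containing its value, restoring the invariant. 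The engine is the ``energy'' $\Phi(\Pa)=\sum_{P\in\Pa}\eta_P\,\E_{x\sim\D|_P}\bigl[(g(x)-v_P)^2\bigr]\in[0,1]$, i.e.\ the residual variance of $g$ given $\Pa$: refining a piece can only decrease $\Phi$, re-merging within a grid cell increases it by a controlled amount, and a short second-order computation — in which the fact that $v_P$ is exactly the conditional mean makes the multicalibration violation itself the leading-order gain — shows that the net effect of one round is to decrease $\Phi$ by $\Omega(\eta_P\epsilon^2)=\Omega(\gamma\epsilon^2)$. This is also where the grid width is pinned down to $\Theta(\epsilon)$: a coarser grid would let the re-merging loss swamp the refining gain.

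Since $\Phi$ starts below $1$ and never goes negative, the procedure terminates, and the terminal partition satisfies Definition~\ref{def:mcpartition} by the stopping rule; because its per-piece values are conditional means, this is already the partition (rather than predictor) formulation, matching Appendix~\ref{sec:mcpartitions}. Bounding the resources is the substantive part. Unrolling the rounds gives an oracle-aided circuit for the partition-membership function $\hat f:\X\to[k]$: each round contributes one oracle gate (its witness $f$) together with $\poly(\log|\X|,1/\epsilon)$ Boolean wires that, from the witnesses queried so far, recompute which piece $x$ currently lies in. A careful accounting — telescoping the per-round energy decrements so that the masses $\eta_P$ of the refined pieces sum to $O(1/\epsilon^2)$, and using that pieces of mass below $\gamma$ (which together carry at most a $\gamma k=O(\gamma/\epsilon)$ fraction of $\D$) are never refined — is what drives $k$ down to $O(1/\epsilon)$, $q$ to $O(1/\epsilon^2)$, and $t$ to $O\!\bigl(\log(|\X|/\epsilon)/(\epsilon^4\gamma)\bigr)$, with the $\gamma$-dependence confined to the additive size term.

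The step I expect to be the main obstacle is exactly this resource accounting: it is delicate to simultaneously hold the number of pieces to $O(1/\epsilon)$ — which forces a coarse value grid and hence a relatively large re-merging error each round — to guarantee that every round nevertheless makes net progress on the order of $\epsilon^2$, and to keep the oracle-gate count free of any $1/\gamma$ factor, so that the small-piece threshold $\gamma$ lands in the additive wire count alone. By contrast, the energy-decrease inequality is routine once the grid width and update rule are fixed, and checking that the terminal partition meets Definition~\ref{def:mcpartition} and lies in $\F_{t,q,k}$ is bookkeeping.
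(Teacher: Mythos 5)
Your proof takes a genuinely different route from the paper's. The paper does not argue from scratch: it (i) reduces approximate multicalibration to the \emph{multicalibration-on-average} (MCoA) notion via the reparametrization $\epsilon \leftarrow \epsilon\gamma$, (ii) invokes an existing MCoA algorithm from the literature as a black box to produce a predictor $h$, and (iii) post-hoc rounds $h$'s values to a $\lambda$-grid with $\lambda = \Theta(\epsilon)$, showing (Claim~\ref{claim:numlevelsets}) that this costs only an additive $\lambda/2$ in the MCoA error while collapsing the number of level sets to $O(1/\lambda)=O(1/\epsilon)$. The complexity parameters $t,q$ are then quoted directly from \cite{hkrr18}. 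By contrast, you maintain the grid invariant online, interleaving refine and merge steps, and drive everything with the residual-variance potential $\Phi$. This is a reasonable, more self-contained alternative and does make the ``merging avoids too many small pieces'' intuition from the introduction literal, but it forces you to re-derive the resource bounds rather than inherit them.

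That re-derivation is where there is a genuine gap, and it is exactly where you flag difficulty. Your per-round progress is $\Omega(\eta_P\,\epsilon^2)$, and since you only refine pieces with $\eta_P\ge\gamma$, the guaranteed progress per round is only $\Omega(\gamma\epsilon^2)$; with $\Phi\le 1$ this bounds the number of rounds by $O\bigl(1/(\gamma\epsilon^2)\bigr)$. Each round contributes one oracle gate to the membership circuit, so the argument as written yields $q = O\bigl(1/(\gamma\epsilon^2)\bigr)$, not the claimed $q=O(1/\epsilon^2)$. The ``telescoping'' you invoke --- that $\sum_r \eta_{P_r} = O(1/\epsilon^2)$ --- is correct but does not bound the number of rounds (equivalently, oracle gates): that sum can be achieved by $\Theta(1/(\gamma\epsilon^2))$ rounds each with $\eta_{P_r}=\gamma$, and $q$ counts rounds uniformly, not weighted by mass. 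So either you need a stronger per-round progress guarantee that does not degrade with $\eta_P$ (which the single-piece refinement rule you describe does not give), or you must move the $1/\gamma$ out of the oracle count by some other mechanism you have not supplied. The paper sidesteps this entirely by working with the global MCoA potential, where a violating $f$ witnesses an \emph{average} deviation over pieces and the per-round progress has no $\eta_P$ factor; the $\gamma$-dependence then enters only through the reparametrization that sets the MCoA target accuracy, and it surfaces where the paper puts it, in $t$. Secondarily, your statement that ``the energy-decrease inequality is routine once the grid width and update rule are fixed'' undersells the re-merging analysis: a newly-formed sub-piece of $P$ can merge with a large pre-existing piece $Q$ that was never refined, and one has to check that the induced increase is still charged against $\eta_P$ (it is, since the increase is $\le \min(\eta_{P_i},\eta_Q)\cdot O(\epsilon^2) \le \eta_{P_i}\cdot O(\epsilon^2)$), but this needs to be said, as does the fact that the merged value stays inside the same grid cell so no cascade occurs.
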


We defer to Appendix~\ref{sec:mcpartitions} for the proof of Theorem~\ref{thm:mcpartition}, given that it follows from tweaking well-known algorithms from the multicalibration literature. 
Whenever we use the term \emph{low-complexity partition} in this paper, we are referring to a partition with the complexity parameters stated in Theorem~\ref{thm:mcpartition}.
For our results regarding the PAME theorem, we will require a more general multicalibration theorem; namely, one that is adapted to the multiclass case, where the function $g$ maps $\X$ to $\{0,1\}^{\ell}$.
We defer the presentation and discussion of the Multiclass Multicalibration Theorem to Section~\ref{sec:pame}.

\subsection{Hardness and indistinguishability notions}\label{sec:hardnessindist}

One of the contributions of our paper is to provide a complexity-theoretic perspective on the power of a multicalibrated partition.
Indeed, as we developed in the introduction, given the definition of $(\F, \epsilon)$-indistinguishability, the MC partition theorem is equivalent to stating that, given an arbitrary function $g$, we can find a low-complexity partition of the domain such that $g$ is indistinguishable from the constant function $v_P$ on each piece $P$ of the partition.

In order to illustrate why this is a powerful statement, in this section we relate indistinguishability from a constant function to Yao's lemma on the equivalence between pseudorandomness and unpredictability \cite{yao82}.
To do so, we formally define what it means for a function to be \emph{hard} with respect to a class of functions $\F$:

\begin{definition}[Hardness of a function]\label{def:hardnessfn}
    {\rm
    Given a class $\F$ of randomized functions $f\colon \X \rightarrow \{0,1\}^{\ell}$, a distribution $\D$ on $\X$, an arbitrary randomized function $g\colon \X \rightarrow \{0,1\}^{\ell}$, and $\delta > 0$, we say that $g$ is \emph{$(\F, \delta)$-hard} on $\D$ if, for all $f \in \F$,
    \[
        \Pr_{\substack{x \sim \D, \, \coins(f) \\ \coins(g)}}[f(x) = g(x)] \leq 1-\delta.
    \]
    }
\end{definition}

Note that here we consider randomized functions with discrete range rather than deterministic functions with range $[0,1]$.
When $\ell=1$, the maximal possible hardness occurs when $\delta = 1/2 - \epsilon$, given that being $(\F, 1/2-\epsilon)$-hard corresponds to stating that 
\[
    \Pr_{\substack{x \sim \D, \, \coins(f) \\ \coins(g)}}[f(x) = g(x)] \leq 1/2 + \epsilon.
\]
That is, no distinguisher in $\F$ can guess $g$ noticeably better than a random bit. 
This is why we sometimes refer to being $(\F, 1/2-\epsilon)$-hard as being $\epsilon$-\emph{strongly hard}, whereas being $(\F, \delta)$-hard is sometimes referred to as being $\delta$\emph{-weakly} hard.
In the case of Impagliazzo's Hardcore Lemma, the task is precisely to find a subset $H$ of the domain on which $g$ is $(\F, 1/2-\epsilon)$-hard, and hence maximally unpredictable.

Yao first showed a relationship between pseudorandomness (i.e., indistinguishability from a constant $1/2$ function $h$, for which $\Pr_{\coins(h)}[h^{\rand}(x) = 1] = \Pr_{\coins(h)}[h^{\rand}(x)=0] = 1/2$) and unpredictability:

\begin{lemma}[Equivalence between indistinguishability and pseudorandomness \cite{yao82}]\label{lemma:yao}
    Given a class of functions $\F$, a distribution $\D$ on $\X$ and $\epsilon>0$, a function $g: \X \rightarrow [0,1]$ such that $\E_{x \sim \D}[g(x)=1/2]$ is $(\F, \epsilon)$-indistinguishable on $\D$ from the constant $1/2$ function if and only if $g^{\rand}$ is $(\F^{\rand}, 1/2-2\epsilon)$-hard.
\end{lemma}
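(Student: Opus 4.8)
The plan is to reduce the whole equivalence to a single exact identity relating the ``advantage'' $\E_{x\sim\D}[f(x)(g(x)-1/2)]$ that governs indistinguishability from the constant $1/2$ function to the prediction probability $\Pr[f^{\rand}(x)=g^{\rand}(x)]$ that governs hardness. First, fixing $x$ and using that the coins of $f^{\rand}$ and $g^{\rand}$ are independent, I would write
\[
\Pr_{\coins(f^{\rand}),\,\coins(g^{\rand})}\big[f^{\rand}(x)=g^{\rand}(x)\big]= f(x)g(x)+(1-f(x))(1-g(x))=1-f(x)-g(x)+2f(x)g(x).
\]
Taking the expectation over $x\sim\D$ and substituting the hypothesis $\E_{x\sim\D}[g(x)]=1/2$, this rearranges to
\[
\Pr_{x\sim\D,\,\coins}\big[f^{\rand}(x)=g^{\rand}(x)\big]=\frac12+2\,\E_{x\sim\D}\!\left[f(x)\left(g(x)-\frac12\right)\right].
\]
Thus the prediction probability is exactly $1/2$ plus twice the quantity bounded by constant-$1/2$ indistinguishability, and this one identity is essentially the entire lemma; in particular it pinpoints where the factor $2$ in the hardness parameter $1/2-2\epsilon$ comes from.

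For the forward direction, assume $g$ is $(\F,\epsilon)$-indistinguishable from the constant $1/2$ function, i.e.\ $\big|\E_{x\sim\D}[f(x)(g(x)-1/2)]\big|\le\epsilon$ for all $f\in\F$. By the identity, $\Pr[f^{\rand}(x)=g^{\rand}(x)]\le 1/2+2\epsilon$ for all $f$, which is precisely the assertion that $g^{\rand}$ is $(\F^{\rand},1/2-2\epsilon)$-hard. (The identity also gives the matching lower bound $\Pr[f^{\rand}(x)=g^{\rand}(x)]\ge 1/2-2\epsilon$, which is automatic but not needed for one-sided hardness.)

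For the converse, the identity by itself only upper-bounds $\E[f(g-1/2)]$, whereas indistinguishability is a two-sided (absolute-value) condition, so I would close the gap using closure of $\F$ under negation: alongside each $f$, the complementary predictor (outputting $1-f^{\rand}$) is also available in $\F^{\rand}$. Since $\Pr[1-f^{\rand}(x)=g^{\rand}(x)]=1-\Pr[f^{\rand}(x)=g^{\rand}(x)]$, applying the hardness bound to both $f$ and its complement yields $1/2-2\epsilon\le\Pr[f^{\rand}(x)=g^{\rand}(x)]\le 1/2+2\epsilon$, hence $\big|\Pr[f^{\rand}(x)=g^{\rand}(x)]-1/2\big|\le 2\epsilon$; the identity then gives $\big|\E_{x\sim\D}[f(x)(g(x)-1/2)]\big|\le\epsilon$ for every $f\in\F$, i.e.\ $(\F,\epsilon)$-indistinguishability from the constant $1/2$ function.

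The one genuinely delicate point is exactly this asymmetry between the one-sided hardness definition and the two-sided indistinguishability definition: the converse direction requires that complements be present in $\F^{\rand}$ (either because $\F$ is assumed closed under negation, which is presumably why that closure notion is isolated in the preliminaries, or because this is built into the meaning of $\F^{\rand}$). Granting that, there is no slack anywhere --- every step is an exact equality --- so the parameters come out exactly as stated with no loss.
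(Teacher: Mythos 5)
Your proof is correct and follows the same route as the paper: your identity $\Pr[f^{\rand}(x)=g^{\rand}(x)]=1/2+2\,\E_{x\sim\D}[f(x)(g(x)-1/2)]$ is exactly the paper's Equation~(\ref{eq:yaoid}) after substituting the hypothesis $\E_{x\sim\D}[g(x)]=1/2$, which kills the extra $-\tfrac12\E[g-1/2]$ cross-term, so both are the same identity. You also correctly isolate the one subtlety the paper leaves implicit---the ``only if'' direction needs $\F^{\rand}$ to contain the complementary predictor $1-f^{\rand}$ in order to turn one-sided hardness into a two-sided advantage bound---and your use of $1-f$ (rather than $-f$) is the right way to do this for $[0,1]$-valued predictors.
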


In other words, stating that $g$ is strongly hard corresponds exactly to stating that $g$ is indistinguishable from a uniform random bit.
Lemma~\ref{lemma:yao} follows from the identity 
\begin{equation}\label{eq:yaoid}
    \Pr_{\substack{x \sim \D, \, \coins(f^{\rand}), \\ \coins(g^{\rand})}} [f^{\rand}(x) = g^{\rand}(x)] = 2\E_{x \sim \D}[(f(x)-1/2)(g(x)-1/2)] + 1/2.
\end{equation}

Next, we relate Yao's Lemma to the MC Theorem. For it, we need to introduce the definition of the density of a distribution and extend the notion of $(\F, \epsilon)$-indistinguishability to distributions:

\begin{definition}[$\delta$-dense distribution]\label{def:densitydist}
{\rm
A distribution $A$ is \emph{$\delta$-dense} in a distribution $B$ if for all $x \in \X$,
\[
    \delta \cdot \Pr[A=x] \leq \Pr[B=x].
\]
}
\end{definition}

If $B$ is the uniform distribution on $X$, then this becomes $\Pr[A=x] \leq 1/(\delta |\X|)$; i.e., a condition that is satisfied by the uniform distribution on any set of size at least $\delta |\X|$.

\begin{definition}[Indistinguishable distributions]\label{def:indistdist}
{\rm
Given a class $\F$ of functions $f\colon \X \rightarrow [0,1]$ and two distributions $\D_1$, $\D_2$ on $\X$, we say that $\D_1$ and $\D_2$ are \emph{$(\F, \epsilon)$-indistinguishable} if, for all $f \in \F$,
\[
   \Big|\E_{x \sim \D_1}[f(x)] - \E_{x \sim \D_2}[f(x)]\Big| \leq \epsilon.
\]
}
\end{definition}

\subsection{Characterizations of constant-Bernoulli functions}\label{sec:yao}

While Yao's lemma characterizes indistinguishability from the constant $1/2$ function, in a multicalibrated partition (Definition~\ref{def:mcpartition}) we have indistinguishability from the constant $v_P = \E_{x \sim \D\vert_P}[g(x)]$ function on each piece $P$, where $v_P$ can take any value in $[0,1]$.
Thus we seek to characterize functions $g: \X \rightarrow [0,1]$ that are indistinguishable from constant functions $h: \X \rightarrow [0,1]$ (i.e., there is  a $v \in [0,1]$ such that $h(x)=v$ for all $x \in \X$).
Note that such an $h$ represents a randomized function $h^{\rand}$ such that $h^{\rand}(x)$ is identically distributed to $\Bern(v)$ for all $x \in \X$, which we refer to as a \emph{constant-Bernoulli} function.
If $g$ is $(\F,\epsilon)$ indistinguishable from a constant-Bernoulli function $h(x)=v$ on distribution $\D$ and $\F$ contains the constant $1$ function, then we can assume that $v=\E_{x\sim \D}[g(x)]$ with a factor 2 change in $\epsilon$.  Thus we state the following lemma with this assumption, which is anyhow guaranteed to us by the MC Theorem as formulated in Theorem~\ref{thm:mcpartition}. 

\begin{lemma}[Characterizing indistinguishability from constant-Bernoulli functions]\label{thm:tfae}
    Let $\F$ be a class of functions $f: \X \rightarrow [0,1]$ closed under negation and such that $\mathbf{0}, \mathbf{1} \in \F$, let $\D$ be a distribution over the domain $\X$, and let $\epsilon>0$.
    Let $g: \X \rightarrow [0,1]$ be $(\F, \epsilon)$-indistinguishable from the constant function $v$ on $\D$, where $v = \E_{x \sim \D}[g(x)]$ and $b = \min\{v, 1-v\}$.
    Then, the following statements hold:
    \begin{enumerate}
        \item\label{item:bern} The distribution $(X, g^{\rand}(X))$ is $(\F', \epsilon)$-indistinguishable from the distribution $(X, \Bern(v))$, where $X \sim \D$ and $\F'$ is any class such that $\F'_{O(n), O(1)}\subseteq \F$. 
        \item\label{item:hard} The function $g^{\rand}$ is $(\F^{\rand}, b-2\epsilon)$-hard on $\D$. 
        \item\label{item:0s1s} The distribution $\D \vert_{g^{\rand}(x)=1}$ is $\big(\F', \frac{\epsilon}{v(1-v)}\big)$-indistinguishable from $\D \vert_{g^{\rand}(x)=0}$ for any class $\F'$ such that $\F'_{c\log|\X|, c} \subseteq \F$, where $c$ is a universal constant.
        \item\label{item:dist2b} Assuming $g$ is Boolean (i.e., $g(x) \in \{0,1\}$ for all $x \in \X$), there exists a distribution of density $2b$ in $\D$ such that $g$ is $\big(\F^{\rand}, 1/2-\frac{\epsilon}{2v(1-v)}\big)$-hard on it.
    \end{enumerate}
\end{lemma}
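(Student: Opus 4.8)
The plan is to dispatch the four parts one at a time, in each case squeezing everything out of the single hypothesis that $\bigl|\E_{x\sim\D}[f(x)(g(x)-v)]\bigr|\le\epsilon$ for every $f\in\F$, together with the mean-zero fact $\E_{x\sim\D}[g(x)-v]=0$, and then checking that whatever auxiliary distinguisher appears along the way lies in $\F$ by the stated complexity containment. For Part~\ref{item:bern}, I would rewrite the distinguishing advantage of a test $f'\colon\X\times\zo\to[0,1]$ as
\[
\E[f'(X,g^{\rand}(X))]-\E[f'(X,\Bern(v))]=\E_{x\sim\D}\bigl[(g(x)-v)\cdot(f'(x,1)-f'(x,0))\bigr],
\]
observe that $\phi(x):=f'(x,1)-f'(x,0)$ is computable from $f'$ with two oracle calls and $O(n)$-size arithmetic, hence $\phi\in\F'_{O(n),O(1)}\subseteq\F$, and apply the hypothesis to $\phi$ to bound the advantage by $\epsilon$. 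The only subtlety is to bundle the whole difference into the single function $\phi$ rather than treating $f'(\cdot,1)$ and $f'(\cdot,0)$ as two separate members of $\F$, which would cost a spurious factor of $2$.

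For Part~\ref{item:hard}, independence of the coins gives $\Pr[f^{\rand}(x)=g^{\rand}(x)]=f(x)g(x)+(1-f(x))(1-g(x))$, and expanding, substituting $g=v+(g-v)$, and using $\E_\D[g-v]=0$ yields (generalizing~\eqref{eq:yaoid})
\[
\Pr_{x\sim\D,\;\coins}[f^{\rand}(x)=g^{\rand}(x)]=1-v+\E_\D[f]\cdot(2v-1)+2\,\E_\D[f(g-v)].
\]
Since $\E_\D[f]\in[0,1]$ and $|\E_\D[f(g-v)]|\le\epsilon$, the right side is at most $1-v+\max\{0,2v-1\}+2\epsilon$, and $1-v+\max\{0,2v-1\}=1-\min\{v,1-v\}=1-b$, giving exactly $(\F^{\rand},b-2\epsilon)$-hardness.

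For Parts~\ref{item:0s1s} and~\ref{item:dist2b}, put $\D_1=\D|_{g^{\rand}(x)=1}$ and $\D_0=\D|_{g^{\rand}(x)=0}$, so $\Pr_{\D_1}[x]=\D(x)g(x)/v$ and $\Pr_{\D_0}[x]=\D(x)(1-g(x))/(1-v)$ (all of this is vacuous when $v\in\{0,1\}$). For $f'\in\F$, a short computation that again exploits $\E_\D[g-v]=0$ collapses the difference of expectations to the clean identity $\E_{\D_1}[f']-\E_{\D_0}[f']=\tfrac{1}{v(1-v)}\E_\D[f'(g-v)]$, whose magnitude is at most $\epsilon/(v(1-v))$; that is Part~\ref{item:0s1s}. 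For Part~\ref{item:dist2b} (with $g$ Boolean, so $g^{\rand}=g$ and $\D_1,\D_0$ are just $\D$ conditioned on $g=1$ and $g=0$), I would take the hardcore distribution $A:=\tfrac12\D_1+\tfrac12\D_0$. One checks that $A$ is $2b$-dense in $\D$ — it puts mass $\D(x)/(2v)$ on $g^{-1}(1)$ and mass $\D(x)/(2(1-v))$ on $g^{-1}(0)$, and $2b=2\min\{v,1-v\}$ forces $2b\cdot\Pr[A=x]\le\D(x)$ in both cases — and for any $f\in\F$,
\[
\Pr_{x\sim A,\;\coins}[f^{\rand}(x)=g(x)]=\tfrac12+\tfrac12\bigl(\E_{\D_1}[f]-\E_{\D_0}[f]\bigr)\le\tfrac12+\frac{\epsilon}{2v(1-v)}
\]
by the identity just proved, which is the claimed $\bigl(\F^{\rand},1/2-\tfrac{\epsilon}{2v(1-v)}\bigr)$-hardness on $A$.

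None of these calculations is long once the right object is in hand, so I expect the main nuisance to be bookkeeping rather than a genuine obstacle: one has to confirm that $\phi$ in Part~\ref{item:bern}, the tests against $\D_1,\D_0$ in Part~\ref{item:0s1s}, and the predictors on $A$ in Part~\ref{item:dist2b} really do land inside $\F$ under the stated $\F'_{\cdot,\cdot}\subseteq\F$ hypotheses, and one has to track where the mean-zero cancellation is invoked so no stray constant factor slips in (notably in Part~\ref{item:bern}). The only step that takes an actual idea rather than a calculation is the choice $A=\tfrac12\D_1+\tfrac12\D_0$ in Part~\ref{item:dist2b}: re-balancing the two level sets of $g$ is exactly what simultaneously forces density $2b$ and reduces the hardness of $g$ on $A$ to the indistinguishability of $\D_1$ from $\D_0$ established in Part~\ref{item:0s1s}.
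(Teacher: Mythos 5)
Your proof is correct and takes essentially the same route as the paper's: the identity in Part~\ref{item:bern} matches the paper's construction of the auxiliary test $f''(x)=f'(x,1)-f'(x,0)$, the expansion in Part~\ref{item:hard} is the paper's (the paper does WLOG $v\le 1/2$ where you handle both cases via $\max\{0,2v-1\}$, but these are equivalent), the identity in Part~\ref{item:0s1s} is the same, and your distribution $A=\tfrac12\D_1+\tfrac12\D_0$ in Part~\ref{item:dist2b} is exactly the paper's $\mu$ written in a slightly different form, with the same density and hardness analysis.
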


\smallskip
\emph{Why Lemma~\ref{thm:tfae} is central to our $++$ theorems.}
As we mentioned in the introduction, we prove our $++$ theorems as consequences of Lemma~\ref{thm:tfae}.
First, Lemma~\ref{thm:tfae} is applicable to our setting because of our complexity-theoretic recasting of the definition of a multicalibrated partition.
Namely, a multicalibrated partition $\Pa = \{P\}$ for $g, \F, \D, \epsilon>0$ is such that for each (large enough) $P \in \Pa$, the function $g$ is $(\F, \epsilon)$-indistinguishable from the constant function $v = \E_{\D}[g(x)]$. 
Hence, each such piece $P \in \Pa$ satisfies the assumption of Lemma~\ref{thm:tfae}. 
We then use the statements of Lemma~\ref{thm:tfae} as follows:
\begin{enumerate}
    \item The proof of IHCL$++$ follows from statement (\ref{item:dist2b}).
    Essentially, the distribution given by statement (\ref{item:dist2b}) corresponds to a ``small'' hardcore set contained within each piece of the partition.
    \item The proof of PAME$++$ follows from statement (\ref{item:bern}).
    Roughly, when showing the existence of a distribution that has high average min-entropy and that is indistinguishable from $B\vert_P$ (as required by the definition of PAME), we use a Bernoulli distribution with parameter $v_P$.
    \item The proof of DMT$++$ follows from statement (\ref{item:0s1s}).
    While there is no function $g$ in the statement of the DMT, we define $g$ precisely as the characteristic function of the set $S$. 
    Then, the fact that $\D\vert_{g^{-1}(0)}$ and $\D\vert_{g^{-1}(1)}$ are indistinguishable allows us to argue that the sets $S \cap P$ and $U \cap P$ are indistinguishable with respect to $\F$, which in turn shows that $U \cap P$ is a model for the corresponding set $S \cap P$.
\end{enumerate}

\begin{proof}[Proof of Lemma~\ref{thm:tfae}]
We prove each of the implications separately.

(1.)
Given a class of functions $\F' = \{f': \X \times \{0,1\} \rightarrow [0,1] \}$, we construct a class of functions $\F'' = \{f'': \X \rightarrow [0,1]\}$ as follows.
For each $f' \in \F'$, we add the function $f''(x) = f'(x,1) - f'(x, 0)$ and its negation to $\F''$.
Note that $\F'' \subseteq \F'_{cn, c} \subseteq \F$.
Then,
\begin{align*}
    \E_{\substack{x \sim \D, \\ b \sim \Bern(v), \\ \coins(g^{\rand})}}[f'(x, g^{\rand}(x)) - f'(x, b)] &= 
    \E_{x \sim \D}[g(x)f'(x, 1) + (1-g(x))f'(x, 0) - v f'(x, 1) - (1-v) f'(x, 0)] \\
    &= \E_{x \sim \D}[f''(x)(g(x)-v)].
\end{align*}

(2.) Assume without loss of generality that $b = v \leq 1/2$; otherwise, we can replace $f$ and $g$ by $-f$ and $-g$ respectively in the argument below.
Then,
\begin{align*}
    \Pr_{\substack{x \sim \D, \, \coins(f^{\rand}), \\ \coins(g^{\rand})}} [f^{\rand}(x) = g^{\rand}(x)] 
    &= \E_{x \sim \D}[f(x)g(x) + (1-f(x))(1-g(x))] \\
    &= \E_{x \sim \D}[(2f(x)-1)(g(x)-v) + 1-v - (1-2v)f(x)] \\[0.3cm]
    &\leq 1 - v + 2\epsilon,
\end{align*}
given that $v \leq 1/2$ implies $(1-2v)f(x) \geq 0$ and that, by assumption, $\big| \E_{\D}[f(x) \cdot (g(x) - v)] \big| \leq \epsilon$.

(3.) By the assumption on $g$, it follows that
\begin{align*}
    \Big| \E_{x \sim \D}[f(x) \cdot (g(x) - v)] \Big| & = 
    \Big| \E_{\substack{x \sim \D, \\ \coins(g^{\rand})}}[f(x) \cdot (g^{\rand}(x)-v)] \Big| \\
    &= \Big| v \cdot \E_{\substack{x \sim \D\vert_{g^{\rand}(x)=1}, \\ \coins(g^{\rand})}}[f(x) \cdot (1-v)] + (1-v) \cdot \E_{\substack{x \sim \D\vert_{g^{\rand}(x)=0}, \\ \coins(g^{\rand})}}[f(x) \cdot (-v)] \Big|
\\[0.2cm]
    & = v \cdot (1-v) \cdot \Big| \E_{\substack{x \sim \D\vert_{g^{\rand}(x)=1}, \\ \coins(g^{\rand})}}[f(x)] - \E_{\substack{x \sim \D\vert_{g^{\rand}(x)=0}, \\ \coins(g^{\rand})}}[f(x)]\Big| \leq \epsilon.
\end{align*}
Therefore, the distributions $\D\vert_{g^{\rand}(x)=1}$ and $\D\vert_{g^{\rand}(x)=0}$ are $\big(\F, \frac{\epsilon}{v(1-v)}\big)$-indistinguishable.

(4.)
In light of Yao's Lemma, the idea for showing this implication is to define a probability distribution $\mu$ such that $\E_{x \sim \mu}[g(x)] = 1/2$. 
That is, given that $g$ has expected value $v$ when sampling according to distribution $\D$, we want to ``shift'' $v$ back to $1/2$ when sampling according to distribution $\mu$. 
Intuitively, if $v>1/2$, then we should add more weight to the points $x$ in the domain such that $g(x)=0$, and viceversa if $v \leq 1/2$.

We can do this boosting of the minority values by defining $\mu$ as follows:
\[
\mu(x) = 
    \begin{cases} 
      \dfrac{1}{2v} \cdot \D(x) & \text{ if } g(x) = 1, \\[0.4cm]
      \dfrac{1}{2(1-v)} \cdot \D(x) & \text{ if } g(x) = 0.
   \end{cases}
\]
It is direct to check that this is indeed a probability distribution.
Next, we show that the expected value of $g$ when sampling according to $\mu$ is indeed $1/2$.
Let $G^0 = \{x \in \X \mid g(x) = 0\}$ and $G^1 = \{x \in \X \mid g(x) = 1\}$. 
Then, given that $\E_{\D}[g(x)] = \sum_{x \in G^1} \D(x) = v$, it follows that
\[
    \E_{x \sim \mu}[g(x)] = \sum_{x \in \X} \mu(x) \cdot g(x) = \sum_{x \in G^1} \mu(x) = \sum_{x \in G^1} \dfrac{1}{2v} \cdot \D(x) = 
    \dfrac{1}{2v} \sum_{x \in G^1} \D(x) = 
    \dfrac{1}{2v} \cdot v = 1/2.
\]
Next, we show that $\mu$ has density $2b$ in $\D$. 
This follows directly by our construction of $\mu$, the definition of density for distributions (Definition~\ref{def:densitydist}) and the definition of $b$ as $b=\min\{v, 1-v\}$: If $x \in g^{-1}(1)$, then $2b \cdot \mu(x) \leq \D(x)$ because $\mu(x) = \frac{1}{2v} \cdot \D(x)$ and $b \leq v$.
If $x \in g^{-1}(0)$, then  $2b \cdot \mu(x) \leq \D(x)$ as well because $\mu(x) = \frac{1}{2(1-v)} \cdot \D(x)$ and $b \leq 1-v$.

Lastly, we show that $g$ is $(\F, 1/2 - \epsilon'/2)$-hard on $\mu$, where $\epsilon' = \epsilon/(v\cdot(1-v))$.
By statement \ref{item:0s1s}, we know that the distributions $\D\vert_{G^1}$ and $\D\vert_{G^0}$ are $(\F, \epsilon')$-indistinguishable. 
Let $\mu_0$ correspond to the restriction of $\mu$ on the domain $G^0$, and let $\mu_1$ correspond to the restriction of $\mu$ on the domain $G^1$. 
By the definition of $\mu$, it follows that
\begin{align*}
    \Pr_{\substack{x \sim \mu, \coins(f^{\rand}), \\ \coins(g^{\rand})}}[f(x) = g(x)] & = \dfrac{1}{2} \Pr_{\substack{x \sim \mu_1, \\ \coins(f^{\rand})}}[f(x)=1] + \dfrac{1}{2} \Pr_{\substack{x \sim \mu_0, \\ \coins(f^{\rand})}}[f(x) = 0]
\\[0.2cm]
  &  = \dfrac{1}{2} + \dfrac{1}{2} \cdot \Big(\E_{x \sim \mu_1}[f(x)] - \E_{x \sim \mu_0}[f(x)]\Big) \leq \dfrac{1}{2} + \dfrac{\epsilon'}{2},
\end{align*}
and hence $g$ is $(\F^{\rand}, 1/2-\epsilon'/2)$-hard on $\mu$, as we wanted to show.
We remark that statement (\ref{item:dist2b}) need not be restricted to boolean functions; in the case of a non-boolean function $g$ we can show it using joint distributions instead.
However, the boolean case suffices for our applications of statement (\ref{item:dist2b}) in this paper.
\end{proof}

\begin{remark}
    By the definition of $b$, it follows that $b/2 \leq v(1-v) \leq b$. 
    Therefore, statement (\ref{item:dist2b}) in Lemma~\ref{thm:tfae} implies that $g$ is $(\F^{\rand}, 1/2-\epsilon/b)$-hard.
\end{remark}

\textbf{Indistinguishability from a constant-Bernoulli function is stronger than average-case hardness.}
Lemma~\ref{thm:tfae} states unidirectional relationships, namely that statements (\ref{item:bern})-(\ref{item:dist2b}) are \emph{implied} by $g$ being indistinguishable from a constant-Bernoulli function. 
It is natural to ask whether the converses also hold, as it is the case in Yao's Lemma (Lemma~\ref{lemma:yao}).
As it can be seen from the proofs of statements (\ref{item:bern}) and (\ref{item:0s1s}) in Lemma~\ref{thm:tfae}, these two statements do imply that $g$ is indistinguishable from a constant-Bernoulli function.
However, this is \emph{not} the case for statements (\ref{item:hard}), and (\ref{item:dist2b}), which are \emph{weaker} than being indistinguishable from a constant-Bernoulli function.

The fact that statement (\ref{item:hard}) does not imply the assumption of Lemma~\ref{thm:tfae}, even when we allow large changes in $\F$ and $\epsilon$ can be shown with the following counter-example. 
Let $\D$ be the uniform distribution on $\X = \{0,1\}^n$ and let $\F$ be all circuits of size $n^c$.
Let $g$ be a random function where $\Pr[g(x)=1] = 3/4$ if $x_1 = 0$ and $\Pr[g(x)=1] = 1$ if $x_1 = 1$, where $x_1$ denotes the first bit of $x$. 
Then, by Chernoff and union bounds it can be shown that with high probability, $v \geq 7/8 - 2^{-\Omega(n)}$ and $g^{\rand}$ is $\big(v - 2^{-\Omega(n)}\big)$-hard against circuits of size $2^{\Omega(n)}$.
On the other hand, the distinguisher $f(x)=x_1$ has 
\[
    \E_{x \sim \D}[f(x)(g(x)-7/8)] = 1/2 - 7/16 = 1/16,
\]
so $g$ is not $\epsilon$-indistinguishable from the constant function $v$ for any $\epsilon < 1/16$ and circuits of size $O(1)$.
We note that the fact that statement (\ref{item:hard}) does not imply indistinguishability from a constant-Bernoulli function is in contrast to Yao's Lemma, which states that (\ref{item:hard}) $\iff$ (\ref{item:dist2b}) in the special case where $v = 1/2$.

Lastly, the reason why (\ref{item:dist2b}) is weaker than the assumption of Lemma~\ref{thm:tfae} is because (\ref{item:hard}) $\implies$ (\ref{item:dist2b}) by IHCL (given that statement (\ref{item:hard}) corresponds to the assumption of IHCL and statement (\ref{item:dist2b}) to the conclusion of IHCL), and so it follows that (\ref{item:dist2b}) does \emph{not} imply that $g$ is a constant-Bernoulli function.

\section{The Hardcore Lemma}\label{sec:ihcl}

Impagliazzo's Hardcore Lemma (IHCL) is a fundamental result in complexity theory that dates to 1995 \cite{imp95}. 
Informally, it states that if a function is somewhat hard to compute on average by a family $\F$ of boolean functions, then there is a fairly large subset of the inputs (called the ``hardcore set'') for which the function is very hard to compute, in the sense that $g$ is maximally unpredictable to the family $\F$.
That is, no distinguisher can do better than random guessing.

When we state the theorem formally, we will see that the ``somewhat hard'' assumption and the ``very hard'' conclusion are with respect to slightly different families $\F$. 
In particular, the former family is an enlarged class of distinguishers than the latter family.

\subsection{The original IHCL statement}

Impagliazzo's Hardcore Lemma can be formally stated as follows:

\begin{theorem}[IHCL, \cite{imp95, hol05}]\label{thm:ihcl}
Let $\F$ be a family of functions $f: \X \rightarrow \{0,1\}$, let $\D$ be a probability distribution over $\X$, and let $\epsilon, \delta > 0$. 
There exists $t = \mathrm{poly}(\log|\X|, 1/\epsilon, 1/\delta)$
and $q = \emph{poly}(1/\epsilon, 1/\delta)$ 
such that the following holds: If $g: \X \rightarrow \{0,1\}$ is $(\F_{t, q}, \delta)$-hard on $\D$, then there is a distribution $\Ha$ that is $2\delta$-dense in $\D$ and for which $g$ is $(\F, 1/2-\epsilon)$-hard on $\Ha$.
\end{theorem}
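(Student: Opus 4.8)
The plan is to derive IHCL as a corollary of the Multicalibration Theorem (Theorem~\ref{thm:mcpartition}) together with Lemma~\ref{thm:tfae}, following the ``recovery'' strategy sketched in the introduction. First I would apply Theorem~\ref{thm:mcpartition} to the function $g$, the class $\F$, the distribution $\D$, and parameters $\epsilon' \ll \epsilon$ and $\gamma \ll \epsilon'$ chosen later, obtaining an $(\F, \epsilon', \gamma)$-approximately multicalibrated partition $\Pa = \{P\}$ of $\X$ with $k = O(1/\epsilon')$ pieces and complexity $(t,q,k)$ with $t = \poly(\log|\X|, 1/\epsilon', 1/\gamma)$ and $q = O(1/\epsilon'^2)$. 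On each piece $P$ of size $\eta_P \geq \gamma$, the MC condition says exactly that $g$ is $(\F,\epsilon')$-indistinguishable (on $\D|_P$) from the constant function $v_P = \E_{\D|_P}[g]$, so Lemma~\ref{thm:tfae}\eqref{item:dist2b} applies: there is a distribution $\mu_P$ of density $2b_P$ in $\D|_P$ (where $b_P = \min\{v_P, 1-v_P\}$) on which $g$ is $(\F^\rand, 1/2 - \epsilon'/(2v_P(1-v_P)))$-hard, hence (by the remark after the lemma) $(\F, 1/2 - \epsilon'/b_P)$-hard.

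The next step is to \emph{glue} the local hardcore distributions into a global one. I would define the distribution $\Ha$ by first sampling a piece $P \sim \Pa(\D)$ restricted to the ``good'' pieces (those with $\eta_P \geq \gamma$ and $b_P$ not too tiny, say $b_P \geq \beta$ for a threshold $\beta$ to be tuned), reweighted appropriately, and then sampling $x \sim \mu_P$. A distinguisher that contradicts hardness of $g$ on $\Ha$ can be converted into one that contradicts hardness on some $\mu_P$, \emph{after} hardwiring the piece index — this is where the low complexity of $\Pa$ is used: the partition membership function $\hat f \in \F_{t,q}$ lets a circuit determine which $P$ contains a given $x$, and then it can apply the (hardwired, one per piece) good distinguisher for that piece. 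So a global distinguisher with advantage $\alpha$ on $\Ha$ yields, by averaging, a piece $P$ and distinguisher with advantage $\geq \alpha$ on $\mu_P$; taking $\alpha = \epsilon/2$ and $\epsilon'$ small enough (so that $\epsilon'/b_P < \epsilon/2$ on good pieces, using $b_P \geq \beta$), this contradicts what Lemma~\ref{thm:tfae} gave us, provided the resulting circuit still lies in $\F_{t,q}$ — which it does if we chose $t,q$ in the MC application large enough to absorb the $O(1/\epsilon')$ hardwired distinguishers and the partition membership circuit.

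It remains to show $\Ha$ has density at least $2\delta$ in $\D$. The mass of $\Ha$ inside $\D$ is, up to the contributions of discarded pieces, $\sum_{P \text{ good}} \eta_P \cdot 2 b_P$. So I must lower-bound $\E_{P \sim \Pa(\D)}[b_P]$ by $\delta$ (minus lower-order terms from the discarded small/imbalanced pieces, which total $O(\gamma/\epsilon' + \beta)$ and are negligible for suitable parameter choices). This is the crux: if instead $\E_P[b_P] < \delta$, then the predictor that computes $\hat f(x)$ to find $x$'s piece $P$ and outputs the majority value $\mathbbm{1}[v_P > 1/2]$ (hardwiring one bit per piece) would agree with $g$ with probability $1 - \E_P[b_P] > 1-\delta$, and this predictor lies in $\F_{t,q}$ — contradicting the hypothesis that $g$ is $(\F_{t,q}, \delta)$-hard. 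Finally, I would choose the theorem's $t$ and $q$ to dominate the circuit complexity of both this majority-predictor and the distinguisher-gluing circuit above, which are all $\poly(\log|\X|, 1/\epsilon, 1/\delta)$ since $\epsilon' = \Theta(\epsilon)$, $\beta = \Theta(\delta)$, $\gamma = \Theta(\epsilon\delta)$, and there are $O(1/\epsilon)$ pieces. The main obstacle is bookkeeping the parameter dependencies so that (i) the hardness loss $\epsilon'/b_P$ on good pieces stays below $\epsilon$, (ii) the discarded mass stays below an $O(\epsilon + \delta)$-fraction — actually we need the density bound to be exactly $2\delta$, so I would first prove density $\geq 2\delta - O(\epsilon)$ for a hardcore \emph{distribution} and then note, as the introduction remarks, that a standard probabilistic argument~\cite{imp95} (or reabsorbing the slack by shrinking $\epsilon$ at the outset) upgrades this to a hardcore object of density exactly $2\delta$ — and (iii) all auxiliary circuits fit within $\F_{t,q}$ for the declared polynomial $t,q$.
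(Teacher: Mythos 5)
Your plan follows the paper's own route exactly --- apply the MC Theorem, invoke Lemma~\ref{thm:tfae}(\ref{item:dist2b}) on each large piece, lower-bound $\E_P[b_P]$ via the majority predictor $f_{\post}\circ\hat{f}\in\F_{t+k,q}$, glue the local hardcore distributions on the good pieces, and then absorb the $O(\epsilon)$ slack. However, the parameter choices you announce at the end are internally inconsistent and would break the hardness conclusion. You correctly state the constraint you need: $\epsilon'/b_P<\epsilon/2$ on good pieces, using $b_P\geq\beta$. But then you set $\epsilon'=\Theta(\epsilon)$ and $\beta=\Theta(\delta)$, which gives $\epsilon'/\beta=\Theta(\epsilon/\delta)$, not $O(\epsilon)$. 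Similarly your $\gamma=\Theta(\epsilon\delta)$ combined with $k=O(1/\epsilon')=O(1/\epsilon)$ pieces makes the discarded mass $\gamma k=\Theta(\delta)$, which wipes out the entire $\delta$ margin coming from $\E_P[b_P]\geq\delta$. The paper takes $\epsilon'=\epsilon^2\delta$, $\gamma=\epsilon\epsilon'=\epsilon^3\delta$, and $\tau=\epsilon\delta$ precisely so that $\epsilon'/\tau=\epsilon$ and $\gamma k=O(\epsilon)$; with your exponents, both bounds fail. This is not cosmetic bookkeeping: without the extra factor of $\delta$ in $\epsilon'$, the hardness loss on the glued distribution is $\Theta(\epsilon/\delta)$, not $\epsilon$, and the density bound degrades by a constant fraction of $\delta$.

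Two smaller points. First, the ``distinguisher conversion'' step you describe in the hardness argument (hardwiring piece indices, invoking $\hat f$, assembling a new circuit) is unnecessary and somewhat confused: the same fixed $f\in\F$ is used on every piece, and the hardness of $g$ on $\Ha$ follows by directly averaging $\Pr_{x\sim\Ha_P}[f(x)=g(x)]\leq 1/2+\epsilon'/b_P$ over $P$. The low complexity of the partition is needed only in the density step (to place the majority predictor in $\F_{t+k,q}$), which you do identify correctly. Second, saying the good pieces are ``reweighted appropriately'' leaves a real gap: to get density $2\E_P[b_P]$ from pieces of local density $2b_P$, the gluing must sample $P$ with probability proportional to $\eta_P\cdot b_P$ (not merely $\eta_P$). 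Sampling by $\eta_P$ alone would only give density $2\min_P b_P = \Theta(\tau)$, which is far too weak.
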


That is, if $g$ is weakly hard on $\D$ with respect to the class $\F_{t, q}$, then we can find a distribution $\Ha$ on which $g$ is strongly hard with respect to the smaller class $\F$ of distinguishers.
Moreover, while the conclusion of the IHLC Theorem deals with hardcore \emph{distributions}, it is possible to convert a hardcore distribution into a hardcore set \cite{imp95, ks03}, as we review in Appendix~\ref{sec:setihcl}.
An important consideration about IHCL which will be key to our proposed IHCL$++$ is that the statement is proving two different things about the distribution $\Ha$:
\begin{itemize}
    \item \textbf{Density.} 
    The distribution $\Ha$ is $2 \delta$-dense in $\D$. (We think of $\delta$ as a small constant, like 0.1.)
    The original theorem shown by Impagliazzo \cite{imp95} finds a hardcore density of density $\delta$, rather than $2 \delta$.
    This difference is important because $2 \delta$ is the \textit{optimal} density parameter for the hardcore distribution. 
    This is because if there exists a hardcore distribution for $g$ of density $\rho$, then $g$ is $\big(\rho (1/2 - \epsilon) \big)$-weakly hard on average on $\D$ with respect to $\F$. 
    It took 10 years for Holenstein to prove that we can indeed achieve the optimal $2 \delta$ density parameter \cite{hol05}. 
    However, we note that Trevisan et al.'s proof of IHCL using the Regularity Lemma is only able to recover the original $\delta$-density parameter, but not Holenstein's optimal $2\delta$-density parameter \cite{ttv09}. 
    \item \textbf{Indistinguishability.} 
    When we sample according to $\Ha$, $g$ is $(\F, \epsilon)$-strongly hard. 
    We call this the ``indistinguishability'' condition because Yao's Lemma (Lemma~\ref{lemma:yao}) tells us that this is equivalent to stating that $g$ behaves like a random function to the class of distinguishers $\F$.
\end{itemize}

\subsection{The IHCL$++$ Theorem}\label{sec:ihcl++}

For the ease of notation, recall the definitions of $v_P$ and $b_P$ (which we call the \emph{balance} of function $g$) from Definition~\ref{def:vpbp}.
From Section~\ref{sec:prelims}, recall that we also need to consider the size parameter $\eta_P = \Pr_{x \sim \D}[x \in P]$ of each $P \in \Pa$ (Definition~\ref{def:densityparam}). 
Because we are using the notion of approximate multicalibration, we will only be considering the sets $P \in \Pa$ such that $\eta_P \geq \gamma$. 

We can now introduce our IHCL$++$ statement:

\newpage

\begin{theorem}[IHCL$++$]\label{thm:ihcl++}

Let $\X$ be a finite domain, let $\F$ be a family of functions $f\colon \X \rightarrow [0,1]$, let $g\colon \X \rightarrow [0, 1]$ be an arbitrary function, $\D$ a probability distribution over $\X$, and let $\epsilon, \gamma > 0$.
There exists a partition $\Pa \in \F_{t, q, k}$ of $\X$ with $t = O(1/(\epsilon^4 \gamma) \cdot \log(|\X|/\epsilon))$, $q = O(1/\epsilon^2)$, $k = O(1/\epsilon)$ which satisfies that for all $P \in \Pa$ such that $\eta_P \geq \gamma$, 
there exists a distribution $\Ha_P$ in $P$ of density $2b_P$ in $\D\vert_P$ such that $g^{\rand}$ is $\big(\F^{\rand}, 
1/2-\frac{\epsilon}{2b_P(1-b_P)}\big)$-hard
on $\Ha_P$. 
\end{theorem}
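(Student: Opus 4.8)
The plan is to derive Theorem~\ref{thm:ihcl++} as an essentially immediate consequence of the Multicalibration Theorem (Theorem~\ref{thm:mcpartition}) together with part~(\ref{item:dist2b}) of Lemma~\ref{thm:tfae}. First I would invoke Theorem~\ref{thm:mcpartition} with the given $\F$, $g$, $\D$, $\epsilon$, and $\gamma$ to obtain an $(\F,\epsilon,\gamma)$-approximately multicalibrated partition $\Pa \in \F_{t,q,k}$ with exactly the claimed parameters $t = O(1/(\epsilon^4\gamma)\cdot\log(|\X|/\epsilon))$, $q = O(1/\epsilon^2)$, $k = O(1/\epsilon)$. This already delivers the structural part of the conclusion; it remains only to produce the local hardcore distributions $\Ha_P$.

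Next, fix any piece $P \in \Pa$ with $\eta_P \geq \gamma$. By Definition~\ref{def:mcpartition}, for every $f \in \F$ we have $|\E_{x\sim\D|_P}[f(x)\cdot(g(x)-v_P)]| \leq \epsilon$ with $v_P = \E_{x\sim\D|_P}[g(x)]$; equivalently, $g$ is $(\F,\epsilon)$-indistinguishable from the constant function $v_P$ on $\D|_P$, and $v_P$ is precisely the $\D|_P$-expectation of $g$. This is exactly the hypothesis of Lemma~\ref{thm:tfae} applied to $g$, the distribution $\D|_P$, and the parameter $\epsilon$ --- and here the two side conditions of that lemma ($\F$ closed under negation, $\mathbf{0},\mathbf{1}\in\F$) are not needed for part~(\ref{item:dist2b}), since its proof (via part~(\ref{item:0s1s})) only uses the two-sided bound above, which the absolute value in the multicalibration condition already supplies, plus the identity $v_P=\E_{\D|_P}[g]$, which the MC Theorem gives us for free. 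Applying part~(\ref{item:dist2b}) of Lemma~\ref{thm:tfae} then yields a distribution $\Ha_P$ of density $2b_P$ in $\D|_P$ --- supported inside $P$, being a reweighting of $\D|_P$ --- such that $g^{\rand}$ is $\big(\F^{\rand},\, 1/2 - \epsilon/(2v_P(1-v_P))\big)$-hard on $\Ha_P$.

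Finally I would record the bookkeeping identity $b_P(1-b_P) = v_P(1-v_P)$: since $b_P = \min\{v_P,1-v_P\}$, the pairs $\{b_P,1-b_P\}$ and $\{v_P,1-v_P\}$ coincide, so the products agree. Substituting gives that $g^{\rand}$ is $\big(\F^{\rand},\, 1/2 - \epsilon/(2b_P(1-b_P))\big)$-hard on $\Ha_P$, which is precisely the stated conclusion. For a fractional (non-Boolean) $g$ one instead uses the joint-distribution version of part~(\ref{item:dist2b}) noted in the remark following the proof of Lemma~\ref{thm:tfae}, reweighting the joint distribution of $(x,g^{\rand}(x))$ in place of $\D|_P$ itself; the rest of the argument is unchanged.

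There is no genuine obstacle here once Lemma~\ref{thm:tfae} and the MC Theorem are in hand --- the mathematical content of IHCL$++$ resides in those two ingredients, and the present theorem is the packaging of ``apply MC, then apply Lemma~\ref{thm:tfae}(\ref{item:dist2b}) piece by piece.'' The only points needing care are (i) checking that the ``approximately'' in approximate multicalibration (the $\gamma$-threshold on $\eta_P$) is harmless --- which it is, since we only assert the conclusion for pieces with $\eta_P \geq \gamma$ --- and (ii) the small translation $b_P(1-b_P)=v_P(1-v_P)$ that converts the hardness bound produced by Lemma~\ref{thm:tfae} into the balance-parameter form stated in the theorem.
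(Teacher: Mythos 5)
Your proof is correct and takes essentially the same route as the paper's: apply Theorem~\ref{thm:mcpartition} to $(\F,g,\D,\epsilon,\gamma)$, then apply Lemma~\ref{thm:tfae}(\ref{item:dist2b}) piece by piece to each $P$ with $\eta_P \ge \gamma$, using $b_P(1-b_P)=v_P(1-v_P)$ to put the hardness parameter in the stated form. Your two added observations --- that the side hypotheses of Lemma~\ref{thm:tfae} ($\F$ closed under negation, $\mathbf{0},\mathbf{1}\in\F$) are not actually invoked in the chain $(\ref{item:0s1s})\Rightarrow(\ref{item:dist2b})$ because the MC guarantee already supplies the two-sided bound and $v_P=\E_{\D|_P}[g]$, and that fractional $g$ is handled by the joint-distribution variant noted after Lemma~\ref{thm:tfae} --- are both accurate and, if anything, make the argument slightly more careful than the version written in the paper.
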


\smallskip
\textbf{Interpretation of IHCL$++$.} Before going into the proof of Theorem~\ref{thm:ihcl++}, we explain what our IHCL$++$ theorem entails and how it is a stronger and more general version of the original IHCL.

\begin{itemize}
    \item In Theorem~\ref{thm:ihcl++}, we remove the $\delta$-weakly hard assumption from the original IHCL theorem, but still obtain that $g$ is strongly hard on some distribution. 
    The caveat is that the lower bound on the density of each hardcore distribution $\Ha_p$ depends on the balance $b_P$ of $g$ on $P$. 
    Namely, if $g$ is an ``easy'' function, then the density of the hardcore sets will be small. 
    However, in our IHCL$++$, we can always guarantee strong hardness of $g$ within each $P \in \Pa$ on $\Ha_P$. 
    
    \item We provide a general lower bound for the density of the hardcore distribution $\Ha_P$ on each $P \in \Pa$ that depends on the expected value of $g$ on $P$ (i.e., on $b_P$). 
    The parameter $b_P$ is an abstraction of the original parameter $\delta$ in IHCL, given that in our Theorem~\ref{thm:ihcl++} we have no assumption whatsoever on the function $g$, and hence we also have no $\delta$ parameter.
    The indistinguishability guarantee within each set $P \in \Pa$ also degrades as $b_P \rightarrow 0$ and as $b_P \rightarrow 1/2$.
    
    \item In our $++$ theorem, the original IHCL occurs both ``locally'' (on each $P \in \Pa$) and ``globally'' (on $\X)$. 
    Theorem~\ref{thm:ihcl++} states that IHCL occurs locally; namely, we obtain a hardcore distribution $\Ha_P$ within each $P \in \Pa$. 
    However, as we show in Section~\ref{sec:ihcl++toihcl}, we can also ``glue'' the different hardcore distributions together $\Ha_P$ in order to obtain a global hardcore distribution $\Ha$.
\end{itemize}

First, we present a proof of our proposed IHCL$++$ as a direct corollary of our theorem characterizing maximal hardness (Lemma~\ref{thm:tfae}). 
Next, we summarize a second proof of IHCL$++$ by adapting the proof of Trevisan et al. \cite{ttv09}.
We include this second proof because it helps in understanding how multicalibration relates to the Regularity Lemma and because we are able to improve the density parameter from $\delta$ to $2\delta$. 

\begin{proof}[Proof of Theorem~{\rm \ref{thm:ihcl++}}]
The proof is a combination of the MC Theorem (Theorem~\ref{thm:mcpartition}) and our lemma characterizing indistinguishability from constant-Bernoulli functions (Lemma~\ref{thm:tfae}).
We first apply the MC Theorem (Theorem~\ref{thm:mcpartition}) to $\F, g, \D$ with the same parameters $\epsilon, \gamma$.
This yields a partition $\Pa \in \F_{t, q, k}$ of $\X$ with $t = O(1/(\epsilon^4 \gamma) \cdot \log(|\X|/\epsilon))$, $q = O(1/\epsilon^2)$, $k = O(1/\epsilon)$
satisfying 
\[
    \Big|\E_{x \sim \D\vert_P} [f(x) \cdot (g(x) - v_P)] \Big| \leq \epsilon
\]
for all $f \in \F$ and for all $P \in \Pa$ such that $\eta_P \geq \gamma$. 
Next, we apply Lemma~\ref{thm:tfae} to each piece $P \in \Pa$ such that $\eta_P \geq \gamma$: Given that $g$ is $(\F, \epsilon)$-indistinguishable from the constant function $v_P$, by statement (\ref{item:dist2b}) in Lemma~\ref{thm:tfae} it follows that there exists a distribution $\Ha_P$ of density $2b_P$ in $\D\vert_P$ such that $g^{\rand}$ is $(\F^{\rand}, 1/2-\epsilon_P)$-hard on it, where $\epsilon_P = \frac{\epsilon}{2b_P(1-b_P)}$.
Hence $\Ha_P$ is a hardcore distribution for $g$ on $P$, as required.
\end{proof}

\smallskip
\textbf{Another proof of IHCL$++$ as a modification from \cite{ttv09}.} 
Another way to prove IHCL$++$ is as a modification of Trevisan et al.'s proof that the Regularity Lemma implies IHCL \cite{ttv09}.
We summarize the approach of this proof here; full details appear in \cite{casacuberta2023thesis}.
The Trevisan et al. proof of IHCL begins by invoking the Regularity Lemma / Multiaccuracy Theorem (Theorem~\ref{thm:intro-regularity}) on $g$ and where $\D$ corresponds to the uniform distribution; we begin by invoking the MC Theorem instead (Theorem~\ref{thm:mcpartition}).\footnote{Trevisan et al. consider the restricted version of IHCL where $\D$ corresponds to the uniform distribution on $\X$, whereas we consider the general version of IHCL with an arbitrary distribution.}
In \cite{ttv09}, they use the resulting MA predictor $h$ to define the following distribution $\Ha$ over the domain $\X$:
\[
    \Ha(x) := \dfrac{|g(x)-h(x)|}{\sum_{y \in \X} |g(y) - h(y)|}.
\]
The intuition behind this choice of distribution is to put more mass where $h$ and $g$ disagree. Trevisan et al.\ then show that 1) $\Ha$ is $\delta$-dense in $\D$, and that 2) $g$ is strongly hard on $\Ha$. Inspired by their proof, we define the following probability distribution on each (large enough) set $P \in \Pa$:
\[
    \Ha_P(x) := \dfrac{|g(x) - v_p|}{\sum_{y \in P} |g(y) - v_p|}.
\]
We remark that, unlike in the multiaccuracy case of \cite{ttv09}, the denominator in the expression for $\Ha_P$ sums over the set $P$ instead of over the entirety of the domain $\X$. 

We then can show that 1) $\Ha_P$ is $2b_P$-dense in $\D\vert_P$, and that 2) $g$ is strongly hard on $\Ha_P$.
To show 1), we analyze the quantity $\sum_{x \in P} |g(x)-v_P|$; the fact that $v_P$ is a constant is what allows us to recover the optimal $2b_P$ density parameter, whereas the same analysis carried out in \cite{ttv09} for IHLC does not.
That is, a multiaccurate predictor does not seem to imply IHCL with optimal density parameters, but a multicalibrated predictor can.
To show 2), we relate the probability that $f^{\rand}(x) = g^{\rand}(x)$ to the expected value in the definition of multicalibration using a similar expression to the one used in the proof of Yao's Lemma (i.e., Equation~\ref{eq:yaoid}).

\smallskip
\textbf{Another proof of IHCL$++$ using the original IHCL.} A third approach to proving our IHCL$++$ theorem is by using the original IHCL coupled with our theorem characterizing maximal hardness (Theorem~\ref{thm:tfae}).
Namely, we begin by applying the MC theorem to obtain a partition $\Pa$, thus satisfying the assumption of Lemma~\ref{thm:tfae} on each $P \in \Pa$ such that $\eta_P \geq \gamma$.
By statement (\ref{item:hard}) in Lemma~\ref{thm:tfae}, it follows that the assumption for IHCL is satisfied on each such piece with weak hardness $\delta := b_P - 2\epsilon$, and hence we can apply IHCL to each such $P$ to obtain a hardcore distribution within each piece of the partition, which corresponds to the conclusion in IHCL$++$.
However, this yields worse indistinguishability parameters for the hardcore distribution than those in Theorem~\ref{thm:ihcl++}.

\subsection{Recovering IHCL from IHCL++}\label{sec:ihcl++toihcl}

Having proved IHCL$++$, we now show how to recover the original IHCL theorem from it.  
The key idea is to ``glue together'' the hardcore distributions $\Ha_P$ within each $P \in \Pa$, where in this gluing 
each $P \in \Pa$ is weighted according to its size parameter $\eta_P$ of the set $P$. 
When we bring back the assumption that $g$ is $\delta$-weakly hard (as in the original IHCL statement), by using the fact that the multicalibrated partition is of low-complexity, it follows that the glued hardcore distribution $\Ha$ has density at least $2\delta$ in $\D$, which corresponds to the optimal density parameter in IHCL.

We begin by showing that if $g$ is $\delta$-hard, then the $g$ cannot be too imbalanced on average over the pieces of the partition.

\begin{proposition}\label{prop:kpdelta}
Let $\X, \D, \F, g, \epsilon, \gamma, \Pa, t, q, k$ as in Theorem~{\rm\ref{thm:ihcl++}}. 
Moreover, assume that $g$ is $(\F_{t+k, q}, \delta)$-hard, and suppose that $\eta_P \geq \gamma$ for all $P \in \Pa$. 
Then,
\[
    \E_{P \sim \Pa(\D)}[b_P] \geq \delta.
\]
\end{proposition}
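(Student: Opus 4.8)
The plan is to turn the intuition ``if $g$ is hard then it cannot be too imbalanced on the pieces of a low-complexity partition'' into an explicit predictor and run it against the hardness hypothesis. Since $\Pa \in \F_{t,q,k}$, Definition~\ref{def:Ftqk} supplies a partition membership function $\hat f \in \F_{t,q}$ with $\hat f\colon \X \to [k]$ and $P_i := \hat f^{-1}(i)$ ranging over $\Pa$. For each $i \in [k]$, set $m_i = 1$ if $v_{P_i} \geq 1/2$ and $m_i = 0$ otherwise; this is the majority value of $g^\rand$ on $P_i$. Define the candidate adversary $f^\ast(x) := m_{\hat f(x)}$: compute $\hat f(x)$ and then read the answer off the hardwired table $(m_1,\dots,m_k)$. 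A lookup into a $k$-entry, one-bit-per-entry table is a plain Boolean circuit of size $O(k)$ adding no further oracle gates, so $f^\ast$ has complexity $(t + O(k), q)$ relative to $\F$; for the parameter regime of Theorem~\ref{thm:ihcl++} this places $f^\ast$ in $\F_{t+k,q}$ up to the implicit constants. (Only the low complexity of $\Pa$ is used here, not its multicalibration property.)

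Next I would evaluate how well $f^\ast$ predicts $g^\rand$, piece by piece. On each $P = P_i$ the predictor is constant equal to $m_i$, and since $\E_{x \sim \D|_P}[g(x)] = v_P$, choosing $m_i$ as the majority bit gives
\[
    \Pr_{x \sim \D|_P,\; \coins(g)}\bigl[f^\ast(x) = g^\rand(x)\bigr] \;=\; \max\{v_P,\, 1 - v_P\} \;=\; 1 - b_P .
\]
Averaging over the piece, weighted by $\eta_P = \Pr_{x\sim\D}[x\in P]$ (which is exactly sampling $P \sim \Pa(\D)$), yields
\[
    \Pr_{x \sim \D,\; \coins(g)}\bigl[f^\ast(x) = g^\rand(x)\bigr] \;=\; \sum_{P\in\Pa} \eta_P\,(1 - b_P) \;=\; 1 - \E_{P \sim \Pa(\D)}[b_P] .
\]
Since $f^\ast \in \F_{t+k,q}$ and $g$ is $(\F_{t+k,q},\delta)$-hard on $\D$, the left-hand side is at most $1 - \delta$; rearranging gives $\E_{P \sim \Pa(\D)}[b_P] \geq \delta$, as claimed.

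I expect the only delicate point to be the circuit-size bookkeeping in the first step: confirming that appending the $k$-entry table lookup to $\hat f$ costs only $O(k)$ extra wires and no extra oracle gates, so that $f^\ast$ genuinely qualifies as an adversary against the $\F_{t+k,q}$-hardness of $g$. Everything else is a routine computation from the definitions of $v_P$, $b_P$, and $\Pa(\D)$; in particular the argument never invokes the multicalibration guarantee, only the existence of a low-complexity partition membership function.
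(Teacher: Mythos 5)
Your proof is correct and uses the same construction as the paper: the adversary $f^\ast$ is exactly the composition $f_{\post}\circ\hat f$ where $f_{\post}$ hardwires the majority bit of each piece, and the per-piece calculation $\Pr[f^\ast(x)=g^{\rand}(x)]=1-b_P$ followed by averaging over $\Pa(\D)$ is identical. The only difference is cosmetic — you argue directly (the hardness bound forces $\E_P[b_P]\ge\delta$) while the paper phrases it as a contradiction — and your remark that only the low complexity of $\Pa$, not its multicalibration, is used is a worthwhile observation the paper leaves implicit.
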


\begin{proof}
We will argue by contradiction; hence assume that $\E_{P \sim \Pa(\D)}[b_P] < \delta$. 
We show that this contradicts the fact that $g$ is $\delta$-hard on $\D$. 
More specifically, we show that we can construct a function $f \in \F_{t+k, q}$ such that
\[
    \Pr_{\substack{x \sim \D, \, \coins(g^{\rand}), \\ \coins(f^{\rand})}}[f^{\rand}(x) = g^{\rand}(x)] > 1 - \delta.
\]
Let $\hat{f} \in \F_{t, q}$, where $t = O(1/(\epsilon^4 \gamma) \cdot \log(|\X|/\epsilon))$, $q = O(1/\epsilon^2)$,
be the partition membership function for $\Pa$ as given by Definition~\ref{def:Ftqk}. 
That is, $\Pa = \{\hat{f}^{-1}(1), \ldots, \hat{f}^{-1}(k)\}$.
We define our $f$ as $f=f_{\post} \circ \hat{f}$, where $f_{\post}: [k] \rightarrow \{0,1\}$ is the indicator function $f_{\post}(i) = \mathbbm{1}[v_{\hat{f}^{-1}(i)} \geq 1/2]$.
(Recall that $v_P$ is the expected value of $g$ on $\D\vert_P$.)
Thus, $f_{\post}$ can be computed by a circuit of size $k$ (see \cite[\S 9.1.1.]{boazbook2}), so $f \in \F_{t+k, q}$.

The intuitive meaning of the indicator function $f_{\post}(i)$ is the following: we want to show that $f$ approximates $g$ ``quite well'', in the sense that $\Pr[f^{\rand}(x)=g^{\rand}(x)] > 1-\delta$. 
The above construction is saying that $f$ is equal to 0 in all of the $P \in \Pa$ such that $\E_{\D\vert_{P}}[g(x)] \leq 1/2$, and equal to 1 in all of the $P \in \Pa$ such that $\E_{\D\vert_{P}}[g(x)] > 1/2$. 
We now show that this is indeed a good approximation of $g$; good enough that it contradicts the assumption that $g$ is $(\F_{t+k, q}, \delta)$-hard. 

Fix some $P \in \Pa$, and as usual let $v_P = \E_{x \sim \D\vert_P}[g(x)]$.
Since $g$ is $\{0,1\}$-valued and $f$ equals the majority value of $g$ on $P$ by construction, it follows that
\begin{align*}
    \Pr_{\substack{x \sim \D\vert_P, \, \coins(g^{\rand}), \\ \coins(f^{\rand})}}[f^{\rand}(x) = g^{\rand}(x)] 
    &= \Pr_{\substack{x \sim \D\vert_P, \\ \coins(g^{\rand})}}[f(x) = g^{\rand}(x)] \\
    &= 1 - \min\{v_P, 1-v_P\} = 1 - b_P,
\end{align*}
since $f = \mathbf{0}$ when $v_P < 1/2$ and $f = \mathbf{1}$ when $v_P \geq 1/2$. 

Because this expression holds for every $P \in \Pa$, when we consider the probability that $f(x)=g(x)$ over $\X$ it follows that
\[
    \Pr_{\substack{x \sim \D, \\ \coins(g^{\rand})}}[f(x)=g^{\rand}(x)] = 1 - \E_{P \sim \Pa(\D)}[b_P].
\]
Since by assumption $\E_{P \sim \Pa(\D)}[b_P]<\delta$, it follows that
\[
    \Pr_{\substack{x \sim \D, \\ \coins(g^{\rand})}}[f(x) = g^{\rand}(x)] > 1 - \delta,
\]
which contradicts the $(\F_{t+k,q}, \delta)$-hardness of $g^{\rand}$.
\end{proof}

In Proposition~\ref{prop:kpdelta}, we are assuming that $\eta_P \geq \gamma$ for all $P \in \Pa$ in order to make its proof cleaner. 
 However, we should only be ``gluing'' together the pieces $P \in \Pa$ that have enough size and enough mass; i.e., such that $\eta_P$ and $b_P$ are larger than some threshold. 
In the case of the size parameter $\eta_P$, its threshold corresponds to the $\gamma$ parameter in the approximate MC definition. 
In the case of the balance parameter $b_P$, we introduce a new parameter $\tau$:

\begin{definition}\label{def:goodP}
{\rm
Let $\gamma, \tau > 0$, and let $\Pa$ be a partition of the domain $\X$ and $\D$ a distribution on $\X$. 
We say that a set $P \in \Pa$ is \emph{$(\gamma, \tau)$-good} with respect to $\D$ if $\eta_P \geq \gamma$ and $b_P \geq \tau$. 
We use the indicator random variable $\mathbbm{1}_{G}$ to denote whether $P$ is good. 
Namely, $\mathbbm{1}_{G}(P)$ for $P \in \Pa$ returns 1 only if $\eta_P \geq \gamma$ and $b_P \geq \tau$; otherwise, it returns 0. 
(The letter $G$ stands for ``good''.)}
\end{definition}

Given this definition, we now re-evaluate Proposition~\ref{prop:kpdelta}. Namely, the next fact follows directly from coupling the proof of Proposition~\ref{prop:kpdelta} and Definition~\ref{def:goodP}:

\begin{corollary}\label{cor:kpdeltagood}
Let $\X, \D, \F, g, \epsilon, \gamma, \Pa, t, q, k$ as in Theorem~{\rm \ref{thm:ihcl++}}, and let $\tau > 0$.
Moreover, assume that $g$ is $(\F_{t+k, q}, \delta)$-weakly hard for some $\delta>0$. 
Then,
\[
    \E_{P \sim \Pa(\D)}[b_P \cdot \mathbbm{1}_{G}(P)] \geq \delta - \gamma k - \tau = \delta - O(\gamma/\epsilon) - \tau,
\]
where $\mathbbm{1}_{G}(P)$ returns $1$ only if $\eta_P \geq \gamma$ and $b_P \geq \tau$. 
\end{corollary}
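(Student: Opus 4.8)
The plan is to reuse the argument of Proposition~\ref{prop:kpdelta} essentially verbatim and then peel off the contribution of the ``bad'' pieces. First I would note that the proof of Proposition~\ref{prop:kpdelta} never actually uses the hypothesis $\eta_P \geq \gamma$ (the excerpt itself remarks that this assumption was only made ``to make its proof cleaner''): the partition-membership function $\hat{f} \in \F_{t, q}$ composed with the majority post-processing $f_{\post}$ of size $k$ yields $f = f_{\post}\circ\hat{f} \in \F_{t+k, q}$ with
\[
    \Pr_{\substack{x \sim \D, \\ \coins(g^{\rand})}}[f(x) = g^{\rand}(x)] = 1 - \E_{P \sim \Pa(\D)}[b_P],
\]
and the $(\F_{t+k, q}, \delta)$-hardness of $g$ forces the left-hand side to be at most $1-\delta$. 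Hence $\E_{P \sim \Pa(\D)}[b_P] \geq \delta$ holds with no assumption on the sizes of the pieces.

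Next I would split the expectation according to whether a piece is $(\gamma,\tau)$-good in the sense of Definition~\ref{def:goodP}:
\[
    \E_{P \sim \Pa(\D)}[b_P] = \E_{P \sim \Pa(\D)}[b_P \cdot \mathbbm{1}_{G}(P)] + \E_{P \sim \Pa(\D)}[b_P \cdot (1-\mathbbm{1}_{G}(P))].
\]
A piece fails to be good either because $\eta_P < \gamma$ or because $b_P < \tau$. For pieces of the first kind: there are at most $k$ pieces in $\Pa$ in total, so their combined mass under $\Pa(\D)$ is less than $\gamma k$, and since $b_P \leq 1/2 \leq 1$ their total contribution is at most $\gamma k = O(\gamma/\epsilon)$. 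For pieces of the second kind: $b_P < \tau$ on each of them and their combined mass is at most $1$, so their total contribution is at most $\tau$. Therefore the ``bad'' term is at most $\gamma k + \tau$, and rearranging gives $\E_{P \sim \Pa(\D)}[b_P \cdot \mathbbm{1}_{G}(P)] \geq \delta - \gamma k - \tau$, as claimed.

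I do not expect a genuine obstacle here; the only points requiring care are purely bookkeeping. One must state the hardness hypothesis against the correct enlargement $\F_{t+k, q}$ (the extra $+k$ accounting for the size-$k$ circuit $f_{\post}$ composed with $\hat{f}$), and one must note that double-counting pieces that are bad for both reasons only helps, since the bad term is bounded above by the sum of two separate upper bounds rather than being partitioned exactly.
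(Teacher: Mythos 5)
Your proposal is correct and follows essentially the route the paper intends: re-run the majority-predictor argument of Proposition~\ref{prop:kpdelta} (which, as you and the paper both note, never actually uses $\eta_P \geq \gamma$) to obtain $\E_{P\sim\Pa(\D)}[b_P] \geq \delta$, then subtract the contribution of the bad pieces, bounding the small-piece contribution by $\gamma k$ and the imbalanced-piece contribution by $\tau$. The paper offers no written proof beyond ``follows directly from coupling the proof of Proposition~\ref{prop:kpdelta} and Definition~\ref{def:goodP},'' and your two-part decomposition, including the observation that double-counting only helps, is exactly the intended bookkeeping.
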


We can now prove the original IHCL from IHCL$++$:

\begin{proof}[Proof of IHCL using IHCL$++$]

Let $\F, \X, \epsilon, \delta$ be the assumption parameters in IHCL. 
We define the parameters $\epsilon' := \epsilon^2 \delta$, $\gamma := \epsilon \epsilon'$, $\tau := \epsilon \delta$, and invoke the IHCL$++$ Theorem with these parameters $\epsilon', \gamma$. 
By IHCL$++$ (Theorem~\ref{thm:ihcl++}), we obtain a partition $\Pa \in \F_{t,q,k}$ of $\X$ with 
$t = O(1/(\epsilon'^4 \gamma) \cdot \log(|\X|/\epsilon'))$, $q = O(1/\epsilon'^2)$, $k = O(1/\epsilon')$ such that, for each $P \in \Pa$ where $\eta_p \geq \gamma = \epsilon \epsilon'$,
there exists a distribution $\Ha_P$ in $P$ of density $2 b_P$ in $\D\vert_P$ such that $g$ is $(\F, \epsilon/b_P)$-hard on $\Ha_P$. 

Let $\tau := \epsilon\delta$. By Corollary~\ref{cor:kpdeltagood}, when we only consider the $P \in \Pa$ that are $(\gamma, \tau)$-good, we obtain that
\[
    \E_{P \sim \Pa(\D)}[b_P \cdot \mathbbm{1}_G(P)] \geq \delta - O(\gamma/\epsilon') - \tau.
\]
By plugging in the definition of each value $\epsilon' = \epsilon^2 \delta$, $\gamma = \epsilon \epsilon'$, and $\tau = \epsilon \delta$, the expression $\delta - O(\gamma/\epsilon') - \tau$ simplifies to $\delta \cdot (1 - O(\epsilon))$. 

We now construct a hardcore distribution $\Ha$ on $\X$ as follows: we define $\Ha$ by ``gluing up'' the distributions $\Ha_P$ such that the corresponding $P$ is $(\gamma, \tau)$-\emph{good}. 
Formally, let $\D(\Pa)\vert_G$ denote the distribution $\D(\Pa)$ restricted to the set $\{\cup_{P \in \Pa} \, x \in P \mid \mathbbm{1}_G(P)=1\} \subseteq \X$.
Then, for each $x \in \X$, we let $\Ha(x) = \Ha_P(x)$, where $P \sim \D(\Pa)\vert_G$.

We now analyze (1) the density of $\Ha$, and (2) the hardness of $g$ on $\Ha$. 
Since each $\Ha_P$ such that $P$ is \textit{good} has density $2 b_P$ in $\D\vert_P$ and $\E_{P \sim \Pa(\D)} [b_P \cdot \mathbbm{1}_{G}(P)] \geq \delta \cdot (1 - O(\epsilon))$ by Proposition~\ref{prop:kpdelta}, it follows that $\Ha$ has density $2\delta \cdot (1 - O(\epsilon))$ in $\D$. 

For the hardness of $g$, we see that
\begin{align*}
    \Pr_{\substack{x \sim \Ha, \, \coins(f^{\rand}), \\ \coins(g^{\rand})}}[f^{\rand}(x) = g^{\rand}(x)] & = 
    \E_{P \sim \D(\Pa)\vert_G}\Bigg[\Pr_{\substack{x \sim \Ha_P, \, \coins(f^{\rand}), \\ \coins(g^{\rand})}}[f^{\rand}(x)=g^{\rand}(x)\big] \Bigg] \\
    &\leq \E_{P \sim \D(\Pa)\vert_G} \Big[ \Big(1/2 + \dfrac{\epsilon'}{2b_P(1-b_P)} \Big)\Big] =
\\[0.2cm]
&    = \dfrac{1}{2} +\E_{P \sim \D(\Pa)\vert_G}\Bigg[ \dfrac{\epsilon'}{2\tau(1-\tau)} \Bigg] \leq \dfrac{1}{2} + \frac{\epsilon'}{\tau}.
\end{align*}
By plugging in the definitions of the parameters, namely $\epsilon' = \epsilon^2 \delta$ and $\tau = \epsilon \delta$, we obtain that $\epsilon'/\tau = \epsilon$. Hence, we obtain that
$g$ is $(\F, 1/2-\epsilon)$-hard on $\Ha$.

Therefore, we have shown that $\Ha$ is a distribution of density $2\delta \cdot (1 - O(\epsilon))$ in $\D$ such that $g$ is strongly hard on $\Ha$. 
In order to recover the original IHCL statement, we observe that we can modify $\Ha$ in order to make its density at least $2\delta$ in $\D$ while maintaining the strong hardness of $g$ on $\Ha$. 
Specifically, we can modify all of the probability masses in $\Ha$ by a factor of $(1 \pm O(\epsilon))$ to make it $2\delta$-dense in $\D$.
Since this only changes $\Ha$ by a total variation distance of $O(\epsilon)$, it follows that $g$ is still $(\F, 1/2-O(\epsilon))$-hard.
\end{proof}

\subsection{Set version of IHCL$++$}\label{sec:setsmeasures}

When working with Impagliazzo's Hardcore Lemma, there are two possible ways of describing the hardcore set of inputs \cite{imp95, hol05}.
One is to find a hardcore \emph{distribution}, which is defined over the domain $\X$, and the other is to find a hardcore \emph{set}.
Historically (and in this paper), proofs involving the Hardcore Lemma first find a hardcore distribution, but in applications it is often more intuitive to deal with sets instead of distributions \cite{imp95, ks03, hol05, ttv09}.
In the case of sets, the natural definition for the density of a set is the following:

\begin{definition}[$\delta$-dense set]\label{def:denseset}
{\rm
Given a set $S \subseteq \X$, we say that $S$ is \emph{$\delta$-dense} in $\X$ if $|S| \geq \delta \cdot |\X|$. 
}
\end{definition}

In other words, $S$ is $\delta$-dense in $\X$ if $S$ occupies at least a fraction $\delta$ of the domain $\X$. 
As it was originally shown by Impagliazzo \cite{imp95} and further detailed by Kilvans and Servedio \cite{ks03}, we can obtain a hardcore set from a hardcore distribution via a probabilistic method argument.
To obtain a hardcore set $H$ from a hardcore distribution $\Ha$, we construct $H$ probabilistically as follows: for each $x \in \X$, we add $x$ to $H$ with probability $\Ha(x)/\max_y \Ha(y)$.
Then, using Hoeffding's inequality one can show that the fact that $\Ha$ is $\delta$-dense 
with respect to the uniform distribution over $\X$ implies that $H$ is $\delta$-dense in $\X$ on expectation (i.e., that $|H| \leq \delta |\X|$). 
We defer the proof of this conversion to Appendix~\ref{sec:setihcl}. 

Given this conversion, we can formally state the set version of our IHCL$++$; the only two relevant differences compared to the distribution-version of IHCL$++$ are that Theorem~\ref{thm:ihcl++set} is restricted to the special case where $\D$ corresponds to the uniform distribution over $\X$ and that it requires an upper-bound on the size of $\F$:

\begin{theorem}[IHCL$++$, Set version]\label{thm:ihcl++set}

Let $\X$ be a finite domain, let $\F$ be a family of functions $f\colon \X \rightarrow [0,1]$, let $g\colon \X \rightarrow [0,1]$ be an arbitrary function, let $\epsilon, \gamma > 0$. There exists a partition $\Pa \in \F_{t, q, k}$ of $\X$ with $t = O(1/(\epsilon^4 \gamma) \cdot \log(|\X|/\epsilon))$, $q = O(1/\epsilon^2)$, $k = O(1/\epsilon)$ satisfying that for all $P \in \Pa$ such that $\eta_P \geq \gamma$
and $|\F| \leq \exp(|\X|(\epsilon/4)^2 b_P^2)$,
there exists a set $H_P \subseteq P$ of density $|H_P|/|P| \geq 2b_P$ such that $g^{\rand}$ is $\big(\F^{\rand}, 
1/2-\frac{\epsilon}{2b_P(1-b_P)}\big)$-hard on the uniform distribution over~$H_P$.
\end{theorem}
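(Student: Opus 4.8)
The plan is to derive the set version as a corollary of the distribution version (Theorem~\ref{thm:ihcl++}) together with the standard probabilistic conversion of a hardcore distribution into a hardcore set, applied separately inside each piece of the partition. Concretely, I would first invoke Theorem~\ref{thm:ihcl++} with $\D$ equal to the uniform distribution on $\X$ and with parameters $\epsilon/3$ and $\gamma$; shrinking $\epsilon$ by a constant only affects $t,q,k$ through the hidden constants, so the stated bounds are unchanged. This produces the desired partition $\Pa \in \F_{t,q,k}$ and, for every piece $P$ with $\eta_P \geq \gamma$, a distribution $\Ha_P$ supported on $P$ that is $2b_P$-dense in the uniform distribution on $P$ and on which $\grand$ is $(\F^{\rand}, 1/2 - \frac{\epsilon}{6b_P(1-b_P)})$-hard. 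If $b_P = 0$ the conclusion is vacuous, so from now on assume $b_P > 0$.

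Next, for each such piece I would run the conversion from Appendix~\ref{sec:setihcl} (following Impagliazzo~\cite{imp95} and Klivans--Servedio~\cite{ks03}): form a random set $H_P \subseteq P$ by including each $x \in P$ independently with probability $2b_P|P|\cdot \Ha_P(x)$, which is at most $1$ precisely because $\Ha_P$ is $2b_P$-dense in the uniform distribution on $P$. Then $\E[|H_P|] = 2b_P|P|$, and for every fixed $f \in \F$ the expected number of $x \in H_P$ on which $\frand$ and $\grand$ agree equals $2b_P|P|\cdot \Pr_{x\sim\Ha_P}[\frand(x)=\grand(x)] \leq 2b_P|P|\,(1/2 + \frac{\epsilon}{6b_P(1-b_P)})$. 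By Hoeffding's inequality, for each fixed $f$ both $|H_P|$ and this agreement count are, with high probability, within an additive $(\epsilon/4)\cdot 2b_P|P|$ of their expectations; the union bound over the $|\F|$ tests together with the single event controlling $|H_P|$ succeeds exactly when $|\F| \leq \exp(|\X|(\epsilon/4)^2 b_P^2)$, which is the stated hypothesis. Hence with positive probability some realization of $H_P$ has $|H_P|/|P| \geq 2b_P$ and, after dividing agreement counts by $|H_P|$ and using that $b_P(1-b_P)\le 1/4$ leaves $\Omega(\epsilon)$ additive room in the agreement fraction, makes $\grand$ be $(\F^{\rand}, 1/2 - \frac{\epsilon}{2b_P(1-b_P)})$-hard on the uniform distribution over $H_P$. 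Fixing such a realization of $H_P$ for every piece with $\eta_P \geq \gamma$ and $|\F| \leq \exp(|\X|(\epsilon/4)^2 b_P^2)$ simultaneously completes the proof.

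The step I expect to be the main obstacle is the concentration bookkeeping in the conversion: because the hardcore set $H_P$ has only relative density $2b_P$ inside $P$, its size $2b_P|P|$ is small, so the Hoeffding deviations have to be measured relative to $2b_P|P|$ rather than $|P|$, and one must verify that the slack created by running Theorem~\ref{thm:ihcl++} with $\epsilon/3$ in place of $\epsilon$ (an additive $\Omega(\epsilon)$ room in the agreement fraction) absorbs both the deviation of the agreement count and the deviation of the denominator $|H_P|$ once one forms the ratio. Pinning down the constants so that the final hardness is exactly $1/2 - \frac{\epsilon}{2b_P(1-b_P)}$ and the density exactly $\ge 2b_P$ is routine but is where the $(\epsilon/4)^2 b_P^2$ in the bound on $|\F|$ originates; everything else is a direct appeal to Theorem~\ref{thm:ihcl++} and the conversion of Appendix~\ref{sec:setihcl}.
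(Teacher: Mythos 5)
Your proposal matches the paper's own route: the paper also derives the set version by first invoking the distribution version of IHCL$++$ (Theorem~\ref{thm:ihcl++}) with $\D$ uniform, and then applying the probabilistic hardcore-distribution-to-hardcore-set conversion of Appendix~\ref{sec:setihcl} (following Impagliazzo and Klivans--Servedio) separately inside each piece $P$, with Hoeffding plus a union bound over $\F$ giving rise to the stated condition on $|\F|$. Your bookkeeping about measuring Hoeffding deviations relative to $2b_P|P|$ and absorbing the slack by running the distribution version at a slightly smaller $\epsilon$ is consistent with that argument and with the paper's intent.
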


\subsection{Previous instances of partitioning in the Regularity Lemma literature}\label{sec:rttv08}

In the original paper on the Hardcore Lemma, Impagliazzo gave two different proofs for it: a boosting proof and a proof based on the min-max theorem \cite{imp95}.
Reingold, Trevisan, Tulsiani, and Vadhan were the first to give a partition-based proof of IHCL \cite{rttv08}.
Using this partitioning-based approach, Reingold et al. proved a more general version of IHCL \cite[Thm. 3.2]{rttv08}, which turns out to be closely related to our IHCL$++$ theorem. 
In this section, we summarize the similarities and differences between Theorem 3.2 in \cite{rttv08} and our IHCL$++$.
In doing so, we take inspiration from our IHCL$++$ and are able to improve Reingold et al.'s result by removing the assumption that $g$ is $\delta$-weakly hard.

Both Reingold et al.'s Theorem 3.2 and our IHCL$++$ find a partition $\Pa$ of the domain such that there exists a hardcore distribution within each piece $P \in \Pa$.
In the case of \cite{rttv08}, they establish statement~\ref{item:0s1s} in our lemma characterizing indistinguishability from constant-Bernoulli functions (Lemma~\ref{thm:tfae}).
Reingold et al. also show that their general IHCL theorem recovers a form of the original IHCL theorem with optimal density $2\delta$, which we do with our IHCL$++$ in Section~\ref{sec:ihcl++toihcl}.

There are two main differences between Reingold et al.'s general version of IHCL and our IHCL$++$ (Theorem~\ref{thm:ihcl}): first, in \cite{rttv08}, they construct a partition of \emph{exponential complexity} $\exp(\poly(1/\epsilon, 1/\delta))$, whereas our partition is of polynomial complexity $t, q, k = \poly(1/\epsilon, 1/\delta)$, coming from the fact that the MC Theorem has polynomial complexity (Theorem~\ref{thm:mcpartition}). 
Second, in \cite{rttv08} they require the input function $g$ to be $\delta$-hard, whereas our IHCL$++$ holds for an arbitrary function $g$.
The first difference is the fundamental one: Before the development of the multicalibration tools from the field of algorithmic fairness, it seemed that exponential complexity was inherent in partition-based proofs of complexity-theoretic theorems.
Indeed, the proof of Theorem 3.2 in \cite{rttv08} essentially constructs a multicalibrated partition of exponential complexity.

The second difference is less fundamental; we show that that we can actually remove the assumption that $g$ is $\delta$-hard from their proof by following the strategy in our $++$ theorems.
Namely, we replace the $\delta$ parameter with a threshold parameter $\tau$.
We defer the proof of this claim to Appendix~\ref{sec:rttvproof}.

\section{Characterizations of Pseudoentropy}\label{sec:pame}

In computational complexity and cryptography, a key development has been the study of \textit{computational analogues} of concepts from information theory. 
For example, the notion of \textit{computational indistinguishability}, which has become one of the most fundamental notions in theoretical computer science \cite{gm82, yao82}, can be thought of as the computational analogue of \textit{statistical distance}. Formally, this corresponds to the definition of indistinguishable distributions as stated in Definition~\ref{def:indistdist} in Section~\ref{sec:hardnessindist}.
There is an important difference in the definition of computational indistinguishability depending on the choice of $\F$: In this section, we only consider the classes $\F$ that correspond to boolean circuits, and hence our results correspond to the non-uniform setting. 
However, we remark that the results can be extended the the uniform setting; in particular, the work of Vadhan and Zheng on which this chapter is based on \cite{zhe14, vz13}, state the theorems and definitions that we use in both the uniform and non-uniform settings. 

Computational analogues of entropy were subsequently introduced by Yao \cite{yao82} and H\aa stad, Impagliazzo, Levin, and Luby \cite{hill99}, the latter being known as \textit{pseudoentropy}. 
The notion of pseudoentropy allowed H\aa stad et al.\ to prove the fundamental result that establishes the equivalence between pseudorandom generators and one-way functions \cite{vz12}.
Later, Vadhan and Zheng showed that the notion of pseudoentropy is equivalent to hardness of sampling \cite{vz12}. 
In his PhD thesis, Zheng proved a similar theorem but for average-case variants of the H\aa stad et al.\ instead, known as \textit{pseudo-average min-entropy}, which we will refer to as \textit{PAME} \cite{zhe14}. 

\subsection{Definitions}

Throughout Section~\ref{sec:pame}, we use the notation $X$ to denote a random variable instead of writing $x \sim \D$.

\begin{definition}[Min-entropy]\label{def:minentropy}
{\rm
The \emph{min-entropy} of a random variable $X$ is defined as
\[
    \Hb_{\infty}(X) := \min_x \Big\{ \log \Big( \dfrac{1}{\Pr[X=x]} \Big) \Big\}.
\]
}
\end{definition}

The computational analogue of min-entropy is defined as follows:

\begin{definition}[Pseudo-min-entropy \cite{hill99}]\label{def:pseudominen}
{\rm
Let $\F$ be a family of functions $f: \X \rightarrow [0,1]$.
A distribution $X$ on $\X$ has $(\F, \epsilon)$-\emph{pseudo-min-entropy at least} $k$ if there exists a distribution $Y$ such that
\begin{enumerate}
    \item[{\rm 1.}] $\Hb_{\infty}(Y) \geq k$, where $\Hb_{\infty}(\cdot)$ denotes min-entropy. Equivalently, $\Pr[Y=x] \leq 2^{-k}$ for all $x$.
    \item[{\rm 2.}] $X$ is $(\F, \epsilon)$-indistinguishable from $Y$, in the sense of Definition~\ref{def:indistdist}.
\end{enumerate}
}
\end{definition}

Hsiao, Lu, and Reyzin considered the conditional version of pseudo-min-entropy known as \emph{pseudo-average-min-entropy}, which we will refer to as \emph{PAME}.
For that, we need to first define \textit{average min-entropy}:

\begin{definition}[Average min-entropy \cite{dors04}]\label{def:AME}
{\rm
For every joint distribution $(X, B)$, the \emph{average min-entropy of} $B$ \emph{given} $X$ is defined as
\[
    \tilde{\Hb}_{\infty}(B|X) = \log \left( \dfrac{1}{\E_{x \sim X}[1 / 2^{H_{\infty}(B|_{X=x})}]} \right) = \log \left( \dfrac{1}{\E_{x \sim X}[\max_a \Pr[B=a|X=x]]} \right). 
\]  
}
\end{definition}

The computational analogue of average min-entropy, which we formalize in the non-uniform setting (i.e., where $\F$ is a class of circuits), is then defined as follows:

\begin{definition}[Pseudo-average min-entropy (PAME) \cite{zhe14}]\label{def:pamenonunif}
{\rm
Let $(X, B)$ be a joint distribution on $\X \times [L]$, let $\F$ be a family of functions $f: \X \times [L] \rightarrow [0,1]$, and let $\epsilon > 0$. 
We say that $B$ has $(\F, \epsilon)$\emph{-pseudo-average min-entropy at least $k$ given $X$} if there exists a random variable $C$ jointly distributed with $X$ such that the following holds:
\begin{enumerate}
    \item[{\rm 1.}] $\tilde{\Hb}_{\infty}(C|X) \geq k$.
    \item[{\rm 2.}] $(X, B)$ and $(X, C)$ are $(\F, \epsilon)$-indistinguishable.
\end{enumerate}
}
\end{definition}

\subsection{The PAME Theorem}

Given $L \in \mathbb{N}$, we let $[L] = \{1, 2, \ldots, L\}$.
The following is a known fact about (conditional) min-entropy:

\begin{proposition}[{\cite[Proposition 4.10]{dors04}}]\label{prop:pamepredict}

For every joint distribution $(X, B)$ on $\X \times [L]$,
\[
    \tilde{\Hb}_{\infty}(B|X) \geq k \iff 
    \Pr[f(X) = B] \leq 2^{-k} \quad \forall f\colon \{0,1\}^n \rightarrow [L].
\]
\end{proposition}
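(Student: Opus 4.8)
The plan is to prove Proposition~\ref{prop:pamepredict} by simply unwinding the definition of average min-entropy and observing that the best predictor of $B$ from $X$ is the maximum-likelihood predictor. First, recall from Definition~\ref{def:AME} that
\[
    \tilde{\Hb}_{\infty}(B|X) = \log\left(\frac{1}{\E_{x \sim X}\big[\max_{a \in [L]} \Pr[B=a \mid X=x]\big]}\right),
\]
so that the condition $\tilde{\Hb}_{\infty}(B|X) \geq k$ is equivalent to the single scalar inequality $\E_{x \sim X}\big[\max_{a} \Pr[B=a \mid X=x]\big] \leq 2^{-k}$. Thus it suffices to show that this quantity equals $\max_f \Pr[f(X)=B]$, where the maximum is over all functions $f\colon \X \rightarrow [L]$.

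For the ``$\Rightarrow$'' direction, I would note that for any fixed $f\colon \X \rightarrow [L]$ we have
\[
    \Pr[f(X)=B] = \E_{x \sim X}\big[\Pr[B=f(x) \mid X=x]\big] \leq \E_{x \sim X}\big[\max_{a} \Pr[B=a \mid X=x]\big] \leq 2^{-k},
\]
using the assumed bound on the average min-entropy in the last step. For the ``$\Leftarrow$'' direction, I would exhibit the maximum-likelihood predictor $f^{*}$, defined by letting $f^{*}(x)$ be any value $a \in [L]$ maximizing $\Pr[B=a \mid X=x]$ (ties broken arbitrarily, and $f^{*}$ defined arbitrarily on $x$ with $\Pr[X=x]=0$). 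For this $f^{*}$ the first inequality above is an equality, so $\Pr[f^{*}(X)=B] = \E_{x \sim X}\big[\max_{a} \Pr[B=a \mid X=x]\big]$; hence if $\Pr[f(X)=B] \leq 2^{-k}$ for all $f$, applying this to $f=f^{*}$ gives $\E_{x \sim X}\big[\max_{a} \Pr[B=a \mid X=x]\big] \leq 2^{-k}$, i.e.\ $\tilde{\Hb}_{\infty}(B|X) \geq k$. Combining the two directions completes the proof.

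There is essentially no obstacle here: the argument is a direct manipulation of definitions. The only points requiring (minor) care are that $f^{*}$ is a genuine well-defined function into $[L]$ — which is fine since $[L]$ is finite, so the set of maximizers is nonempty for each $x$ — and that the quantifier in the statement ranges over \emph{all} functions $\X \rightarrow [L]$ (with no complexity restriction), so that the non-uniform, possibly inefficient maximum-likelihood predictor is admissible as a witness in the ``$\Leftarrow$'' direction.
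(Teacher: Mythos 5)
Your proof is correct and is the standard argument for this fact; the paper itself does not include a proof (it simply cites Proposition~4.10 of Dodis, Ostrovsky, Reyzin, and Smith), and the argument in that source is the same unwinding of the definition together with the observation that the maximum-likelihood predictor $f^{*}$ attains $\E_{x \sim X}\bigl[\max_{a}\Pr[B=a \mid X=x]\bigr]$.
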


In other words, Proposition~\ref{prop:pamepredict} characterizes the notion of (average) min-entropy in terms of unpredictability: If $B|X$ has high average min-entropy, then $B$ is hard to predict from $X$. 
Moreover, Proposition~\ref{prop:pamepredict} establishes an equivalence between the two notions, given that it is an if and only if statement. 
That is, we can characterize the notion of (average) min-entropy through the notion of unpredictability and viceversa. 

Proposition~\ref{prop:pamepredict} considers an information-theoretic notion of hardness.
The PAME theorem corresponds to the computational analogue of Proposition~\ref{prop:pamepredict}, for which we need to define the notion of hardness for \emph{distributions}:

\begin{definition}[Hardness of a distribution {\cite[Def. 4.13]{zhe14}}]
    Given a joint distribution $(X, B)$ on $\X \times [L]$, a class $\F$ of randomized functions $f: \X \rightarrow [L]$, and $\delta>0$, we say that $B$ is $(\F, \delta)$\emph{-hard to predict given $X$} if, for all $f \in \F$,
    \[
        \Pr_{\coins(f)}[f(X) = B] \leq 1-\delta.
    \]
\end{definition}

One of the main theorems shown in Vadhan and Zheng \cite{vz12} is the computational analogue of Proposition~\ref{prop:pamepredict}.\footnote{The notation used in \cite{zhe14} corresponds to $1-2^{-r}$ in lieu of our $\delta$ and, accordingly, $r$ in lieu of $\log(1/(1-\delta))$.}

\begin{theorem}[PAME theorem {\cite[Theorem 4.15]{zhe14}}]\label{thm:pamenonunif}

Let $(X, B)$ be a joint distribution on $\X \times [L]$, $\F$ any class of functions $f\colon \X \times [L]\rightarrow [0,1]$, and let $\delta, \epsilon>0$. 
Then, the following two statements hold:
\begin{enumerate}
    \item[{\rm 1.}] If $B$ is $(\F^{\rand}_{t, q},  \delta)$-hard to predict given $X$ for $t = q = L \cdot \textrm{\rm poly} (n,\log L,1/\epsilon)$, then $B$ has $(\F, \epsilon)$-PAME at least $\log(1/(1-\delta))$ given $X$.
    \item[{\rm 2.}] If $B$ has $(\F_{t, q}, \epsilon)$-PAME at least $\log(1/(1-\delta))$ given $X$ for $t = q = O(1)$, then $B$ is non-uniformly $(\F^{\rand}, \delta - \epsilon)$-hard to predict. 
\end{enumerate}
\end{theorem}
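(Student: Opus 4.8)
The plan is to prove the two implications separately; direction~2 (PAME $\Rightarrow$ hardness) is routine, and direction~1 (hardness $\Rightarrow$ PAME) carries all the weight. For direction~2, suppose $B$ has $(\F_{t,q},\epsilon)$-PAME at least $\log(1/(1-\delta))$ given $X$, witnessed by a jointly distributed $C$ with $\tilde{\Hb}_{\infty}(C|X)\geq\log(1/(1-\delta))$ and $(X,B)\approx_{(\F_{t,q},\epsilon)}(X,C)$. Applying Proposition~\ref{prop:pamepredict} with $k=\log(1/(1-\delta))$ (so $2^{-k}=1-\delta$), the entropy bound on $C$ says exactly that $\Pr[\phi(X)=C]\leq 1-\delta$ for every predictor $\phi\colon\{0,1\}^n\to[L]$, hence, by averaging over coins, for every randomized predictor as well. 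Given any $f\in\F^{\rand}$, the distinguisher $d_f(x,y):=\Pr_{\coins(f)}[f(x)=y]$ has constant complexity relative to $\F$ (one evaluation of $f$ plus an equality test, up to the standard modification of $\F$ for the domain $\X\times[L]$), so it lies in the $\F_{t,q}$-distinguisher class for $t=q=O(1)$. Then $\E[d_f(X,B)]=\Pr[f(X)=B]$ while $\E[d_f(X,C)]=\Pr[f(X)=C]\leq 1-\delta$, and $\epsilon$-indistinguishability gives $\Pr[f(X)=B]\leq 1-\delta+\epsilon$; as $f$ was arbitrary, $B$ is $(\F^{\rand},\delta-\epsilon)$-hard to predict given $X$.

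For direction~1, I would use a min-max argument. Consider the zero-sum game in which the ``entropy player'' chooses a joint distribution $C$ with the same $X$-marginal as the given $X$ and $\tilde{\Hb}_{\infty}(C|X)\geq\log(1/(1-\delta))$ (a convex, compact strategy set, since $\E_x[\max_a\Pr[C=a|X=x]]\leq 1-\delta$ is a convex constraint), while the ``distinguisher'' plays a mixture $\mu\in\Delta(\F)$ (assume $\F$ closed under negation, so we may drop absolute values), with payoff $\E[f(X,B)]-\E[f(X,C)]$ for $f\sim\mu$. Von Neumann's minimax theorem turns the quantity we want to bound, $\min_C\max_{f\in\F}(\E[f(X,B)]-\E[f(X,C)])$, into $\max_\mu\min_C(\E[\bar f(X,B)]-\E[\bar f(X,C)])$ with $\bar f=\E_{f\sim\mu}f$, and a standard sparsification (Carath\'eodory, or random sampling over a discretized $C$-space) lets us take $\mu$ supported on $m=\poly(n,\log L,1/\epsilon)$ functions from $\F$. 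It therefore suffices to show that for every such combined distinguisher $\bar f$ there is a valid $C$ with $\E[\bar f(X,C)]\geq\E[\bar f(X,B)]-\epsilon$.

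This is where the hypothesis enters. The best-response predictor $P(x):=\arg\max_{a\in[L]}\bar f(x,a)$ is computed by evaluating $\bar f(x,\cdot)$ on all $L$ labels ($mL$ oracle calls to $\F$) followed by an $\arg\max$, so it has complexity $(t,q)$ with $t=q=L\cdot\poly(n,\log L,1/\epsilon)$ --- precisely the assumed regime --- whence $\Pr[P(X)=B]\leq 1-\delta$. I would then build $C$ by concentrating, for each $x$, as much mass as possible on the top-$\bar f(x,\cdot)$ labels, using a \emph{non-uniform} per-$x$ ``concentration budget'' $s(x)$ with $\E_x[s(x)]\leq 1-\delta$ --- more budget where $B$ already sits on $P(x)$, less elsewhere --- chosen exactly as the optimum of the linear program ``maximize $\E[\bar f(X,C)]$ subject to the average-min-entropy constraint''. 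Its dual has a single water-level threshold, and the bound $\Pr[P(X)=B]\leq 1-\delta$ is precisely the slack needed for this optimum to reach $\E[\bar f(X,B)]-\epsilon$.

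I expect the main obstacle to be this last step: quantitatively converting ``$P$ predicts $B$ with success at most $1-\delta$'' into ``$\E[\bar f(X,B)]-\max_C\E[\bar f(X,C)]\leq\epsilon$'', which genuinely requires the LP-duality / water-filling analysis rather than the naive choice $s(x)\equiv 1-\delta$ (that choice only yields the far weaker bound $\leq\delta$). As a sanity check, in the bit case $L=2$ the argument collapses to Yao's Lemma (Lemma~\ref{lemma:yao}) plus the Regularity Lemma (Theorem~\ref{thm:intro-regularity}): the regular approximation $h$ of $\grand(x)\sim B|_{X=x}$ yields the simulator $C=\Bern(h(X))$, and hardness of predicting $B$ bounds $\E_x[\max\{h(x),1-h(x)\}]$, hence $\tilde{\Hb}_{\infty}(C|X)$. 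The general statement is Vadhan--Zheng's, via their uniform min-max theorem (which additionally handles uniform adversaries) \cite{vz12,zhe14}; within this paper it will moreover be recoverable, for $\ell=O(\log\log n)$, from the later PAME$++$ theorem by ``gluing'' the per-piece constant-Bernoulli simulators.
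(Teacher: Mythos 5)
The paper does not prove Theorem~\ref{thm:pamenonunif} directly; it is quoted from Zheng's thesis and treated as a black box, and the only argument about it inside the paper is the indirect recovery of a restricted form (with $L=O(\log n)$) from PAME$++$ in Section~\ref{sec:pamefrompame++}, which proceeds by Multiclass Multicalibration, per-piece constant-Bernoulli simulators, and gluing. Your proposal is instead a direct min-max proof in the spirit of Vadhan--Zheng, which is the right way to obtain the full $t=q=L\cdot\poly(n,\log L,1/\epsilon)$ parameters, and your treatment of direction~2 is correct.

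Direction~1 as sketched, however, has a genuine gap in the step you yourself flag as the obstacle. You claim that the only place the hardness hypothesis enters is the bound $\Pr[P(X)=B]\le 1-\delta$ for the single argmax predictor $P(x)=\arg\max_a \bar f(x,a)$, and that this bound is ``precisely the slack needed'' for the LP optimum to reach $\E[\bar f(X,B)]-\epsilon$. That is not so. Take $L=3$, $\bar f(x,1)=1$, $\bar f(x,2)=0.9$, $\bar f(x,3)=0$ for all $x$, and $B\equiv 2$, and suppose (toward the argmax-only argument) $\delta=2/3$: then $\Pr[P(X)=B]=0\le 1/3$, so the argmax bound is vacuously satisfied, yet the LP value under $\E_x[\max_a q_x(a)]\le 1/3$ is forced to the uniform $q_x$ and equals $1.9/3\approx 0.63$, well below $\E[\bar f(X,B)]=0.9$. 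What the LP dual actually gives is a per-$x$ water level $\beta_x$ with $\sum_a(\bar f(x,a)-\beta_x)_+\le\lambda^*$, and the contradiction from ``LP optimum $<\E[\bar f(X,B)]-\epsilon$'' must be derived against the \emph{randomized} predictor $\tilde P$ that on input $x$ outputs $a$ with probability proportional to the excess $(\bar f(x,a)-\beta_x)_+$ (outputting a fixed label with the leftover mass). It is $\tilde P$, not the argmax, that violates $1-\delta$ whenever the LP falls short; in the example above $\tilde P$ outputs $2$ with probability $\approx 0.35>1/3$, exposing that $\delta=2/3$ was never consistent with $B\equiv 2$. Fortunately $\tilde P$ still lies in $\F^{\rand}_{t,q}$ with $t=q=L\cdot\poly(n,\log L,1/\epsilon)$, so the theorem's parameters survive, but the proof must invoke it. A second, more routine point: the sparsification you invoke has the standard circularity that the best-response $C$ depends on the sampled $\bar f'$, so a plain Chernoff bound on the two quantities you name does not close the argument; one needs either a discretization of the $C$-space with a union bound, or the structured nonuniform min-max machinery Vadhan and Zheng use.
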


Observe that the parameters of the class $\F_{t, q}$ in item 1 of Theorem~\ref{thm:pamenonunif} imply that we need $L = O(n)$ in order to maintain polynomial complexity in $n$.
We remark that in \cite{vz12, zhe14} a similar theorem is shown for Shannon entropy instead of min-entropy and finding an equivalence to hardness of sampling (which is quantified using the Kullback-Leibler divergence) instead of unpredictability. 

Item 1 is the more interesting and difficult direction of Theorem~\ref{thm:pamenonunif}, given that Item 2 is rather immediate from the definitions.
Hence, our PAME$++$ Theorem will seek to generalize item Item 1 from Theorem~\ref{thm:pamenonunif}.

\subsection{Multiclass multicalibration}

For the PAME$++$ theorem we require a more general multicalibration theorem; namely, one that is adapted to the multiclass case. 
This corresponds to the case where the function $g$ maps $\X$ to $[L]$, where $L \in \mathbb{N}$, and we think of each of the integers $y \in L$ as a ``class''.
Equivalently, by setting $L = 2^{\ell}$, the multiclass case corresponds to the case where $g$ maps $\X$ to bit-strings $\{0,1\}^{\ell}$ instead of to real values.

\begin{definition}[Multiclass approximate multicalibration, adapted from {\cite[Def. 5.3]{gkrsw21}}]
    Let $\X$ be a finite domain, 
    $g^{\rand}: \X \rightarrow [L]$ a randomized function, 
    $\F$ a class of functions $f: \X \rightarrow [0,1]$, $\D$ be a probability distribution over $\X \times [L]$, where $L \in \mathbb{N}, L \geq 2$, and let $\epsilon, \gamma >0$. A partition $\Pa$ of $\X$ is \emph{$(\F, \epsilon, \gamma)$-approximately multicalibrated} for $g^{\rand}$ on $\D$ if for all $f \in \F$, for all $y \in [L]$, and for all $P \in \Pa$ such that $\Pr_{x \sim \D}[x \in P] \geq \gamma$,
    \[
        \Big|\E_{x \sim \D\vert_P} [f(x) \cdot (\mathbbm{1}[\grand(x)=y] - v_{P_y})] \Big| \leq \epsilon,
    \]
    where $v_{P_y} = \E_{\substack{x \sim \D\vert_P, \\ \coins(g^{\rand})}}[\mathbbm{1}[g^{\rand}(x)=y]]$.
\end{definition}

Intuitively, we can think of the term $\mathbbm{1}[g(x)=y]$ as the ``booleanization'' of the function $g$ for the corresponding label $y \in [L]$.
We adapt the Gopalan et al.'s definition in two ways: first, we use the approximate MC relaxation instead of the MC on average approach (see Section~\ref{sec:mcpartitions} for the definition and relationship between the two definitions via the reparametrization $\epsilon \leftarrow \epsilon \gamma$).
Second, we use the usual multicalibration expression, whereas Gopalan et al. use a recasting of multicalibration that uses the covariance \cite{gkrsw21}.
The justification for why we can make this change is given in \cite[\S C.2]{gkrsw21}.

Gopalan et al. give an algorithm for constructing a partition that satisfies the multiclass multicalibration definition:

\begin{theorem}[Multiclass approximate MC is realizable {\cite[Thm. 9.2]{gkrsw21}}]\label{thm:mcmc}
    Let $\X$ be a finite domain, $\F$ a class of functions $f: \X \rightarrow [0,1]$, $g^{\rand}: \X \rightarrow [L]$ an arbitrary randomized function for $L \in \mathbb{N}, L \geq 2$, $\D$ a probability distribution over $\X$, and $\epsilon, \gamma > 0$. There exists an $(\F, \epsilon, \gamma)$-approximately multicalibrated partition $\Pa$ of $\X$ for $g^{\rand}$ on $\D$ such that $P \in \F_{t, q, k}$, where
    \begin{enumerate}
        \item $t = O\big( L \cdot (1/(\epsilon^4 \gamma))^{L} \cdot \log(|\X|/\epsilon) \big)$,
        \item $q = O\big((L/\epsilon^2)^{O(L)}\big)$,
        \item $k = O\big( L \cdot (1/\epsilon^4)^{\L-1} \big)$.
    \end{enumerate}
\end{theorem}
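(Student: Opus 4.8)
The plan is to reprove this by the iterative‑partitioning (``boosting'') method underlying the binary Multicalibration Theorem (Theorem~\ref{thm:mcpartition}), adapted to the vector‑valued setting, with the number of pieces held in check by rounding the prediction vectors to a grid on the probability simplex; this is essentially the algorithm of \cite{gkrsw21}. First I would vectorize the target: write $\vec{g}^{\rand}(x)=e_{g^{\rand}(x)}\in\{e_1,\dots,e_L\}\subseteq\R^{L}$ for the one‑hot encoding, so that the quantity $v_{P_y}$ in the statement is the $y$-th coordinate of the conditional mean $\vec{v}_P:=\E_{x\sim\D|_P,\,\coins(g^{\rand})}[\vec{g}^{\rand}(x)]\in\Delta([L])$, and multiclass multicalibration says exactly that on every heavy piece $\vec{g}^{\rand}$ is $\F$-indistinguishable, coordinate by coordinate, from the constant vector $\vec{v}_P$. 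Fix a grid $\Lambda\subseteq\Delta([L])$ of granularity $\rho$ (a suitably small polynomial in $\epsilon$; $\rho=\Theta(\epsilon^{4})$ will do), so $|\Lambda|=(1/\rho)^{L-1}$, which is what ultimately yields $k=O\big(L\cdot(1/\epsilon^{4})^{L-1}\big)$.

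Next I would maintain a partition $\Pa$ in which every piece $P$ carries a label $\vec{w}_P\in\Lambda$, keeping the invariant that any two pieces with equal label have been merged (so $|\Pa|\le|\Lambda|$), starting from $\{\X\}$ labelled by the grid point nearest to $\vec{v}_{\X}$. As progress measure take the potential
\[
   \Psi(\Pa)\;=\;1-\E_{x\sim\D,\,\coins(g^{\rand})}\big[\,\|\vec{w}_{P(x)}-\vec{g}^{\rand}(x)\|_2^{2}\,\big]
\]
(equivalently $\E[2\langle\vec{w}_{P(x)},\vec{g}^{\rand}(x)\rangle-\|\vec{w}_{P(x)}\|_2^{2}]$, since $\vec{g}^{\rand}$ is a unit vector), which lies in $[-1,1]$ and increases precisely when the piece labels are brought closer to the local conditional means of $\vec{g}^{\rand}$. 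The main loop: while $\Pa$ is not $(\F,\epsilon,\gamma)$-approximately multicalibrated, call a weak agnostic learner for $\F$ on the residual functions $x\mapsto\mathbbm{1}[g^{\rand}(x)=y]-w_{P,y}$ (over the heavy pieces, i.e.\ $\eta_P\ge\gamma$, and the labels $y\in[L]$) to get a violated triple $(P,y,f)$ with $\big|\E_{x\sim\D|_P}[f(x)(\mathbbm{1}[g^{\rand}(x)=y]-w_{P,y})]\big|>\epsilon$; then refine $P$ alone by thresholding $f$ into $O(1/\rho)$ value‑buckets, recompute $\vec{v}$ on each bucket, re‑round each bucket's label to $\Lambda$, and globally re‑merge pieces with equal label. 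An ANOVA / law‑of‑total‑variance identity shows the refinement step raises $\Psi$ by an amount polynomial in $\epsilon$ (the covariance lower bound $\epsilon$ forces the bucket conditional means to have variance $\Omega(\epsilon^{2})$ about $\vec{v}_P$), re‑rounding the $O(1/\rho)$ buckets costs at most $O(\rho^{2}L\cdot\eta_P)$ in total (swamped once $\rho$ is small enough), and re‑merging equal‑labelled pieces leaves $\Psi$ unchanged.

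Since $\Psi$ is bounded the loop terminates, and carrying the standard running‑time and oracle‑usage bookkeeping through this algorithm gives the stated parameters: the number of update rounds — and hence of weak‑learner calls, times the $O(1/\rho)$ sub‑buckets and the $O(|\Lambda|\cdot L)$ residual constraints scanned per round — is $q=O\big((L/\epsilon^{2})^{O(L)}\big)$; the partition‑membership circuit is one oracle gate per queried $f$ together with a look‑up table of size $O(|\Lambda|)$ mapping rounded oracle answers to piece indices across all rounds, giving $t=O\big(L\cdot(1/(\epsilon^{4}\gamma))^{L}\cdot\log(|\X|/\epsilon)\big)$, where the $1/\gamma$ enters through the number of rounds and the minimum admissible piece mass; and the number of pieces never exceeds $|\Lambda|$, up to the factor $L$. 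The coin tosses $\coins(g^{\rand})$ are handled throughout by simply taking expectations over them.

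The hard part is the exponential‑in‑$L$ (equivalently, doubly‑exponential‑in‑$\ell$) blow‑up, and keeping it to exactly this order: the grid must be $\poly(\epsilon)$-fine \emph{independently of $L$} so that a single re‑rounding cannot erase the per‑round progress, but this forces $|\Lambda|$ — hence $k$ and the number of constraints scanned per round — to grow like $\rho^{-(L-1)}$, and one must verify this does not feed back into a super‑exponential round count; this is exactly why the downstream PAME$++$ application is confined to $\ell=O(\log\log n)$. A tempting shortcut — run the binary Multicalibration Theorem on each booleanization $\mathbbm{1}[g^{\rand}=y]$ and take a common refinement — fails, since a common refinement of calibrated partitions need not itself be calibrated, so iteration is still required and nothing is saved. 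Everything else (the weak‑agnostic‑learning reduction, the variance estimate, and the passage from the ``multicalibration on average'' guarantee one obtains directly to the ``$\gamma$-approximate'' form in the statement via the reparametrization $\epsilon\leftarrow\epsilon\gamma$, as in Appendix~\ref{sec:mcpartitions}) is routine and parallels the binary case.
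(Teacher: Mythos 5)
The paper does not actually prove Theorem~\ref{thm:mcmc}: it is imported verbatim from Gopalan et al.~\cite[Thm.~9.2]{gkrsw21}, and the only remark the paper makes about the underlying construction is that it is ``the branching program algorithm'' of that work, in explicit contrast to the H\'ebert-Johnson et al.\ algorithm used for Theorem~\ref{thm:mcpartition}. So there is no in-paper proof for you to match; what you have written is a from-scratch reconstruction, and the right question is whether it is internally sound.

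The broad shape — vectorize $g$ via one-hot encoding, maintain a partition labelled by a $\rho$-grid $\Lambda\subseteq\Delta([L])$, drive an $\ell_2$ potential via an ANOVA/variance argument, re-round and re-merge to cap the number of pieces at $|\Lambda|$, and finally pass from multicalibration-on-average to $\gamma$-approximate MC via $\epsilon\leftarrow\epsilon\gamma$ — is plausible and matches the binary-case proof in Appendix~\ref{sec:mcpartitions} in spirit. The bound $k=O(L\cdot(1/\epsilon^4)^{L-1})$ falling out of $|\Lambda|=(1/\rho)^{L-1}$ with $\rho=\Theta(\epsilon^4)$ is also consistent. Your observation that the common-refinement shortcut fails, and that the grid size is the source of the $L$-exponential blow-up, is correct.

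The concrete gap is in the accounting of $q$, the number of oracle gates. You write $q$ as ``(number of update rounds)~$\times$~($O(1/\rho)$ sub-buckets)~$\times$~($O(|\Lambda|\cdot L)$ residual constraints scanned per round),'' but only the first factor contributes oracle gates. Sub-bucket thresholds and scans over $(P,y)$ pairs are work the \emph{algorithm} performs while searching for a violation; the partition-membership circuit $\hat f$ need only evaluate, on a given input $x$, the (at most one) function $f\in\F$ that was used to refine in each round, then consult a lookup table. So $q$ is controlled by the round count $T$, not by $T\cdot(1/\rho)\cdot|\Lambda|L$. Moreover, your own potential argument — $\Psi$ has range $O(1)$, each round on a $\gamma$-heavy piece gains $\Omega(\epsilon^2\gamma)$ net of an $O(\rho^2 L)$ re-rounding loss — yields $T=\mathrm{poly}(1/\epsilon,1/\gamma,L)$, i.e.\ a \emph{polynomial} $q$, not the stated $(L/\epsilon^2)^{O(L)}$. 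That would be a stronger conclusion than the theorem claims (so not fatal), but it means the displayed derivation of $q$ does not actually prove the stated bound: the formula you wrote is simply not the definition of $q$, and the number it produces coincides with the target only by accident of the exponents. A related soft spot: the interaction between $\rho$, $\gamma$, and $L$ needed to keep the per-round net gain positive (roughly $\rho\ll\epsilon\sqrt{\gamma/L}$ before reparametrization) is asserted rather than tracked, and it is not obvious that $\rho=\Theta(\epsilon^4)$ independent of $L$ and $\gamma$ is always admissible. To close the argument you would need to (i) state $q$ as the number of distinct $f\in\F$ applied across rounds, (ii) bound the round count explicitly from the potential, and (iii) verify the $\rho$-vs-progress inequality after the $\epsilon\leftarrow\epsilon\gamma$ substitution.
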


As we pointed out in the introduction, the complexity parameters in the Multiclass MC Theorem have an exponential dependence on the number of classes $L$, which is not the case in the real-valued setting of multicalibration (Theorem~\ref{thm:mcpartition}).
Moreover, in the boolean case (i.e., when $L=2$), the parameters of the Multiclass MC Theorem do not match those stated in the real-valued case (Theorem~\ref{thm:mcpartition}).
This is because these two theorems use different algorithms: Theorem~\ref{thm:mcpartition} uses the parameters from H\'ebert-Johnson et al. MC algorithm \cite{hkrr18}, whereas the Multiclass Theorem uses the parameters from the branching program algorithm from Gopalan et al. \cite{gkrsw21}.

\subsection{The PAME$++$ Theorem}

For the multiclass version of the PAME$++$ theorem, we need to generalize the definition of $1-b_P$ to $\ell$-bit functions:

\begin{definition}\label{def:mp}
    Given a joint distribution $(X, B)$ defined over the domain $\X \times [L]$ and $P \subseteq \X$, we let $m_P = \max_z \Pr[B=z \mid x\in P] = \max_z \Pr[B_P=z]$.
\end{definition}

Thus, $m_P \in [0,1]$ for $L$.
Moreover, in the case where $L=2$, then $b_P = 1 - m_P$.
We can now state our proposed PAME$++$ theorem; we use $X_P$ to indicate that $X$ is conditioned on $x \in P$.

\begin{theorem}[PAME$++$]\label{thm:pame++}
Let $\F$ be any class of functions $f: \X  \times [L] \rightarrow [0,1]$ closed under negation and such that $\mathbf{0}, \mathbf{1} \in \F$, let $(X, B)$ be a joint distribution on $\{0, 1\}^n \times [L]$, and let $\epsilon, \gamma > 0$. 
There exists a partition $\Pa \in \F_{t, q, k}$ of $\X$ with $t = O\big( L \cdot (1/(\epsilon^4 \gamma))^{L} \cdot \log(|\X|/\epsilon) \big)$, $q = O\big((L/\epsilon^2)^{O(L)}\big)$, $k = O\big( L \cdot (1/\epsilon^4)^{L-1} \big)$ which satisfies that for all $P \in \Pa$ such that $\eta_P \geq \gamma$,
$B_P$ has $(\F, \epsilon)$-PAME at least $\log(1/m_P) = \Hb_{\infty}(B_P)$
given $X_P$.
\end{theorem}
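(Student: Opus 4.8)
The plan is to apply the Multiclass Multicalibration Theorem (Theorem~\ref{thm:mcmc}) to the randomized function $\grand\colon\X\rightarrow[L]$ given by $\grand(x)\sim B\vert_{X=x}$, over the marginal distribution $\D_X$ of $X$, and then to read the per-class multicalibration guarantee as computational indistinguishability of joint distributions --- the multiclass analogue of statement~(\ref{item:bern}) of Lemma~\ref{thm:tfae}. Since the multiclass MC definition uses distinguishers $\X\rightarrow[0,1]$ rather than $\X\times[L]\rightarrow[0,1]$, first pass from $\F$ to the class $\F''$ of ``slices'' $f_y\colon x\mapsto f(x,y)$ for $f\in\F$ and $y\in[L]$; each $f_y$ is obtained from $f$ by hardwiring $y$, so a partition in $\F''_{t,q,k}$ lies in $\F_{t',q,k}$ for a slightly larger $t'$, and (as in Lemma~\ref{thm:tfae}) a small change to the error parameter suffices for the reduction. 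Invoking Theorem~\ref{thm:mcmc} then produces a low-complexity partition $\Pa$ such that for every $P$ with $\eta_P\geq\gamma$, every $y\in[L]$, and every $f\in\F$,
\[
\Bigl|\E_{x\sim\D_X\vert_P,\,\coins(\grand)}\bigl[f(x,y)\cdot(\mathbbm{1}[\grand(x)=y]-v_{P_y})\bigr]\Bigr|\leq\epsilon/L,
\]
where $v_{P_y}=\Pr[B_P=y]$ by the definition of the multiclass calibration values.

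For each large-enough piece $P$, let $C_P$ be the random variable on $[L]$ that is \emph{independent} of $X_P$ with $\Pr[C_P=y]=v_{P_y}$; this is the multiclass ``constant-Bernoulli'' object, and it will be the witness required by Definition~\ref{def:pamenonunif}. Writing an arbitrary $f\in\F$ as $f(x,z)=\sum_{y}\mathbbm{1}[z=y]\,f(x,y)$ gives
\[
\E[f(X_P,B_P)]-\E[f(X_P,C_P)]=\sum_{y}\E_{x\sim\D_X\vert_P,\,\coins(\grand)}\bigl[f(x,y)\cdot(\mathbbm{1}[\grand(x)=y]-v_{P_y})\bigr],
\]
and bounding each of the $L$ terms by $\epsilon/L$ shows that $(X_P,B_P)$ is $(\F,\epsilon)$-indistinguishable from $(X_P,C_P)$. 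Since $C_P$ is independent of $X_P$, we have $\tilde{\Hb}_{\infty}(C_P\mid X_P)=\Hb_{\infty}(C_P)=\log(1/\max_y v_{P_y})=\log(1/m_P)$, and because $v_{P_y}=\Pr[B_P=y]$ this equals $\Hb_{\infty}(B_P)$. Hence $C_P$ witnesses that $B_P$ has $(\F,\epsilon)$-PAME at least $\log(1/m_P)=\Hb_{\infty}(B_P)$ given $X_P$, which is the conclusion.

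I expect the main obstacle to be the complexity bookkeeping in the first paragraph: one must confirm that slicing $\F$ at each label, calling the multiclass theorem with an $L$-fold smaller error parameter, and reassembling the $L$ per-class bounds into a guarantee against the original class $\F$ on $\X\times[L]$ all fit within the complexity bounds stated in Theorem~\ref{thm:pame++} (equivalently, those of Theorem~\ref{thm:mcmc}), absorbing the $\poly(L)$ overheads. The rest --- defining $C_P$, the telescoping sum above, and computing the average min-entropy of an independent $C_P$ --- is routine.
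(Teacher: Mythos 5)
Your proposal is correct and matches the paper's proof for the general case $L>2$: both apply the Multiclass MC Theorem to $g^{\rand}(x)\sim B\vert_{X=x}$ with error parameter $\epsilon/L$ against the sliced class $\{f_y\}$, take $C_P$ to be the $X_P$-independent distribution with $\Pr[C_P=y]=v_{P_y}$, establish indistinguishability by the telescoping sum over $y\in[L]$, and read off $\tilde{\Hb}_\infty(C_P\mid X_P)=\log(1/m_P)=\Hb_\infty(B_P)$. If anything, your write-up is a bit more streamlined by skipping the detour through Lemma~\ref{thm:tfae} and telescoping directly, but the substance is identical.
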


Observe that in the PAME$++$ case, the complexity parameters of the partition imply that we need $L = O(\log n)$.
The change from $O(n)$ to $O(\log n)$ is due to the complexity parameters obtained in the Multiclass MC Theorem (Theorem~\ref{thm:mcmc}).

\begin{remark}
    In the case where $L=2$, we can obtain better complexity parameters by applying the regular MC Theorem (Theorem~\ref{thm:mcpartition}) instead of the Multiclass MC Theorem (Theorem~\ref{thm:mcmc}).
    In particular, in the boolean case where $L=2$, we obtain a partition $\Pa \in \F_{t, q, k}$ of $\X$ with $t = O(1/(\epsilon^4 \gamma) \log (|\X|/\epsilon))$, $q = O(1/\epsilon^2)$, $k = O(1/\epsilon)$. 
\end{remark}

\smallskip
\textbf{Interpretation of PAME$++$.} 
Before going into the proof of Theorem~\ref{thm:pame++}, we explain what our PAME$++$ theorem entails and how it relates to the original PAME theorem.

\begin{itemize}
    \item In Theorem~\ref{thm:pame++}, we remove the assumption that the function $g$ (where $B=g(X)$) is $\delta$-hard to predict, but still can obtain a partition of the domain such that $B$ has high PAME within each of the pieces of the partition.
    The caveat is that the lower-bound on the amount of PAME that $B$ has within each of the pieces $P$ depends on the parameter $m_P$.
    
    \item The parameter $m_P$ thus provides an abstraction of the original parameter $\delta$ in Theorem~\ref{thm:pamenonunif}, given that the PAME$++$ theorem holds for an arbitrary function $g$.
    Thus, the more balanced the function $g$ is on a piece, the better our bounds are (as it was the case of IHCL$++$).
    
    \item In our $++$ theorem, the original PAME occurs both ``locally'' (on each $P \in \Pa$) and ``globally'' (on $\X)$. 
    Moreover, as we show in Section~\ref{sec:pamefrompame++}, if we define a distribution $B$ as the union of the distributions $B_P$ for each $P \in \Pa$, then we recover the original PAME theorem, although with the restriction that $L = O(\log n)$ instead of the original $O(n)$.
\end{itemize}

We provide two separate proofs; one for the boolean case where $L=2$ and one for the general case $L>2$. 
The latter is a generalization of the former, but we believe that the proof of the boolean case helps provide more intuition.
Moreover, for the case $L=2$ we apply the regular MC Theorem (Theorem~\ref{thm:mcpartition}), whereas in the case where $L>2$ we need to apply the Multiclass MC Theorem.

\begin{proof}[Proof of Theorem~\ref{thm:pame++} for $L=2$]

Without loss of generality, we let $B = g^{\rand}(X)$, where $g^{\rand}: \X \rightarrow [L]$ corresponds to the randomized function satisfying $\Pr[g(x)=b] = \Pr[B=b \mid X = x]$ for all $(x, b) \in \X \times [L]$.

Given the input class $\F$ of functions $f: \X \times \{0,1\} \rightarrow [0,1]$, we define a class $\F'$ of functions $f': \X \rightarrow [0,1]$ as follows: For each $f \in \F$, we add the function $f'(x) = f(x, 1)-f(x, 0)$ and its negation to $\F'$.

We apply the MC partition theorem (Theorem~\ref{thm:mcpartition}) to $\F', g^{\rand}, \D$ and parameters $\gamma, \epsilon$. 
This yields a partition $\Pa \in \F'_{t, q, k}$ of $\X$ with $t = O(1/(\epsilon^4 \gamma) \cdot \log(|\X|/\epsilon))$, $q = O(1/\epsilon^2)$, $k = O(1/\epsilon)$
satisfying 
\[
    \Big|\E_{\substack{x \sim \D\vert_P, \\ \coins(g^{\rand})}} [f'(x) \cdot (g^{\rand}(x) - v_P)] \Big| \leq \epsilon
\]
for all $P \in \Pa$ such that $\eta_P \geq \gamma$. 
By statement \ref{item:bern} in Lemma~\ref{thm:tfae}, it follows that for each such $P$, the distribution $(X_P, B_P)$ is $(\F, \epsilon)$-indistinguishable from the distribution $(X_P, \Bern(v_P))$, where $v_P = \E_{x \sim \D\vert_P}[g(x)]$.

For each $P \in \Pa$, 
we define a joint distribution $(X_P, C_P)$ on the domain $P \times \{0, 1\}$ as follows. 
Given $X_P = x$, we set $C_P(x) = \mathrm{Bern}(v_P)$.
In order to show that $B\vert_P$ has non-uniform $(\F, \epsilon)$-PAME at least $\log(1/(1-b_P))$ given $X_P$, we prove that the joint distribution $(X_P, C_P)$ is such that (1) $\tilde{\Hb}_{\infty}(C_P|X_P) \geq \log(1/(1-b_P))$, and (2) $(X_P, B_P)$ and $(X_P, C_P)$ are $(\F, \epsilon)$-indistinguishable (by Definition~\ref{def:pamenonunif}).
We have just shown that (2) holds by Lemma~\ref{thm:tfae}; we now show (1).

To show (1), we use the original definition of average min-entropy (Definition~\ref{def:AME}):
\[
    \tilde{\Hb}_{\infty}(C_P|X_P) := 
    \log \left( \dfrac{1}{\E_{x \sim X}[1 / 2^{\Hb_{\infty}(C_P|_{X_P=x})}]} \right).
\]
Given that $C_P(x) = \mathrm{Bern}(v_P)$, it follows that
\[
    \Hb_{\infty}(C_P\vert_{X_P=x}) = -\log(\max\{v_P, 1-v_P\}) = -\log(1-b_P);
\]
\[
    \tilde{\Hb}_{\infty}(C_P|X_P) = 
    \log \Big( \dfrac{1}{1/2^{-\log(1-b_P)}} \Big) = 
    \log \Big( \dfrac{1}{1-b_P} \Big),
\]
as we wanted to show.

Lastly, we prove that $\log(1/m_P) = \log (1/(1-b_P)) = \Hb_{\infty}(B_P)$.
We observe that
\begin{align*}
    1 - b_P & = 1 - \min\{ \mathbb{E}_{x \sim X_P} [g(x)], 1 - \mathbb{E}_{x \sim X_P} [g(x)] \} = 1 - \min \{\Pr[g(X_P) = 1], \Pr[g(X_P) = 0]\} 
\\[0.2cm]
    & = \max \{\Pr[g(X_P) = 0], \Pr[g(X_P)=1]\} = 2^{-\Hb_{\infty}(B\vert_P)}.
\end{align*}
where the last equality follows from the definition of min-entropy (Definition~\ref{def:minentropy}). Then,
\[
    \log \Big( \dfrac{1}{1-b_P} \Big) = \log \Big( \dfrac{1}{2^{-\Hb_{\infty}(B_P)}} \Big) = \log \Big( 2^{\Hb_{\infty}(B_P)} \Big) = \Hb_{\infty}(B_P),
\]
as required. 
\end{proof}

\begin{proof}[Proof of Theorem~\ref{thm:pame++} for $L > 2$]

We again let $B = g^{\rand}(X)$ for $g^{\rand}: \X \rightarrow [L]$ without loss of generality.
Given the input class $\F$ of functions $f: \X \times [L] \rightarrow [0,1]$, we define a class $\F'$ of functions $f': \X \rightarrow [0,1]$ as follows.
For each $f \in \F$ and $y \in [L]$, we add $f_y(x) = f(x, y)$ to $\F'$.
We apply the Multiclass MC Theorem (Theorem~\ref{thm:mcmc}) to $\F', g, \D = X$, and with parameters $L, \gamma, \epsilon' := \epsilon/2^{\ell}$.
This yields a partition $\Pa \in \F_{t,q,k}$ of $\X$ with $t = O\big( L \cdot (1/(\epsilon'^4 \gamma))^{L} \cdot \log(|\X|/\epsilon') \big)$, $q = O\big((L/\epsilon'^2)^{O(L)}\big)$, $k = O\big( L \cdot (1/\epsilon'^4)^{L-1} \big)$ satisfying
\[
    \Big|\E_{\substack{x \sim X_P, \\ \coins(g^{\rand})}} [f'(x) \cdot (\mathbbm{1}[g^{\rand}(x)=y] - v_{P_y})] \Big| \leq \epsilon',
\]
for all $P \in \Pa$ such that $\eta_P \geq \gamma$ and for all $y \in \{0,1\}^{\ell}$, where $v_{P_y} := \E_{\substack{x \sim \D\vert_P, \\ \coins(g^{\rand})}}[\mathbbm{1}[g^{\rand}(x)=y]]$.

For each $P \in \Pa$, we define a joint distribution $(X_P, C_P)$ on the domain $P \times [L]$ as follows. 
Given $X_P = x$, we define the distribution $C_P$ as $\Pr[C_P = y] = v_{P_y}$ for each $y \in [L]$, where the $v_{P_y}$ correspond to the $L$ parameters returned by the MC Multiclass Theorem (Theorem~\ref{thm:mcmc}).
By the definition of $v_{P_y}$, it follows that $\sum_{y \in \{0,1\}^{\ell}} \Pr[C_P = y] = 1$, and so $C_P$ is indeed a probability distribution.
Moreover, by the definition of $C_P$ it follows that $\mathbbm{1}[C_P = y] \sim \Bern(v_{P_y})$ for every $y \in \{0,1\}^{\ell}$.
First, we compute a lower bound on $\tilde{\Hb}_{\infty}(C_P|X_P)$.
Recall that $m_P = \max_z \Pr[B_P = z]$.
By the definition of min-entropy, it follows that
\[
    \Hb_{\infty}(C_P\vert_{X_P=x}) = -\log(\max_c \{\Pr[C_P = c]\}) = -\log \big(\max_y v_{P_y} \big).
\]
Then,
\[
    \tilde{\Hb}_{\infty}(C_P|X_P) = \log \Bigg( \dfrac{1}{\E_{x \sim X_P} \big[ 1/2^{-\log(\max_y \{ 1-b_{P_y} \})} \big]} \Bigg) = - \log\big(\max_y v_{P_y}\big) = -\log(m_P),
\]
given that $\Pr[B_P=z] = \E[\mathbbm{1}[g^{\rand}(x)=z]]$.
Lastly, the definition of $m_P$ as the maximum probability mass assigned by $B_P$ directly implies that $m_P = 2^{-\Hb_{\infty}(B_P)}$, by definition of min-entropy. 
Hence, it follows that $\log(1/m_P) = \Hb_{\infty}(B_P)$, as claimed.

Second, we want to show that $(X_P, B_P)$ and $(X_P, C_P)$ are $(\F, \epsilon)$-indistinguishable.
For each $y \in \{0,1\}^{\ell}$ and $P \in \Pa$ such that $\eta_P \geq \gamma$, by the Multiclass MC Theorem it follows that the boolean function $\mathbbm{1}[B_P=y]$ is a constant-Bernoulli function, given that it is $(\F', \epsilon')$-indistinguishable from $v_{P_y}$. 
Therefore, by applying Lemma~\ref{thm:tfae} to the function $\mathbbm{1}[B_P=y]$, class of distinguishers $\F'$, and domain $X_P$, it follows from statement (\ref{item:bern}) that the distribution $(X_P, \mathbbm{1}[B_P=y])$ is $(\F'', \epsilon')$-indistinguishable from the distribution $(X_P, \Bern(v_{P_y}))$, where $\F'' = \F'_{O(n), O(1)}$.

This allows us to show the required indistinguishability guarantee between distributions $(X_P, B_P)$ and $(X_P, C_P)$: 
\begin{align*}
    \big| \E[f(X_P, B_P) - f(X_P, C_P)] \big| &= 
    \Big| \sum_{y \in [L]} \E_{x \sim X_P} \big[f(x, y) \cdot \big(\Pr[B_P=y \mid X=x] - \Pr[C_P = y \mid X=x] \big) \big] \Big| \\
    &= \Big| \sum_{y \in [L]} \E_{\substack{x \sim X_P, \\ \coins(g^{\rand})}} \big[f(x, y) \cdot \big(\mathbbm{1}[g^{\rand}(x)=y] - v_{P_y} \big) \big] \Big| \\
    &\leq \sum_{y \in [L]} \Big| \E_{\substack{x \sim X_P, \\ \coins(g^{\rand})}}[f_y(x) \cdot  (\mathbbm{1}[g^{\rand}(x)=y]-v_{P_y})] \Big| \\
    &\leq \Big| \sum_{y \in [L]} \E_{x \sim X_P} [\epsilon'] \Big|,
\end{align*}
given that $(X_P, \mathbbm{1}[B_P=y])$ being $(\F'', \epsilon')$-indistinguishable from the distribution $(X_P, \Bern(v_{P_y}))$ implies that, with respect to the functions in $\F'$, $\Pr[B_P=y \mid X=x]]$ is $\epsilon'$-close to the expected value of $\Bern(v_{P_y})$, which is $v_{P_y}$.
Thus, we conclude that
\[
    \big| \E[f(X_P, B_P) - f(X_P, C_P)] \big| \leq 2^{\ell} \cdot \epsilon' = \epsilon,
\]
by definition of $\epsilon'$.
\end{proof}

\smallskip
\textbf{Another proof of PAME$++$ using the original PAME theorem.}
Similar to the case of IHCL$++$, we can also prove our PAME$++$ theorem by coupling our lemma characterizing constant-Bernoulli functions (Lemma~\ref{thm:tfae}) with the original PAME theorem. 
We remark that this construction only works in the boolean case $L=2$.
Namely, we begin by apply the MC theorem (Theorem~\ref{thm:mcpartition}) to $\X$ to obtain a multicalibrated partition $\Pa$, and hence every $P \in \Pa$ (such that $\eta_P \geq \gamma$) satisfies the assumption of Lemma~\ref{thm:tfae}.
Then, by statement (\ref{item:hard}) in Lemma~\ref{thm:tfae}, it follows that $B=g(X)$ is $(b-2\epsilon)$-hard over $P$. 
Then, we can apply the original PAME theorem to each (large enough) $P \in \Pa$.

\subsection{Recovering the original PAME from PAME$++$}\label{sec:pamefrompame++}

Similar to how we proceeded in Section~\ref{sec:ihcl}, we now show how to recover the original PAME theorem using the PAME$++$ theorem. 
As in the case of IHCL$++$ (Section~\ref{sec:ihcl++toihcl}, the key idea is to ``glue'' together the sets $P \in \Pa$ that have enough size and are balanced enough.
Namely, such that $\eta_P$ and $m_P$ are larger than some threshold. 
Recall the definition of a ``good'' $P \in \Pa$ from Section~\ref{sec:ihcl}: we say that a set $P \in \Pa$ is $(\gamma, \tau)$\emph{-good} for some $\gamma, \tau > 0$ if $\eta_P \geq \gamma$ and $m_P \geq \tau$. 
Recall that in the original PAME theorem (Theorem~\ref{thm:pamenonunif}) we are assuming that $g$ is $\delta$-hard, unlike in the PAME$++$ statement.

There is one difference in the parameters of the PAME theorem that we recover: given that the PAME$++$ theorem only holds for $L = O(\log n)$, the subsequent recovery of the PAME theorem also only holds for $L = O(\log n)$, rather than the original $L = O(n)$ achieved by Vadhan \& Zheng \cite{vz12, zhe14}.
This is because the parameters in the Multiclass Multicalibration Theorem (Theorem~\ref{thm:mcmc}) have an exponential dependence on the number of classes.

\begin{proof}[Proof of the PAME theorem using PAME$++$]

Let $\X, \F, L, X, B, \epsilon, \delta$ be the assumption parameters in PAME. We define the parameters $\epsilon' := \epsilon^2 \delta$, $\gamma := \epsilon \epsilon'$, and invoke the PAME$++$ theorem with parameters $\epsilon, \gamma$.
By the PAME$++$ theorem, we obtain a partition $\Pa \in \F_{t,q,k}$ of $\X$ with $t = O\big( L \cdot (1/(\epsilon^4 \gamma))^{L} \cdot \log(|\X|/\epsilon) \big)$, $q = O\big((L/\epsilon^2)^{O(L)}\big)$, $k = O\big( L \cdot (1/\epsilon^4)^{L-1} \big)$ such that, for each $P \in \Pa$ where $\eta_p \geq \gamma = \epsilon \epsilon'$, $B_P$ has non-uniform $(\F, \epsilon)$-PAME at least $\log(1/m_P) = H_{\infty}(B_P)$ given $X_P$. 
Let $\tau := \epsilon \delta$. 
By Proposition~\ref{prop:kpdelta} from Section~\ref{sec:ihcl} we know that
\[
    \E_{P \sim \Pa}[m_P \cdot \mathbbm{1}_G(P)] \geq (1-\delta) \cdot (1 - O(\epsilon)).
\]
In order to prove the PAME theorem, we need to show that $B$ has $(\F, \epsilon)$-PAME at least equal to $\log(1/(1-\delta))$ given $X$. 
To do so, by definition of PAME, we need to show that there exists a distribution $C$ jointly distributed with $X$ such that (1) $\tilde{\Hb}_{\infty}(C|X) \geq \log(1/(1-\delta))$, and (2) $(X, B)$ and $(X, C)$ are $(\F, \epsilon)$-indistinguishable. 

As we did in Section~\ref{sec:ihcl}, we construct such a $C$ by doing ``gluing together'' the distributions $C_P$ obtained from invoking the PAME$++$ theorem such that $P$ is $(\gamma, \tau)$-\emph{good} (i.e., such that $\eta_P \geq \gamma$ and $m_P \geq \tau$).
Namely, for each $P \in \Pa$, let $C_P$ be distributed as $\C_P$, and recall that $\D(\Pa)\vert_G$ denotes the distribution $\D(\Pa)$ restricted to the set $\{\cup_{P \in \Pa} \, x \in P \mid \mathbbm{1}_G(P)\} \subseteq \X$.
Then, we define the distribution $\C$ on $\X$ as follows: for each $x \in \X$, we let $\C(x) = \C_P(x)$, where $P \sim \D(\Pa)|_G$.
Let $C$ correspond to the random variable distributed as $\C$; i.e., $C \sim \C$. 
By the PAME$++$ theorem, we know that for all $P \in \Pa$ such that $\eta_P \geq \gamma$,
\[
    \tilde{\Hb}_{\infty}(C_P|X_P) \geq \log \Big( \dfrac{1}{m_P} \Big).
\]
Hence, by the definition of $C$ and by applying the bound $\E_{P \sim \Pa}[m_P \cdot \mathbbm{1}_G(P)] \geq (1-\delta) \cdot (1- O(\epsilon))$, it follows that
\[
    \tilde{\Hb}_{\infty}(C|X) \geq \log \Bigg( \dfrac{1}{(1-\delta) \cdot (1-O(\epsilon))} \Bigg).
\]
Similarly, by the PAME$++$ theorem we know that every joint distribution $(X_P, C_P)$ is $(\F, \epsilon)$-indistinguishable from $(X|_P, B|_P)$ for each $P \in \Pa$ such that $\eta_P \geq \gamma$. 
Thus, by the definition of $C$, and following the same reasoning as in Section~\ref{sec:ihcl++toihcl},
it follows that the distributions $(X, C)$ and $(X, B)$ are $(\F, \epsilon)$-indistinguishable.
Putting the two facts together, by definition of PAME (Definition~\ref{def:pamenonunif}), it follows that $B$ has non-uniform $(\F, \epsilon)$-PAME at least 
\[
    \log \Bigg( \frac{1}{(1-\delta) \cdot (1-O(\epsilon))} \Bigg).
\]
As in Section~\ref{sec:ihcl++toihcl}, we can modify the distribution $\C$ to achieve the lower bound $\log(1/(1-\delta))$ on the PAME of $B$ while changing the indistinguishability parameter between $(X, C)$ and $(X, B)$ by at most $O(\epsilon)$.
Hence, this proves the PAME theorem (Theorem~\ref{thm:pamenonunif}) with $L = O(\log n)$.
\end{proof}

\subsection{Relationship between IHCL and the PAME Theorem}\label{sec:ihcltopame}

In his PhD thesis \cite{zhe14}, Zheng explains how to view the PAME theorem as a generalization of Impagliazzo's Hardcore Lemma to larger alphabets, and how IHCL implies the PAME theorem with $\ell=1$ \cite[Ch. 4]{zhe14}.
We summarize this implication and then use it coupled with our IHCL$++$ in order to sketch another proof for our PAME$++$ theorem with $L=2$.
The full proof of this implication can be found in one of the author's thesis \cite{casacuberta2023thesis}.

The key idea is that we can use a hardcore set to construct a distribution with high average min-entropy.
This is a sensible approach precisely because by Yao's Lemma or by our Lemma~\ref{thm:tfae}, we know that a hardcore set implies indistinguishability to a random bit, and a random bit has maximal min-entropy.
More specifically, given a distribution $(X, B)$ defined on $\X \times \{0,1\}$, where $B = g^{\rand}(X)$ for a $\delta$-hard function $g^{\rand}$, a family of distinguishers $\F$, we first apply IHCL to $g^{\rand}$ to obtain a hardcore set $H$ with parameter $\epsilon$.
Then, we define a distribution $(X, C)$ as follows.
Given $X=x$, let
\begin{equation}\label{eq:defC}
    C(x) := 
    \begin{cases} 
      \{0, 1\} \textrm{ each with probability } 1/2, & \textrm{ if } x \in H, \\
      g^{\rand}(x), & \textrm{ if } x \notin H.
   \end{cases}
\end{equation}
That is, if $x \in H$, then $C(x)$ returns a uniform random bit.
Then, one can show that (1) $\tilde{\Hb}_{\infty}(C|X) \geq \log (1/(1-\delta))$, and that (2) $(X, B)$ and $(X, C)$ are $(\F, \epsilon)$-indistinguishable in a similar way as in the proof of Theorem~\ref{thm:pame++}.
By definition of PAME (Definition~\ref{def:pamenonunif}), it follows that $(X, B)$ has $(\F, \epsilon)$-PAME at least $\log(1/(1-\delta))$, as required.

Using this idea, one can then prove the PAME$++$ theorem with $L=2$ using the set-version of our IHCL$++$ (Theorem~\ref{thm:ihcl++set}) as follows: 
By IHCL$++$ (our directly from Lemma~\ref{thm:tfae}), we can obtain a low-complexity partition $\Pa$ such that for every $P \in \Pa$ satisfying $\eta_P \geq \gamma$ for some $\gamma>0$, there exists a hardcore set $H_P \subseteq P$ with indistinguishability parameter $\epsilon / b_P$ and of density $|H_P|/|P| \geq 2b_P$.
Given this collection of hardcore sets, following the IHCL $\implies$ PAME construction, we build a collection of distributions $(X_P, C_P)$ on the domain $P \times \{0, 1\}$ as follows: Given $X_P = x$,
\[
    C_P(x) = 
    \begin{cases} 
      \{0, 1\} \textrm{ each with } \Pr = 1/2, & \textrm{ if } x \in H_P, \\
      g^{\rand}(x), & \textrm{ if } x \notin H_P.
   \end{cases}
\]
Then, one can show that (1) $\tilde{H}_{\infty}(C_P|X_P) \geq \log(1/(1-b_P))$, and that $(X_P, C_P)$ and $(X_P, B_P)$ are $(\F, \epsilon/b_P)$-indistinguishable, where $\vert_P$ denotes the restriction of the distribution on the domain $P \subseteq \X$.
By the definition of PAME, this shows that every $(X_P, B_P)$ has PAME at least $\log(1/(1-b_P))$, as required.
A caveat of this proof of PAME$++$ is that the set version of IHCL$++$, unlike the distribution version of IHCL$++$, requires an upper-bound on the size of $\F$ (see the parameters in Theorem~\ref{thm:ihcl++set}).
This is due to the probabilistic-method construction of the hardcore sets from the hardcore distributions (see Appendix~\ref{sec:setsmeasures} for details). 
However, we can get rid of this upper bound by modifying the previous construction as follows: Given a hardcore distribution $\Ha_P$, let $Y \sim \mathrm{Bern}(\Ha(x))$.
Then, given $X_P = x$, $C_P(x)$ is equal to a uniform random bit if $Y=1$, and equal to $g(x)$ if $Y=0$.
This is the same construction (but in reverse) in the conversion from the IHCL distribution version to the IHCL set version (see Theorem~\ref{thm:ihcl++set} and Appendix~\ref{sec:setihcl}).
This construction allows us to instead use the distribution version of IHCL$++$ to prove PAME$++$.

\section{The Dense Model Theorem}\label{sec:dmt}

The Dense Model Theorem (DMT) is a result originating from additive number theory \cite{gt08, tz08} that states the following in complexity-theoretic terminology \cite{rttv08, ttv09}: Let $S \subseteq \X$ be a $(\F, \epsilon, \delta)$-pseudodense set, which means that $S$ appears to have density at least $\delta$ to the distinguishers in $\F$ (with an $\epsilon$ slack), but is not necessarily dense.
The DMT shows that there exists a distribution that is $(\F', \epsilon/\delta)$-indistinguishable from $S$ that is actually dense (which is referred to as a ``dense model'' for $S$).
A DMT was one of the crucial proof components used in Green and Tao's celebrated result that there exist arbitrarily long arithmetic progressions of primes \cite{gt08}. 
A more general Dense Model Theorem was later proven by Tao and Ziegler \cite{tz08}, which generalized it to other domains. 
Reingold, Trevisan, Tulsiani, and Vadhan \cite{rttv08} gave complexity-theoretic analogues of the Dense Model Theorem, as well as a simpler proof of the DMT (see also \cite{gow10}).

\subsection{Definitions}

For this section, we require the definition of indistinguishable distributions (Definition~\ref{def:indistdist}) and density of a distribution (Definition~\ref{def:densitydist}).
Given a distribution $\D$ on $\X$, the density of a subset $S \subseteq \X$ corresponds to the probability mass endowed to $S$ by the distribution $\D$.
Hence a set $S$ also induces a distribution over $\X$.
When we say that a distribution $\D$ on $\X$ is a ``model'' for a distribution $\Sa$ defined on $\X$, we mean that the distributions $\D$ and $\Sa$ are indistinguishable (Definition~\ref{def:indistdist}).

If a distribution $\Sa$ has density $\delta$ in $\D$, then for every $f: \X \rightarrow [0,1]$ the following holds \cite{imp08}:
\[
    \E_{x \sim \D}[f(x)] \geq \delta \cdot \E_{x \sim \Sa}[f(x)].
\]
This motivates the definition of \textit{pseudodensity}, which relaxes the notion of density: 

\begin{definition}[Pseudodensity \cite{imp09}]
{\rm
Given a class $\F$ of functions $f: \X \rightarrow [0,1]$, distributions $\D, \Sa$ on $\X$, and $\epsilon, \delta > 0$, we say that $\Sa$ is \emph{$(\F, \epsilon, \delta)$-pseudodense} in $\D$ if for all $f \in \F$,
\[
    \E_{x \sim \D}[f(x)] \geq \delta \cdot \E_{x \sim \Sa} [f(x)] - \epsilon.
\]
}
\end{definition}

We can think of pseudodensity as saying that the distinguishers $f \in \F$ cannot tell whether $\Sa$ has low density or not. 
As shown above, if $\Sa$ is $\delta$-dense, then $\Sa$ is $(\F, \epsilon, \delta)$-pseudodense. 
But another way for $\Sa$ to be pseudodense in $\D$ is the following: If there is a distribution $\mu$ that is $(\F, \epsilon)$-indistinguishable from $\Sa$ by $\F$ and such that $\mu$ is $\delta$-dense in $\D$ \cite{simons17}. 
Assuming the $\delta$-pseudodensity of $\Sa$, the Dense Model Theorem precisely states that we can always find such a $\mu$. 
Hence, the interesting case occurs when the density of $\Sa$ is much smaller than $\delta$, because then the DMT shows that, even in that case, we can still find a distribution of large density that is indistinguishable from $\Sa$ to $\F$. 
Similar to the difference of the classes of distinguishers in the assumption and the conclusion of IHCL and DMT, the pseudodensity assumption on $\Sa$ is with respect an enlarged class of distinguishers, whereas the conclusion is with respect to the original class $\F$. 

We can now state the Dense Model Theorem:

\begin{theorem}[DMT, {\cite[Thm.\ 3]{imp09}}, \cite{rttv08}]\label{thm:DMT1}
Let $\X$ be a finite domain, $\F$ a family of functions $f\colon \X \rightarrow [0,1]$, $\V$ and $\Sa$ two distributions on $\X$, and $\epsilon, \delta > 0$. Then there exist $t, q = \emph{poly} (1/\epsilon, 1/\delta)$ such that if $\Sa$ is $(\F_{t, q}, \epsilon, \delta)$-pseudodense in $\V$, then there exists a $(\delta - O(\epsilon))$-dense distribution $\mu$ in $\V$ such that $\mu$ is $(\F, O(\epsilon/\delta))$-indistinguishable from $\Sa$. 
\end{theorem}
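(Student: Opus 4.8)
\textit{The plan is} to derive the DMT from the Multicalibration Theorem (Theorem~\ref{thm:mcpartition}) and part~(\ref{item:0s1s}) of Lemma~\ref{thm:tfae}, in the same ``build local models, then glue'' style as the other recoveries in the paper. First I would work with $\D=\tfrac12\Sa+\tfrac12\V$ on $\X$ and the ``posterior'' function $g\colon\X\to[0,1]$, $g(x)=\Sa(x)/(\Sa(x)+\V(x))$ (and $g(x)=0$ where $\Sa(x)+\V(x)=0$, a $\D$-null set). Two short identities drive everything: for every $f$, $\E_\D[f\cdot g]=\tfrac12\E_\Sa[f]$ and $\E_\D[f\cdot(1-g)]=\tfrac12\E_\V[f]$; hence, writing $\Sa(P),\V(P),\eta_P$ for the masses of a piece $P$ under $\Sa,\V,\D$, we get $v_P=\E_{\D|_P}[g]=\Sa(P)/(\Sa(P)+\V(P))$ and, crucially, $\D|_P$ conditioned on $g^\rand(x)=1$ equals $\Sa|_P$ while $\D|_P$ conditioned on $g^\rand(x)=0$ equals $\V|_P$. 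Applying Theorem~\ref{thm:mcpartition} to $g$ with a mild enlargement $\widehat\F\supseteq\F$ (closed under negation, containing $\mathbf 0,\mathbf 1$ and $\F_{O(\log|\X|),O(1)}$) and parameters $\epsilon',\gamma$ to be fixed yields a low-complexity partition $\Pa$ with $k=O(1/\epsilon')$ pieces, and for each $P$ with $\eta_P\ge\gamma$, part~(\ref{item:0s1s}) of Lemma~\ref{thm:tfae} says $\V|_P$ is $\bigl(\F,\ \epsilon'/(v_P(1-v_P))\bigr)$-indistinguishable from $\Sa|_P$ --- i.e. on each large piece $\V|_P$ is a model for $\Sa|_P$ of true density $\V(P)$, a ``local DMT''.

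The gluing step is the heart of the proof. A plain convex combination of the $\V|_P$ with weights proportional to $\Sa(P)$ need not be $\delta$-dense, since $\Sa$ may concentrate on pieces with $\Sa(P)\gg\delta\,\V(P)$. Instead I would use the \emph{capped} weights $w_P=\min\{\Sa(P),\ \V(P)/\delta\}$, ranging over pieces with $\eta_P\ge\gamma$ and $w_P>0$, set $Z=\sum_P w_P$, and define $\mu=\tfrac1Z\sum_P w_P\,(\V|_P)$. Density is then automatic: for $x$ in its piece $P_x$, $\mu(x)=\tfrac1Z w_{P_x}\,\V(x)/\V(P_x)\le \V(x)/(Z\delta)$, so $\mu$ is $(Z\delta)$-dense in $\V$.

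It remains to show $Z\ge 1-O(\epsilon)$, and this is exactly where pseudodensity enters. The total mass lost to capping is $\sum_P(\Sa(P)-w_P)=\Sa(T)-\V(T)/\delta$, where $T$ is the union of the pieces with $\Sa(P)>\V(P)/\delta$; since $\mathbbm 1_T$ is computed by the partition-membership circuit followed by a $k$-bit table, it lies in $\F_{t,q}$ for suitable $t,q$ polynomial in $1/\epsilon$, $1/\delta$ (and $\log|\X|$), so $(\F_{t,q},\epsilon,\delta)$-pseudodensity gives $\V(T)\ge\delta\Sa(T)-\epsilon$, hence $\Sa(T)-\V(T)/\delta\le\epsilon/\delta$. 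Adding the $O(\gamma k)$ mass on pieces with $\eta_P<\gamma$ and choosing $\epsilon'=\Theta(\epsilon)$, $\gamma=\Theta(\epsilon^2\delta)$ gives $Z\ge 1-O(\epsilon/\delta)$, so $\mu$ is $(\delta-O(\epsilon))$-dense. For indistinguishability of $\mu$ from $\Sa$, I would interpolate through $\widehat\mu:=\tfrac1Z\sum_P w_P(\Sa|_P)$: replacing each $\V|_P$ by $\Sa|_P$ costs $\sum_P \tfrac{w_P}{Z}\cdot\tfrac{\epsilon'}{v_P(1-v_P)}$ for any $f\in\F$, and using $\Sa(P)=2\eta_P v_P$, $\V(P)=2\eta_P(1-v_P)$ one checks $w_P/(v_P(1-v_P))\le 4\eta_P/\delta$ in each branch of the cap (when $w_P=\Sa(P)$ one has $1-v_P\ge\delta/2$; when $w_P=\V(P)/\delta$ one has $v_P\ge 1/(1+\delta)$), so this sum is at most $(4\epsilon'/(Z\delta))\sum_P\eta_P=O(\epsilon'/\delta)$; replacing $\widehat\mu$ by $\Sa$ costs at most $\sum_P|\tfrac{w_P}{Z}-\Sa(P)|\le 2(1-Z)=O(\epsilon/\delta)$. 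Summing, $\mu$ is $(\F,O(\epsilon/\delta))$-indistinguishable from $\Sa$, as required.

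The main obstacle --- and the one idea genuinely beyond the template of the IHCL/PAME recoveries --- is pinning the density constant at $\delta-O(\epsilon)$: a hard threshold on pieces loses a constant factor, so one needs the capped weighting together with the observation that pseudodensity must be applied \emph{once, to the union} of the capped pieces rather than piece by piece (which would lose a factor of $k$). Everything else --- the enlargement $\widehat\F$, the parameter choices, and the two-branch bound on $w_P/(v_P(1-v_P))$ --- is routine bookkeeping.
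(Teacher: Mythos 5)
Your proposal is correct and reaches the same target, but the gluing step is genuinely different from the paper's, in a way worth highlighting. The paper's recovery of DMT from DMT$++$ glues the local models $\V|_P$ with weights proportional to $\Sa(P)$, but restricted to $(\gamma,\tau)$-\emph{good} pieces (hard thresholds on $\eta_P$ and $b_P$), and then controls the density by applying the pseudodensity hypothesis \emph{piece by piece} via Proposition~\ref{prop:pseudodense} ($\delta\,\Sa(P)\le\V(P)+\epsilon$ for each large $P$). This gives $\Sa(P)/\V(P)\le \tfrac1\delta\bigl(1+\tfrac{\epsilon}{\V(P)}\bigr)$ on each retained piece, which is only useful if $\V(P)$ is not too small --- precisely what the $\tau$-threshold buys, at the cost of a delicate (and, as written, somewhat garbled) parameter balance between $\gamma$, $\tau$, and $\epsilon'$. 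You replace the hard threshold by the \emph{capped} weights $w_P=\min\{\Sa(P),\V(P)/\delta\}$, which makes $(Z\delta)$-density of $\mu$ immediate without any lower bound on $\V(P)$, and then you apply pseudodensity \emph{once}, to the indicator of the union $T$ of the capped pieces, obtaining $1-Z\le\epsilon/\delta+O(\gamma k)$ in a single clean step. This is a real simplification: it avoids both the extra $\tau$ parameter and the need to control $\epsilon/\V(P)$ on each piece, and it makes the $\delta-O(\epsilon)$ density constant fall out directly rather than emerging from a brittle choice of $\epsilon',\gamma,\tau$. Your two-branch bound $w_P/(v_P(1-v_P))\le 4\eta_P/\delta$ on the indistinguishability side is also slicker than the paper's corresponding estimate. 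The only cosmetic divergence is that you work with the posterior $g(x)=\Sa(x)/(\Sa(x)+\V(x))$ on $\X$, where the paper formally augments the domain to $\X\times\{0,1\}$ to force disjoint supports; these are interchangeable. In short: same high-level architecture (MC partition, Lemma~\ref{thm:tfae}(3) for local models, glue), but your capping-plus-single-application-of-pseudodensity gluing is cleaner and more clearly correct than the thresholding-plus-piecewise-pseudodensity route the paper takes.
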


Therefore, in order to prove the DMT, we need to construct a distribution $\mu$ such that:

\begin{enumerate}
    \item \emph{Density.} Distribution $\mu$ has density $\delta - O(\epsilon)$ in $\V$.
    \item \emph{Indistinguishability.} Distributions $\mu$ and $\Sa$ are $O(\F, \epsilon/\delta)$-indistinguishable.
\end{enumerate}

\smallskip
\textbf{Similarities to Impagliazzo's Hardcore Lemma.} The Dense Model Theorem shares many similarities to Impagliazzo's Hardcore Lemma (Section~\ref{sec:ihcl}). 
We highlight some of their parallels, which also apply to their $++$ counterparts, as we will see with our DMT$++$ statement. 
\begin{enumerate}
    \item \textbf{The assumption.} 
    In IHCL, we assume that the function $g$ is $(F_{t, q}, \delta)$-hard. 
    In DMT, we assume that $\Sa$ is $(\F_{t,q}, \epsilon, \delta)$-pseudodense in $\V$. 
    \item \textbf{The density.} 
    In IHCL, we want to ensure that the hardcore distribution $\mu$ is dense enough; more concretely, $2\delta$-dense. 
    In DMT, we want the distribution $\mu$ to be dense enough; more concretely, $(\delta-O(\epsilon))$-dense. 
    \item \textbf{Indistinguishability.} 
    In IHCL, we need to ensure that $g$ is strongly hard when sampling according to $\mu$, which corresponds to $g$ being indistinguishable from the constant $1/2$ function when sampling according to $\mu$.
    In DMT, we want the distribution $\mu$ to be indistinguishable from $\Sa$. 
\end{enumerate}

\subsection{The DMT++ Theorem}

We can now state our DMT$++$ theorem:

\begin{theorem}[DMT$++$]\label{thm:dmt++}

Let $\X$ be a finite domain, let $\F$ be a family of functions $f\colon \X \rightarrow [0,1]$, let $S, V \subseteq \X$ be two disjoint sets, let distributions $\Sa$ and $\V$ be defined on $S$ and $V$, respectively, and let $\epsilon, \gamma > 0$.
There exists a partition $\Pa \in \F_{t, q, k}$ of $S \cup V$ with $t = O(1/(\epsilon^4 \gamma) \cdot \log(|\X|/\epsilon))$, $q = O(1/\epsilon^2)$, $k = O(1/\epsilon)$, which satisfies that for each $P \in \Pa$ such that $\eta_P \geq \gamma$, distributions $\Sa_P$ and $\V_P$ are $(\F, \epsilon_P)$-indistinguishable for all $P \in \Pa$, where
\[
    \eta_P = \dfrac{1}{2} \Big( \Pr_{x \sim \Sa}[x \in P] + \Pr_{x \sim \V}[x \in P] \Big), \quad 
    v_P = \dfrac{\Pr_{x \sim \Sa}[x \in P]}{2 \eta_P}, \quad
    \epsilon_P = \dfrac{\epsilon}{v_P \cdot (1-v_P)}.
\]
\end{theorem}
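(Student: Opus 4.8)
The plan is to follow the same template as the IHCL$++$ proof: choose a distribution and a function to which the Multicalibration Theorem applies, invoke it, and then read off the conclusion from Lemma~\ref{thm:tfae} --- this time from part~(\ref{item:0s1s}).

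Concretely, I would take the domain to be $S \cup V$, let $\D := \tfrac12\Sa + \tfrac12\V$ be the distribution that flips a fair coin and then samples from $\Sa$ or from $\V$ accordingly (its support lies in $S \cup V$ since $S$ and $V$ are disjoint), and let $g\colon S\cup V \to \zo$ be the indicator $\mathbbm{1}_S$ of $S$, so $g \equiv 1$ on $S$ and $g \equiv 0$ on $V$. After adjoining $\mathbf 0,\mathbf 1$ and the negations of all $f\in\F$ to $\F$ so that Lemma~\ref{thm:tfae} applies --- this costs only constant factors in complexity and can be absorbed into the usual small change of distinguisher class --- I apply the Multicalibration Theorem (Theorem~\ref{thm:mcpartition}) to $\F, g, \D$ with parameters $\epsilon,\gamma$. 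This yields a partition $\Pa \in \F_{t,q,k}$ of $S\cup V$ with the stated $t,q,k$ such that for every $P \in \Pa$ with $\eta_P := \Pr_{x\sim\D}[x\in P] \ge \gamma$, the function $g$ is $(\F,\epsilon)$-indistinguishable on $\D\vert_P$ from the constant $v_P := \E_{x \sim \D\vert_P}[g(x)]$.

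Next I would translate these quantities into the ones named in the statement. Since $\D$ is the even mixture, $\eta_P = \tfrac12\big(\Pr_{x\sim\Sa}[x\in P] + \Pr_{x\sim\V}[x\in P]\big)$; since $\Sa$ is supported on $S$ (disjoint from $V$) and $g = \mathbbm{1}_S$, we get $\Pr_{x\sim\D}[x\in P\cap S] = \tfrac12\Pr_{x\sim\Sa}[x\in P]$, hence $v_P = \Pr_{x\sim\D\vert_P}[x\in S] = \Pr_{x\sim\Sa}[x\in P]/(2\eta_P)$; and conditioning $\D\vert_P$ on $g(x)=1$ (resp.\ $g(x)=0$) lands entirely in $S$ (resp.\ $V$), so $(\D\vert_P)\vert_{g(x)=1} = \Sa_P$ and $(\D\vert_P)\vert_{g(x)=0} = \V_P$. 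Now for each $P$ with $\eta_P\ge\gamma$ and $0<v_P<1$ (so that $\Sa_P$ and $\V_P$ are both defined; the remaining pieces are excluded from the conclusion, as in the $\gamma$-small case), $g$ is Boolean (so $g^{\rand}=g$) and the piece satisfies the hypothesis of Lemma~\ref{thm:tfae} with distribution $\D\vert_P$. Part~(\ref{item:0s1s}) of that lemma then states that $(\D\vert_P)\vert_{g(x)=1}$ and $(\D\vert_P)\vert_{g(x)=0}$ are $\big(\F', \tfrac{\epsilon}{v_P(1-v_P)}\big)$-indistinguishable for a class $\F'$ with $\F'_{c\log|\X|,c}\subseteq\F$, i.e.\ $\Sa_P$ and $\V_P$ are $(\F',\epsilon_P)$-indistinguishable with $\epsilon_P = \epsilon/(v_P(1-v_P))$, which is exactly the claim.

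I do not expect a substantive obstacle: given Lemma~\ref{thm:tfae}, the argument is essentially bookkeeping. The only places that need care are making the objects line up --- verifying $(\D\vert_P)\vert_{g(x)=1}=\Sa_P$ and $(\D\vert_P)\vert_{g(x)=0}=\V_P$ and matching $\eta_P,v_P$ to the displayed formulas --- and tracking that the conclusion is really for the slightly shrunk class $\F'$, as is implicit in the informal phrasing. This same bookkeeping is also what controls the recovery of the original DMT later in Section~\ref{sec:dmt}: taking $V$ to be a disjoint copy of $\X$ and forming the convex combination of the models $\V_P$ that selects $P$ with probability $\Pr_{x\sim\Sa}[x\in P]$, one uses the low complexity of $\Pa$ to lower bound the mass of the ``good'' pieces (those where neither $\Pr_{\Sa}[P]$ nor $\Pr_{\V}[P]$ is too small) and thereby obtain a genuinely $\delta$-dense model indistinguishable from $\Sa$.
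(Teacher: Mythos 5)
Your proposal matches the paper's proof exactly: apply the Multicalibration Theorem to $g = \mathbbm{1}_S$ on $\D = \tfrac12\Sa + \tfrac12\V$, verify the formulas for $\eta_P$ and $v_P$, and invoke part~(\ref{item:0s1s}) of Lemma~\ref{thm:tfae} on each large-enough piece to get $\Sa_P \approx_{\epsilon_P} \V_P$. The extra care you take in flagging the degenerate pieces $v_P \in \{0,1\}$ and the shrunk class $\F'$ is the right instinct and is consistent with how the paper treats these issues implicitly.
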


By the definition of density of a distribution, it immediately follows that $\V\vert_P$ is $\delta_P$-dense in $\V$ for $\delta_P = \Pr_{x \sim \V}[x \in P]$.
We can also write the density of $\V\vert_p$ in $\V$ as a function of $v_P$ and $\eta_P$ as follows: $\delta_P = Pr_{x \sim \V}[x \in P] = 2\eta_P \cdot (1-v_P)$.
When $g: \X \rightarrow \{0,1\}$ corresponds to the indicator function for the set $S$, then the parameters $v_P, \eta_P$ correspond to the usual interpretations (i.e., the expected value of $g$ over $P$ and the size parameter of $P$, respectively). 

We remark that, since $b_P = \min\{v_P, 1-v_P\}$, it follows that $\frac{1}{2} \cdot b_P \leq v_P (1-v_P) \leq b_P$. 
Therefore, the term $v_P(1-v_P)$ in $\epsilon_P$ is capturing the balance of $g$ (i.e., the characteristic function of the set $S$) on the set $P \in \Pa$.

\textbf{Interpretation of the DMT$++$.} Before going into the proof of Theorem~\ref{thm:dmt++}, we explain what our DMT++ theorem entails and how it is a stronger and more general version of the original DMT theorem.

\begin{itemize}
    \item In Theorem~\ref{thm:dmt++}, we remove the assumption that the $\Sa$ is $(\F_{t, q}, \epsilon, \delta)$-pseudodense, but we can still create a partition $\Pa$ of the domain $S \cup V$ such that every piece $P$ in the partition has a ``model'' for a piece of the distribution $\Sa$.
    \item We provide a generalization of the notion of pseudodensity, and we are able to argue about the density of each model with the general parameter $\delta_P = \Pr_{x \sim \D}[x \in P]$, rather than using the parameter $\delta$ in the original DMT statement (which we do not have, since we do not have the $(\epsilon, \delta)$-pseudodensity assumption on $\Sa$).
    
    \item In our $++$ theorem, the original DMT occurs both ``locally'' (on each $P \in \Pa$) and ``globally'' (on $\X)$. 
    Moreover, as it was the case of IHCL$++$ and PAME$++$, we can recover the original DMT theorem from our DMT$++$ by ``gluing'' together the ``models'' $\V\vert_P$ of each piece $P$ (according to the weight of each piece as dictated by the distribution $\V$). 
    The density $\Pr_{x \sim \V}[x \in P]$ of each model is such that when we bring back the assumption that $\Sa$ is $(\F_{t, q}, \epsilon, \delta)$-pseudodense, we obtain that the ``global'' model for $\Sa$ has density $O(\delta-O(\epsilon))$, as in the original DMT (Theorem~\ref{thm:DMT1}).
\end{itemize}

An illustration of a simplified version of the DMT$++$ Theorem can be found in Figure~\ref{fig:dmt}.

\begin{figure}[h!]
\centering
    \includegraphics[width=11.5cm]{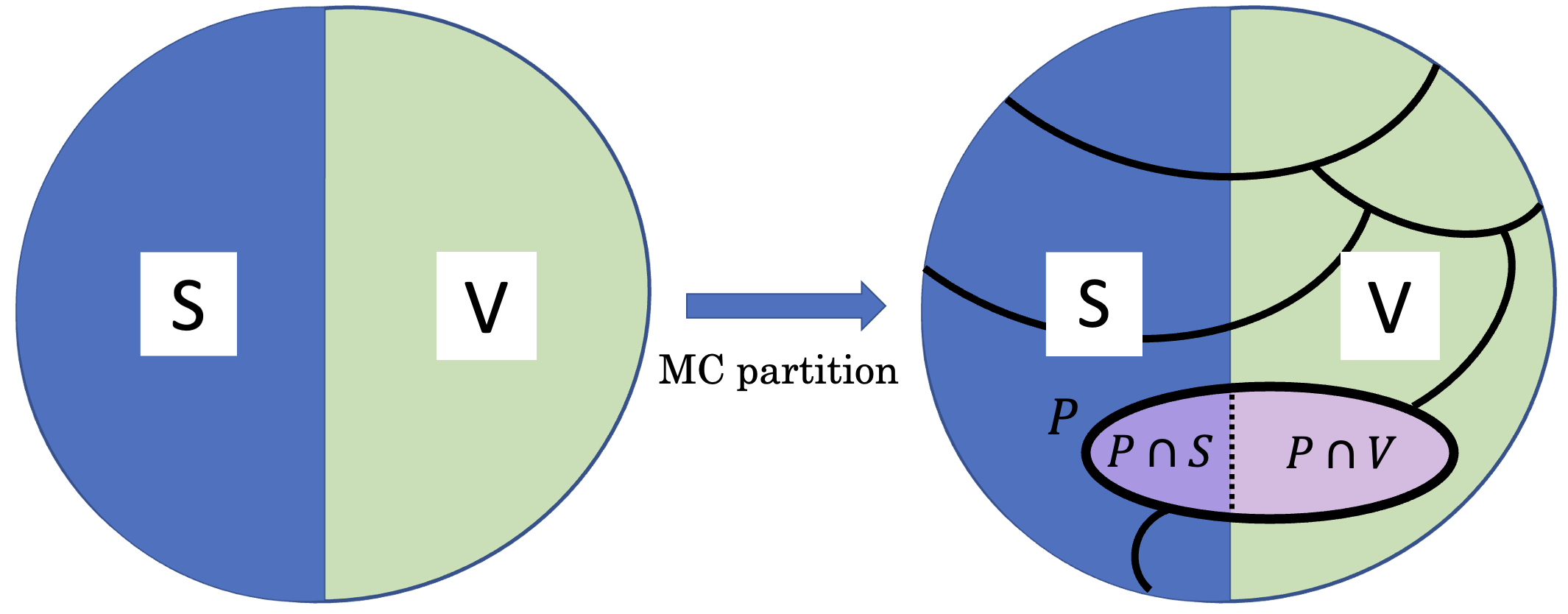}
    \caption{Visual depiction of Theorem~\ref{thm:dmt++} in the case where $\Sa$ and $\V$ correspond to the uniform distributions over $S$ and $V$, respectively. In this case, $\Pr_{x \sim \Sa}[x \in P] = |P \cap S|/|S|$ and $\Pr_{x \sim \V}[x \in P] = |P \cap V|/|V|$. The MC partition (indicated with black lines in the right figure) is such that, for each level set $P$ in the partition, the uniform distribution over $P \cap V$ is a model for the corresponding uniform distribution over $P \cap S$. One such level set is illustrated in purple in the right figure.}\label{fig:dmt}
\end{figure}

\begin{proof}[Proof of Theorem~{\rm\ref{thm:dmt++}}]

Let $g: \X \rightarrow \{0,1\}$ be the indicator function for the set $S$, and we let $\D$ be the distribution over $\X$ defined as $\frac{1}{2} \Sa + \frac{1}{2} \V$. 
We invoke the MC Theorem (Theorem~\ref{thm:mcpartition}) with the domain $\X$, function family $\F$, function $g$, distribution $\D$, and the original $\epsilon, \gamma$ parameters.
This yields a partition $\Pa \in \F_{t, q, k}$ with $t = O(1/(\epsilon^4 \gamma) \cdot \log(|\X|/\epsilon))$, $q = O(1/\epsilon^2)$, $k = O(1/\epsilon)$
satisfying
\[
    \Big|\E_{x \sim \D\vert_P} [f(x) \cdot (g(x) - v_P)] \Big| \leq \epsilon
\]
for all $P \in \Pa$ such that $\eta_P \geq \gamma$. 

We claim that this partition $\Pa$ satisfies the conditions of Theorem~\ref{thm:dmt++}.\footnote{Note that there is no function $g$ in the statement of Theorem~\ref{thm:dmt++}, and so here $v_P$ and $\eta_P$ are as defined in the theorem statement. The point is that when $g$ corresponds to the characteristic function of the set $S$ and $\D$ corresponds to $\frac{1}{2} \Sa + \frac{1}{2} \V$, then $v_P$ and $\eta_P$ correspond to our usual balance and size parameters (Definitions \ref{def:mcpartition}, \ref{def:densityparam}).} 
By the definition of the distribution $\V$, it follows that
\[
    \eta_P = \Pr_{x \sim \D}[x \in P] = 
    \dfrac{1}{2} \Big( \Pr_{x \sim \Sa}[x \in P] + \Pr_{x \sim \V}[x \in P] \Big).
\]
Fix any $P \in \Pa$ such that $\eta_P \geq \gamma$.
First, we notice that since $g$ is boolean and $g(x) = 1$ if and only if $x \in S$ by definition, it follows that
\[
    v_P = \dfrac{\Pr_{x \sim \Sa}[x \in P]}{\Pr_{x \sim \Sa}[x \in P] + \Pr_{x \sim \V}[x \in P]} =
    \dfrac{\Pr_{x \sim \Sa}[x \in P]}{2\eta_P}.
\]
We need to show that the distributions $\Sa\vert_P$ and $\V\vert_P$ are $(\F, \epsilon_P)$-indistinguishable, where 
\[
    \epsilon_P = \dfrac{\epsilon}{v_P (1-v_P)}.
\]
By definition of indistinguishability, this corresponds to showing that
\[
    \Big| \E_{x \sim \Sa\vert_P}[f(x)] - \E_{x \sim \V\vert_P}[f(x)] \Big| \leq \dfrac{\epsilon}{v_P (1-v_P)}.
\]
This follows from our Lemma~\ref{thm:tfae}: by the MC Theorem, we satisfy the assumption of Lemma~\ref{thm:tfae} within each $P \in \Pa$ such that $\eta_P \geq \gamma$. 
Therefore, by statement (\ref{item:0s1s}) in Lemma~\ref{thm:tfae} it follows that the 0s and the 1s of $g$ are indistinguishable.
By our definition of $g$ as the characteristic function of $S$, this corresponds precisely to saying that the distinguishers $f \in \F$ cannot tell whether we are sampling from $\Sa\vert_P$ or from $\V\vert_P$ with advantage larger than $\frac{\epsilon}{v_P (1-v_P)}$, as we wanted to show.
\end{proof}

\subsection{Recovering the original DMT from DMT$++$}\label{sec:recoveringdmt}

We show how to recover the original DMT theorem from our DMT$++$.
As it was the case of IHCL and PAME, the key idea is to ``glue together'' the models $\V_P$ for each ``good'' $P \in \Pa$. 
Namely, those $P \in \Pa$ that have $\eta_P \geq \gamma$ and $b_P \geq \tau$ for parameters $\gamma, \tau$ (Definition~\ref{def:goodP} from Section~\ref{sec:ihcl}). 
Recall that in the DMT statement we are bringing back the assumption that $\Sa$ is $(\F_{t, q}, \epsilon, \delta)$-pseudodense. 
As we did in Section~\ref{sec:ihcl}, we begin by showing an intermediate proposition:

\begin{proposition}\label{prop:pseudodense}

Let $\X, \F, S, V, \Sa, \V, \epsilon, \gamma, \Pa, \eta_P, t, q, k$ be as in Theorem~{\rm\ref{thm:dmt++}}. Moreover, assume that 
$\Sa$ is $(F_{t+k, q}, \epsilon, \delta)$-pseudodense in $\V$ for some $\delta>0$. 
Then, for all $P \in \Pa$ such that $\eta_P \geq \gamma$,
\[
    \delta \cdot \Pr_{x \sim \Sa}[x \in P] \leq \Pr_{x \sim \V}[x \in P] + \epsilon.
\]
\end{proposition}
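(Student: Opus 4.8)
The plan is to mimic the strategy of Proposition~\ref{prop:kpdelta}: apply the pseudodensity hypothesis to a single, cheap distinguisher that picks out the piece $P$. Concretely, let $\hat f \in \F_{t,q}$ be the partition membership function guaranteed by the MC Theorem (Definition~\ref{def:Ftqk}), so that $\Pa = \{\hat f^{-1}(1),\ldots,\hat f^{-1}(k)\}$, and fix the index $i \in [k]$ with $P = \hat f^{-1}(i)$. First I would define the test $f_P := f_{\post} \circ \hat f$, where $f_{\post}\colon [k] \to \{0,1\}$ is the indicator of the value $i$, i.e. $f_P(x) = \mathbbm{1}[x \in P]$. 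Since $f_{\post}$ is computable by a Boolean circuit of size $O(k)$, we get $f_P \in \F_{t+k,q}$, so $f_P$ is a legal distinguisher for the $(\F_{t+k,q},\epsilon,\delta)$-pseudodensity hypothesis on $\Sa$ in $\V$.

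Next I would simply invoke that pseudodensity with the test $f_P$, which gives
\[
    \E_{x \sim \V}[f_P(x)] \;\geq\; \delta \cdot \E_{x \sim \Sa}[f_P(x)] - \epsilon .
\]
Observing that $\E_{x \sim \V}[f_P(x)] = \Pr_{x \sim \V}[x \in P]$ and $\E_{x \sim \Sa}[f_P(x)] = \Pr_{x \sim \Sa}[x \in P]$, since $f_P$ is exactly the indicator of $P$, this rearranges to the claimed inequality $\delta \cdot \Pr_{x \sim \Sa}[x \in P] \leq \Pr_{x \sim \V}[x \in P] + \epsilon$.

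There is no genuinely hard step here; the only thing to be careful about is the complexity bookkeeping — namely that membership in a single piece of a low-complexity partition is itself a low-complexity test, costing only an additive $k$ (and no additional oracle calls) on top of $\hat f$ — which is precisely why the hypothesis is stated relative to $\F_{t+k,q}$ rather than $\F$. Note that the assumption $\eta_P \geq \gamma$ is not actually used in deriving the inequality itself; it is carried along only because the remaining (indistinguishability) conclusion of DMT$++$ is vacuous for small pieces, and the subsequent gluing argument in Section~\ref{sec:recoveringdmt} will restrict attention to the good pieces. If one wanted the sharper averaged statement analogous to Corollary~\ref{cor:kpdeltagood}, one would sum this per-piece inequality over the good pieces with appropriate weights and absorb the total mass of the discarded small/imbalanced pieces into an $O(\gamma k) + O(\tau)$ error term; but that belongs to the recovery proof rather than to this proposition.
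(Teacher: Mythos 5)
Your proof is correct and follows essentially the same route as the paper's: build the indicator of the piece $P$ by post-composing the partition membership function $\hat f \in \F_{t,q}$ with a size-$O(k)$ selector circuit, observe that the resulting test lies in $\F_{t+k,q}$, and apply the pseudodensity hypothesis to it. Your side remark that the hypothesis $\eta_P \geq \gamma$ is not actually used in this particular inequality is also accurate.
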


\begin{proof}
Recall that by definition of pseudodensity, and by the assumption that $\Sa$ is $(\F_{t+k,q}, \epsilon, \delta)$-pseudodense in $\V$, we know that for all $f \in \F_{t+k, q}$,
\[
    \delta \cdot \Pr_{x \sim \Sa}[f(x) = 1] - \epsilon \leq \Pr_{x \sim \V}[f(x) = 1].
\]
Fix some $P \in \Pa$.
Let $\hat{f} \in \F_{t, q}$ where $t = O(1/(\epsilon^4 \gamma) \cdot \log(|\X|/\epsilon))$, $q = O(1/\epsilon^2)$ be the partition membership function for $\Pa$ as given by Definition~\ref{def:Ftqk}. 
That is, $\Pa = \{\hat{f}^{-1}(1), \ldots, \hat{f}^{-1}(k)\}$. 
Given $P \in \Pa$, we define the function $f^P$ as $f^P = f^P_{\post} \circ \hat{f}$, where $f^P_{\post}(i) = \mathbbm{1}[\hat{f}(i) = P]$.
That is, we think of $f^P$ as the indicator function for the set $P$.
As argued in Section~\ref{sec:ihcl++toihcl},  $f^P_{\post}$ can be computed by a circuit of size $k$ (see \cite[\S 9.1.1.]{boazbook2}), so $f^P \in \F_{t+k, q}$.

Then, by applying the pseudodensity assumption on $\Sa$ to this function $f^P \in \F_{t+k, q}$, we obtain that
\[
    \delta \cdot \Pr_{x \sim \Sa}[f^P(x) = 1] - \epsilon \leq \Pr_{x \sim \V}[f^P(x) = 1].
\]
Then, by the definition of $f^P$ it follows that
\[
    \Pr_{x \sim \Sa}[f^P(x) = 1] = \Pr_{x \sim \Sa}[x \in P], \quad \Pr_{x \sim \V}[f^P(x) = 1] = \Pr_{x \sim \V}[x \in P], 
\]
Therefore, we have that
\[
    \delta \cdot \Pr_{x \sim \Sa}[x \in P] \leq \Pr_{x \sim \V}[x \in P] + \epsilon,
\]
which corresponds to the inequality stated in Proposition~\ref{prop:pseudodense}. 
\end{proof}

Given this proposition, we can now prove the DMT from our DMT$++$.
In doing so, we use a trick originally due to Impagliazzo, in his series of works where he explores the relationship between IHCL and the DMT \cite{imp08, imp09, gik12, simons17}.
More specifically, Impagliazzo shows that IHCL implies the DMT: given a pseudodense set $S$, one can prove that the function $g$ defined as the characteristic function of $S$ is weakly hard, if we consider an appropriate distribution $\D$ over an augmented domain $\X'$,\footnote{One needs to be careful with defining the distribution $\D$, because the DMT is most interesting when $S$ has very low density, and thus we need to magnify $S$ in order to ensure that we sample from $S$ with some constant probability.} where we append the bit $b=1$ to $x \in \X$ such that $x \in S$, and append the bit $b=0$ to $x \in \X$ such that $x \in \X \setminus S$.
We can then construct a ``model'' for $S$ by using the hardcore distribution for $g$ given by the IHCL.
A full proof of this implication can be found in \cite[\S 6.2]{casacuberta2023thesis}.

In the following proof, we use Impagliazzo's trick of augmenting the domain using $\Sa$:

\begin{proof}[Proof of DMT from DMT$++$]

Let $\X, \F, \Sa, \V, \epsilon, \delta$ be the assumption parameters in the DMT statement (Theorem~\ref{thm:DMT1}). 
We define the parameters $\epsilon' := \epsilon^2 \delta$, $\gamma := \epsilon \epsilon'$, $\tau := \epsilon \delta$,
and invoke the DMT$++$ Theorem (Theorem~\ref{thm:dmt++}) with the domain $\X' = \X \times \{0,1\}$, $S = \X \times \{1\}$, $V = \X \times \{0\}$, $\Sa' = \Sa \times \{1\}$, and $\V'  = \V \times \{0\}$, the function family $\F' = \{f'(x, b) = f(x)\}$ for some $f \in \F$, and parameters $\epsilon', \gamma$.
By the DMT$++$ (Theorem~\ref{thm:dmt++}), we obtain a partition $\Pa \in \F_{t,q,k}$ of $\X'$ with $t = O(1/(\epsilon'^4 \gamma) \cdot \log(|\X'|/\epsilon'))$, $q = O(1/\epsilon'^2)$, $k = O(1/\epsilon')$ such that, for each $P \in \Pa$ where $\eta_P \geq \gamma = \epsilon \epsilon'$, the distribution $\V'\vert_P$ is $(\F', \epsilon'_P)$-indistinguishable from $\Sa'\vert_P$. 
Given these ``local models'' $\V'\vert_P$, we construct the claimed global distribution $\mu$ as follows: we define $\mu$ to be the weighted average of the ``models'' $\V'\vert_P$ such that $P \in \Pa$ is $(\gamma, \tau)$-good.
Formally, for each $x \in V$, let
\[
    \mu(x) =  \dfrac{\Pr_{(x, 1) \sim \Sa'}[(x, 1) \in P] \cdot \V'(x, 0) \cdot \mathbbm{1}_G(P)}{\E_{Q \sim \Pa(\Sa')} [\mathbbm{1}_G(Q)]},
\]
where $P$ corresponds to the unique $P \in \Pa$ such that $x \in P$ (which is unique since $\Pa$ is a partition), and recall that $\mathbbm{1}_G(P)$ returns 1 if $\eta_P \geq \gamma$ and $k_P \geq \tau$, and 0 otherwise.
Note that the purpose of the denominator is to normalize $\mu$ so that $\sum_x \mu(x) = 1$.
The expression for $\mu(x)$ should be understood as follows: in order to choose a $x \in V$, we can instead first choose a set $P$ with probability $\Pr_{x \sim \Sa'}[x \in P]$, and then sample from $\V'\vert_P$.

Let $v_P$ as in Theorem~\ref{thm:dmt++}.
First, by combining Proposition~\ref{prop:pseudodense} and the fact that
\[
    \Pr_{x \sim \V'}[x \in P] = 2 \eta_P (1-v_P),
\]
it follows that
\[
    \dfrac{1}{\delta} + \dfrac{\epsilon'}{\delta} \cdot \Pr_{x \sim \V'}[x \in P] = \dfrac{1}{\delta} + 
    \dfrac{\epsilon'}{2\delta \eta_P (1-v_P)} = \dfrac{1}{\delta \cdot (1-O(\epsilon'/(\eta_P(1-v_P)))}.
\]
Recall that we only ``glue up'' the parts $P$ that are $(\gamma, \tau)$-good.
Then, following a similar analysis as the one described in Section~\ref{sec:ihcl++toihcl}, it follows that
\[
    \E_{P \sim \D(\Pa)} \Bigg[ \dfrac{\epsilon'}{v_P (1-v_P)} \cdot \mathbbm{1}_G(P) \Bigg] \leq \dfrac{\epsilon'}{\tau},
\]
where $\epsilon' = \epsilon^2 \delta$, $\gamma = \epsilon \epsilon'$, and $\tau = \epsilon \delta$.
Therefore, the density of the ``glued up'' distribution $\mu$ when we only glue up the $(\gamma, \tau)$-good sets is equal to $\delta \cdot (1 - O(\epsilon))$. 

Next, we show the indistinguishability condition; namely, that $\mu$ is $\F, O(\epsilon/\delta)$-indistinguishable from $\Sa$.
By Theorem~\ref{thm:dmt++}, we know that for all good $P \in \Pa$,
\[
    \Big| \E_{x \sim \Sa'_P}[f(x)] - \E_{x \sim \V'_P}[f(x)] \Big| \leq \epsilon'_P.
\]
By the definition of $\mu$ (which implies that $\mu(x)$ is never contained in $S$ given that all values $x \sim \mu$ are in $V$) and the law of total probability, it follows that
\begin{align*}
    \big| \E_{x \sim \Sa'}[f(x)=1] - \E_{x \sim \mu}[f(x)] \big| &= 
    \Big| \sum_{P \in \Pa} \Big( \Pr_{x \sim \Sa'_P}[f(x)=1] \cdot \Pr_{x \sim \Sa'}[x \in P] \\
    &\qquad \qquad \qquad \quad \, \, -\Pr_{x \sim \V'}[f(x)=1] \cdot 
    \Pr_{x \sim \V'}[x \in P] \Big) \cdot \mathbbm{1}_G(P) \Big| \\
    &\leq \sum_{P \in \Pa} \Pr_{x \sim \Sa'}[x \in \Pa] \cdot \Big| \E_{x \sim \Sa'_P}[f(x)] - \E_{x \sim \V'_P}[f(x)] \Big| \cdot \mathbbm{1}_G(P).
\end{align*}
By applying Theorem~\ref{thm:dmt++} to each of the summands, the above expression becomes
\begin{align*}
    \sum_{P \in \Pa} \epsilon'_P \cdot \Pr_{x \sim \Sa'}[x \in P] \cdot \mathbbm{1}_G(P) &= \sum_{P \in \Pa} \dfrac{\epsilon'}{v_P \cdot (1-v_P)} \cdot \Pr_{x \sim \Sa'}[x \in P] \cdot \mathbbm{1}_G(P) \\
    &\leq \dfrac{1}{\delta} \sum_{P \in \Pa} \dfrac{\epsilon'}{v_P \cdot (1-v_P)} \cdot \Pr_{x \sim \V'}[x \in P] \cdot \mathbbm{1}_G(P).
\end{align*}
Since $\Pr_{x \sim \V'}[x \in P] = 2\eta_P \cdot (1-v_P)$, following the analysis from Section \ref{sec:ihcl++toihcl} it follows that each term in the summand is at most $2 \tau / v_P \leq 2 \epsilon' / \tau$.
Since $\epsilon' = \epsilon^2 \delta$ and $\tau = \epsilon \delta$ imply that $\epsilon' / \tau = \epsilon$, it follows that 
\[
    \Big| \E_{x \sim \Sa'}[f(x)=1] - \E_{x \sim \mu}[f(x)] \Big| \leq \dfrac{\epsilon}{\delta},
\]
as we wanted to show.
As in Section~\ref{sec:ihcl}, we can modify $\mu$ so that we obtain density exactly $\delta-O(\epsilon)$ while maintaining $O(\epsilon/\delta)$-indistinguishability for the model.
Thus, we have recovered the original DMT theorem.
\end{proof}

\textbf{Entropy interpretation of the DMT$++$.} 
To conclude, we observe that the DMT and DMT$++$ theorems can also be restated in terms of pseudo-min-entropy.
Indeed, Reingold, Trevisan, Tulsiani, and Vadhan related the DMT and the notion of pseudoentropy (Definition~\ref{def:pseudominen}) \cite[\S 4.1]{rttv08}.
Essentially, we can relate the pseudodensity of a set $\Sa$ to the notion of pseudo-min-entropy by re-stating the DMT theorem as follows: If a distribution $\D$ is $\delta$-dense in an $\epsilon$-pseudorandom set (i.e., $\D$ is $\epsilon$-indistinguishable from the uniform distribution on $\X$), then $\D$ has $\epsilon$-pseudo-min-entropy $\log|\X| - \log(1/\delta)$ \cite[Propositions 4.2, 4.3]{rttv08}.
As future work, we believe it would be illustrative to re-state our DMT$++$ theorem in terms of pseudo-min-entropy and compare it to the conclusion of the PAME$++$ theorem, which would further clarify how the fundamental concepts that we have considered in this paper (indistinguishability, hardness, pseudo-average min-entropy, pseudo-min-entropy, pseudodensity) relate to each other.

\section{Conclusions \& Future Work}\label{sec:conclusions}

We conclude by providing some other directions for future work.

\smallskip
\textbf{Further generalizations of our results.} 
Our results regarding the PAME$++$ could be generalized in two different ways.
First, while our results in Section~\ref{sec:pame} only concern non-uniform distinguishers (i.e., where the distinguishers correspond to boolean circuits), the original results by Vadhan \& Zheng also consider uniform distinguishers (i.e., where the distinguishers correspond to polynomial time algorithms), and their PAME theorem is thus stated for both the uniform and non-uniform settings \cite{vz12, zhe14}.
While we expect our results to carry onto the uniform setting, they require a different type of technical treatment.
Second, given the exponential dependence on the number of classes in the Multiclass MC Theorem \cite{gkrsw21}, our PAME$++$ theorem only holds for $\ell = O(\log \log n)$, whereas the original PAME theorem holds for $\ell = O(\log n)$.
We would hope to be able to recover the original parameters.
The complexity of multiclass MC has been recently studied in \cite{gopalan2024computationally}.

\smallskip
\textbf{Future multicalibration algorithms.} There are two possible ways of improving our PAME$++$ parameters: One is to design a multiclass multicalibration algorithm with better complexity parameters (a topic which is already a current line of work within the multicalibration literature), and the other is to try to achieve the PAME$++$ conclusion with a weaker notion of multicalibration instead that has a polynomial dependence on the number of classes.

Relatedly, all of the partition complexity parameters in our $++$ theorems directly depend on the state-of-the-art multicalibration parameters from the MC literature, given that we use the approximate MC theorems of \cite{hkrr18} and \cite{gkrsw21} as black-boxes, and thus any future improvement of the MC algorithms is applicable to our results.
In particular, we expect that the number of calls to the weak agnostic learner in these type of algorithms can be reduced, given that the original MC algorithms were more concerned with showing feasibility rather than efficiency. 

\smallskip
\textbf{Other possible applications of our $++$ approach.} 
The observation that Multicalibration Theorem corresponds to a stronger and more general version of the Regularity Lemma, which is the starting point of both this paper and the concurrent work by Dwork, Lee, Lin, and Tankala \cite{dllt23}, as well as the framework and tools that we have developed in this paper (in particular, our complexity-theoretic reframing of multicalibration), can be applied to other important implications of the Regularity Lemma outside of IHCL, PAME, the DMT, and Szemer\'edi's Regularity Lemma.
As mentioned in the introduction, other implications of the Regularity Lemma include applications in leakage resilient cryptography \cite{jp14, ccl18}, weak notions of zero-knowledge \cite{clp15}, Yao’s XOR Theorem \cite{gnw11}, chain rules for computational entropy \cite{gnw11, jp14}, and Chang’s inequality in Fourier analysis of boolean functions  \cite{imr14}.

We expect that viewing these theorems as corollaries of the Multiaccuracy Theorem will provide new insights into these established results, as well as new connections and insights into the power of a multicalibrated partition.

\section*{Acknowledgements}

We are indebted to  Parikshit Gopalan, Fabian Gundlach, Daniel Lee, Huijia (Rachel) Lin, Omer Reingold, Jessica Sorrell, Pranay Tankala, and Benjy Firester for helpful conversations throughout the development of this work. 
We are also thankful to the Simons' workshop on Multigroup Fairness and the Validity of Statistical Judgement that took place in April 2023, where we presented this work and received insightful feedback.
Lastly, we are grateful to the anonymous STOC and FORC reviewers for their additional suggestions.


\printbibliography

\appendix

\section{Multicalibrated Partitions}\label{sec:mcpartitions}

In this section, we provide the proof for Theorem~\ref{thm:mcpartition}.

\begin{theorem}[Theorem~\ref{thm:mcpartition}, MC Theorem \cite{hkrr18}]
Let $\X$ be a finite domain, $\F$ a class of functions $f\colon \X \rightarrow [0,1]$, $g\colon \X \rightarrow [0,1]$ an arbitrary function, $\D$ a probability distribution over $\X$, and $\epsilon, \gamma > 0$. There exists an $(\F, \epsilon, \gamma)$-approximately multicalibrated partition $\Pa$ of $\X$ for $g$ on $\D$  
such that $\Pa \in \F_{t, q, k}$, where
\begin{itemize}
    \item[{\rm 1.}] $t = O(1/(\epsilon^4 \gamma) \cdot \log(|\X|/\epsilon))$,
    \item[{\rm 2.}] $q = O(1/\epsilon^2)$,
    \item[{\rm 3.}] $k = O(1/\epsilon)$.
\end{itemize}
\end{theorem}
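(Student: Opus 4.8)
The plan is to recast the multicalibration algorithm of H\'ebert-Johnson, Kim, Reingold, and Rothblum~\cite{hkrr18} so that its output is packaged as a low-complexity partition-membership circuit in the sense of Definition~\ref{def:Ftqk}. Instead of carrying the partition $\Pa$ directly, I would maintain a real-valued predictor $h\colon\X\to[0,1]$ and, at every moment, define $\Pa$ to be the collection of level sets of the \emph{bucketing} map $x\mapsto\lfloor h(x)/\lambda\rfloor$, for a grid width $\lambda=\Theta(\epsilon)$. Since $h$ takes values in $[0,1]$, this bucketing always has at most $\lceil 1/\lambda\rceil+1=O(1/\epsilon)$ nonempty classes, which gives the bound $k=O(1/\epsilon)$ for free and, crucially, means that refining one piece automatically \emph{merges} the resulting sub-pieces back into the $O(1/\epsilon)$ global buckets — it is this merging, absent from naive iterative partitioning, that prevents an $\exp(1/\epsilon)$ blow-up. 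Initialize $h$ to the constant $\E_{x\sim\D}[g(x)]$, so that $\Pa$ starts as the trivial one-piece partition.

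The correction loop proceeds as follows. After each step I recompute, for every piece $P$, its true mean $v_P=\E_{x\sim\D|_P}[g(x)]$ and reset $h\equiv v_P$ on $P$; this is $\F$-free, and since $v_P$ minimizes $\E_{\D|_P}[(g-c)^2]$ it only decreases the potential $\Phi(h):=\E_{x\sim\D}[(g(x)-h(x))^2]\in[0,1]$. If the current $\Pa$ is not yet $(\F,\epsilon,\gamma)$-approximately multicalibrated, Definition~\ref{def:mcpartition} hands us a piece $P$ with $\eta_P\ge\gamma$ and a witness $f\in\F$ with $\E_{x\sim\D|_P}[f(x)(g(x)-v_P)]>\epsilon$ (the negative case is symmetric). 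I would then update $h\leftarrow h+\mu f$ on $P$ with step size $\mu=\Theta(\epsilon)$, truncate back into $[0,1]$, and re-bucket. A one-line computation, using $h\equiv v_P$ on $P$ just before the update, shows this step decreases $\Phi$ by at least $\eta_P(2\mu\epsilon-\mu^2)=\Omega(\eta_P\epsilon^2)=\Omega(\gamma\epsilon^2)$; since $\Phi$ starts below $1$ and the $\F$-free recalibration steps never increase it, the loop halts after finitely many corrections. The refined amortized accounting of~\cite{hkrr18} — grouping the corrections by the $O(1/\epsilon)$ buckets and charging the re-bucketings against $\Phi$ — moreover yields that only $q=O(1/\epsilon^2)$ distinct functions of $\F$ are ever invoked, the remaining $1/\gamma$ and $1/\epsilon^4$ factors being paid entirely in the $\F$-free bookkeeping.

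Finally I would read off the circuit $\hat f\colon\X\to[k]$ with $\Pa=\{\hat f^{-1}(1),\ldots,\hat f^{-1}(k)\}$ by hardwiring, in order, the indices of the witnesses $f$ used, their signs, the step size, and the bucket thresholds; on input $x$ the circuit simply replays the trajectory of $h(x)$ through the corrections, performing one oracle call per distinct witness ($q=O(1/\epsilon^2)$ oracle gates) and, per correction, a constant number of additions and comparisons on numbers of $O(\log(|\X|/\epsilon))$ bits, for total size $t=O\!\big(\epsilon^{-4}\gamma^{-1}\log(|\X|/\epsilon)\big)$, its final output gate returning the bucket index $\lfloor h(x)/\lambda\rfloor$. \textbf{The main obstacle} is exactly this parameter accounting: showing that the number of $\F$-oracle gates can be kept at $O(1/\epsilon^2)$ independently of $\gamma$ — the naive ``one bad piece per step'' bound is only $O(1/(\gamma\epsilon^2))$, so one must exploit the bucket structure as in~\cite{hkrr18} — and, more conceptually, checking that the \emph{approximate} relaxation (ignoring pieces with $\eta_P<\gamma$) is precisely what makes the procedure terminate with only $O(1/\epsilon)$ pieces at all. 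Everything else is routine but notation-heavy, which is why the full argument is only sketched here and the details are deferred to the treatment in~\cite{hkrr18}.
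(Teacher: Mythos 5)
The high-level strategy of your sketch — iterative boosting against a potential $\Phi(h)=\E_\D[(g-h)^2]$, rounding $h$'s values onto a grid of width $\Theta(\epsilon)$ so that refinements are automatically merged back into $O(1/\epsilon)$ buckets, and hardwiring the sequence of witnesses and thresholds into a partition-membership circuit — matches the spirit of the paper. However, there is a genuine gap at exactly the point you flagged, and the paper's resolution of it is structurally different from what you propose.

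You iterate directly against the $(\F,\epsilon,\gamma)$-\emph{approximate} MC definition: the loop fires on a witness $f\in\F$ and a single piece $P$ with $\eta_P\ge\gamma$, and the potential drop per firing is $\Omega(\eta_P\epsilon^2)=\Omega(\gamma\epsilon^2)$, which caps the number of firings (hence distinct oracle gates) at $O(1/(\gamma\epsilon^2))$. You then assert that ``grouping the corrections by the $O(1/\epsilon)$ buckets and charging the re-bucketings against $\Phi$'' amortizes this to $q=O(1/\epsilon^2)$, but no mechanism is given for why the $1/\gamma$ factor disappears: corrections fired in different rounds, even on the same bucket, may invoke different witnesses $f\in\F$, so bucket-counting does not bound the number of oracle gates. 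The paper avoids this by routing through the \emph{multicalibration-on-average} (MCoA) relaxation (Appendix~\ref{sec:mcpartitions}): an MCoA violation for a witness $f$ is a violation \emph{averaged over all level sets weighted by $\D$-mass}, so when it fires the correction is applied to \emph{every} level set at once (with per-set sign) and the potential drops by $\Omega(\epsilon'^2)$ — a single firing, a single oracle gate, no $\gamma$ in sight. The $O(1/\epsilon)$ piece count is then obtained not by in-loop bucketing but by a separate, after-the-fact rounding lemma (Claim~\ref{claim:numlevelsets}) that shows a $\lambda$-discretization of an MCoA predictor stays MCoA up to an additive $\lambda/2$; finally the MCoA $\to$ approximate-MC conversion is the clean reparametrization $\epsilon'\leftarrow\epsilon\gamma$, which also explains why $\gamma$ only appears in the circuit-size parameter $t$ and not in $q$ or $k$.

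In short: your bucketing insight (merging prevents exponential blow-up) is the right intuition, and, like the paper, you ultimately defer the precise arithmetic to~\cite{hkrr18}. But the specific argument you would need — that iterating against the $\gamma$-thresholded approximate-MC condition and re-bucketing yields only $O(1/\epsilon^2)$ distinct oracle calls — is not substantiated, and the naive analysis of your loop gives $O(1/(\gamma\epsilon^2))$. The paper's proof supplies the missing idea by replacing the $\gamma$-thresholded check with the averaged (MCoA) check \emph{inside} the loop, and only reintroducing $\gamma$ in the post-processing conversion; that is the step you should add.
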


The goal is to build an approximate MC predictor $h$ that only has $k = O(1/\epsilon)$ level sets; the parameters $t$ and $q$ are given by the H\'ebert-Johnson et al. MC algorithm.
From this predictor, we then obtain an MC partition as stated in Theorem~\ref{thm:mcpartition} by taking the partition induced by the level sets of the predictor.

The intuitive idea behind why we can obtain an MC predictor $h$ with $O(1/\epsilon)$ level sets is to discretize the domain $[0,1]$ into ``buckets'' of width $\gamma$ and then round the values of $h$ to the closest point in the discretization.
After the rounding, $h$ has at most $1/\lambda$ level sets, yet the rounding can only have changed each output of $h$ by at most $\lambda$.
Therefore, the MC property of $h$ is maintained after the rounding (up to an additive factor of $\lambda$). 

However, it is not enough to apply this procedure to an approximate $\gamma$-MC predictor, given that we want the predictor to have $k = O(1/\epsilon)$ level sets \emph{independent of $\gamma$.}
To do so, we use in the proof a different relaxation of the notion of multicalibration which we mentioned in the introduction; namely, \emph{multicalibration on average (MCoA)}.
In the case of MCoA, there is no parameter $\gamma$; instead, in order to deal with the small level sets, the MC condition is only required to hold on average over the level sets:

\begin{definition}
\rm{
Let $\X$ be a finite domain, $\F$ a collection of functions $f\colon \X \rightarrow [0,1]$, $g\colon \X \rightarrow [0,1]$ an arbitrary function, $\D$ a probability distribution over $\X$, and $\epsilon > 0$. 
We say that a predictor $h\colon \X \rightarrow [0,1]$ is \emph{$(\F, \epsilon)$-multicalibrated on average} (MCoA) for $g$ on $\D$ if, for all $f \in \F$,
\[
    \E_{X_v \sim \Pa(\D)} \Big|\E_{x \sim \D|_{v}} [f(x) \cdot (g(x) - h(x))] \Big| \leq \epsilon,
\]
where $\D|_v$ denotes the conditional distribution $\D|_{h(x)=v}$ for $v \in [0,1]$ in the support of~$h$, and $\Pa(\D)$ denotes the distribution that selects each $X_v$
with probability $\big(\sum_{x \in X_v} \D(x)\big) / \big(\sum_{x \in \X} \D(x)\big)$.
}
\end{definition}

Then, when we consider a fixed level set $X_v = \{x \in \X \mid h(x) = v\}$, it follows that
\[
    \Big|\E_{x \sim \D|_v} [f(x) \cdot (g(x) - h(x))] \Big| \leq \dfrac{\epsilon}{\Pr_{\D}[x \in X_v]}.
\]
Therefore, in the case of MCoA, we allow the indistinguishability parameter to degrade with the size of the level set (the smaller $X_v$, the worse the guarantee becomes), but if we re-parametrize this second approach by setting $\epsilon \leftarrow \epsilon \gamma $, we obtain exactly the definition of approximate MC, given that 
\[
    \Big|\E_{x \sim \D|_v} [f(x) \cdot (g(x) - h(x))] \Big| \leq \epsilon  \cdot \gamma \cdot \dfrac{\Pr_{\D}[x \in X]}{\Pr_{\D}[x \in X_v]} \leq \epsilon \cdot \dfrac{\Pr_{\D}[x \in X_v]}{\Pr_{\D}[x \in X]} \cdot \dfrac{\Pr_{\D}[x \in X]}{\Pr_{\D}[x \in X_v]} = \epsilon.
\]
Similar to how we have an approximate MC theorem, the notion of MCoA is likewise efficiently realizable.

Next, we formally introduce the discretization of the domain:

\begin{definition}[$\lambda$-discretization]
{\rm
Let $0<\lambda <1$. The \emph{$\lambda$-discretization} of $[0,1]$ is the set
\[
    \Lambda[0,1] = \Big\{0, \frac{\lambda}{2}, \frac{3\lambda}{2}, 
    \frac{5\lambda}{2}, 
    \ldots, \frac{n\lambda}{2}, 1
    \Big\},
\]
where $n$ is the largest odd integer smaller than $2/\lambda$. 
}
\end{definition}

Given a $\lambda$-discretization, we show that the number of level sets of an MCoA predictor $h$ can be bounded by $O(1/\lambda)$:

 \begin{claim}\label{claim:numlevelsets}
Let $\X$ be a finite domain, let $\F$ be a family of functions $f\colon \X \rightarrow [0,1]$, let $g\colon \X \rightarrow [0,1]$ be an arbitrary function, and let $\epsilon, \lambda \in (0,1)$. 
If $h$ is an $(\F, \epsilon)$-MCoA predictor for $g$, then there exists an $(\F, \epsilon+\lambda/2)$-MCoA predictor $h'$ for $g$ such that $h'$ has $O(1/\lambda)$ level sets.
 \end{claim}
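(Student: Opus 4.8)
The plan is to produce $h'$ by \emph{rounding} $h$ into the discretization: set $h'(x)$ to be a point of $\Lambda[0,1]$ nearest to $h(x)$ (ties broken arbitrarily). Since $\Lambda[0,1]\subseteq[0,1]$, $h'$ is a legitimate $[0,1]$-valued predictor, and it has at most $|\Lambda[0,1]|=O(1/\lambda)$ level sets. Moreover, consecutive points of $\Lambda[0,1]$ are at distance at most $\lambda$ — this is immediate from the definition, including the endpoint gaps at $0$ and at $1$, using that $n$ is the largest odd integer below $2/\lambda$ — so nearest-point rounding changes each value by at most $\lambda/2$; that is, $|h'(x)-h(x)|\le \lambda/2$ for all $x\in\X$. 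It remains to check that $h'$ is $(\F,\epsilon+\lambda/2)$-MCoA.

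Fix $f\in\F$, and write $X_v=\{x:h(x)=v\}$ and $X'_w=\{x:h'(x)=w\}$ for the level sets of $h$ and $h'$. Each $X'_w$ is the disjoint union of those $X_v$ whose value $v$ rounds to $w$, so $\D|_{X'_w}$ is the convex combination $\sum_v \frac{\Pr_\D[x\in X_v]}{\Pr_\D[x\in X'_w]}\,\D|_{X_v}$ over these $v$. For a fixed $w$ I would split
\[
\E_{x\sim\D|_{X'_w}}\!\big[f(x)(g(x)-h'(x))\big]=\E_{x\sim\D|_{X'_w}}\!\big[f(x)(g(x)-h(x))\big]+\E_{x\sim\D|_{X'_w}}\!\big[f(x)(h(x)-h'(x))\big],
\]
bound the last term in absolute value by $\lambda/2$ using $f(x)\in[0,1]$ and $|h(x)-h'(x)|\le\lambda/2$, and bound the middle term by the triangle inequality across the convex combination:
\[
\Big|\E_{x\sim\D|_{X'_w}}[f(x)(g(x)-h(x))]\Big|\le\sum_v \frac{\Pr_\D[x\in X_v]}{\Pr_\D[x\in X'_w]}\,\Big|\E_{x\sim\D|_{X_v}}[f(x)(g(x)-h(x))]\Big|.
\]
Taking the expectation over $X'_w\sim\Pa_{h'}(\D)$ (weighting by $\Pr_\D[x\in X'_w]$ and summing over $w$), the denominators cancel and the double sum over $(w,v)$ collapses to $\sum_v \Pr_\D[x\in X_v]\,|\E_{x\sim\D|_{X_v}}[f(x)(g(x)-h(x))]|=\E_{X_v\sim\Pa_h(\D)}|\E_{x\sim\D|_{X_v}}[f(x)(g(x)-h(x))]|\le\epsilon$, by the MCoA hypothesis on $h$. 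Adding back the $\lambda/2$ from the rounding term gives $\E_{X'_w\sim\Pa_{h'}(\D)}|\E_{x\sim\D|_{X'_w}}[f(x)(g(x)-h'(x))]|\le\epsilon+\lambda/2$, and since $f\in\F$ was arbitrary, $h'$ is $(\F,\epsilon+\lambda/2)$-MCoA with $O(1/\lambda)$ level sets.

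The step I would present most carefully is the final ``coarsening'' manipulation: merging the level sets of $h$ into the coarser level sets of $h'$ cannot increase the averaged MCoA quantity, precisely because $|\E[\cdot]|$ is a convex functional and the coarsened conditional distribution is an average of the finer conditionals. Everything else — the $O(1/\lambda)$ bound on the number of level sets and the $\lambda/2$ rounding error — is a direct consequence of the definition of $\Lambda[0,1]$, and the $(g-h)$ versus $(g-h')$ comparison is just an application of the triangle inequality together with $|h-h'|\le\lambda/2$.
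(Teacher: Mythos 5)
Your proposal is correct and follows essentially the same argument as the paper's proof: round $h$ into $\Lambda[0,1]$ to get $|h'-h|\le\lambda/2$, split $g-h' = (g-h) + (h-h')$ via the triangle inequality, and then use the fact that each coarse level set $X'_w$ is a union of fine level sets $X_v$, so that $\bigl|\E_{\D|_{X'_w}}[\,\cdot\,]\bigr|$ is bounded by the weighted average of $\bigl|\E_{\D|_{X_v}}[\,\cdot\,]\bigr|$ by convexity of the absolute value, and the resulting double sum telescopes back to the MCoA quantity for $h$. The paper writes this last step as a tower of expectations plus Jensen's inequality, while you write it as an explicit sum with a cancellation of normalizing factors, but these are the same manipulation.
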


 \begin{proof}
We define $h'$ as follows: for each $x \in \X$, the value $h'(x)$ is equal to $\textrm{Round}(h(x))$, where the function $\textrm{Round}\colon [0,1] \rightarrow \Lambda[0,1]$ maps $h(x)$ to the closest value in $\Lambda[0,1]$ (rounding up in the case of ties). By definition of $\lambda$-discretization, this implies that $|h(x) - h'(x)| \leq \lambda/2$. 
If we denote $X'_w=\{x\in\X\mid h'(x)=w\}$ for each $w\in{\rm range}(h')$, then
$X'_w=\cup_v\, X_v$, where $w-\frac{\lambda}{2}\le v\le w+\frac{\lambda}{2}$ with $v\in {\rm range}(h)$. 
For each $w\in{\rm range}(h')$, we have
\begin{align*}
\Big|\E_{x\sim\D|_w}[f(x)\cdot (g(x)-h'(x)]\Big|
& \le
\Big|\E_{x\sim\D|_w}[f(x)\cdot (g(x)-h(x)]\Big|+
\E_{x\sim\D|_w}\Big[f(x)\cdot |h(x)-h'(x)|\Big] \\[0.2cm]
& \le \Big|\E_{x\sim\D|_w}[f(x)\cdot (g(x)-h(x)]\Big|+
\lambda/2,
\end{align*}
where $\D|_w$ denotes $\D|_{h'(x)=w}$ in this case.
Moreover, if we denote by $\Pa'(\D)$ the distribution that selects $X'_w$ for $w\in{\rm range}(h')$ and we denote by $\Pa(\D|_w)$ the one that selects $X_v$ for $v\in{\rm range}(h)$ with $w-\frac{\lambda}{2}\le v\le w+\frac{\lambda}{2}$, then 
\begin{align*}
\E_{X_w'\sim\Pa'(\D)}\Big|\E_{x\sim\D|_w}[f(x) & \cdot (g(x)-h(x)]\Big|
=
\E_{X_w'\sim\Pa'(\D)}\Big|\E_{X_v\sim\Pa(\D|_w)}\; \E_{x\sim\D|_v}[f(x) \cdot (g(x)-h(x)]\Big|
\\[0.2cm]
& 
\le
\E_{X_w'\sim\Pa'(\D)}\;\E_{X_v\sim\Pa(\D|_w)}\Big| \E_{x\sim\D|_v}[f(x) \cdot (g(x)-h(x)]\Big|
\\[0.2cm]
& =
\E_{X_v\in\Pa(\D)}\Big|\E_{x\sim\D|_v}[f(x)\cdot (g(x)-h(x)]\Big|
\le \epsilon.
\end{align*}
Therefore, $h'$ is $(\F,\epsilon + \lambda/2)$-multicalibrated on average.
\end{proof}

Now that we know how to construct an MCoA predictor with $O(1/\lambda)$ level sets, we show how to build an approximate MC predictor with $O(1/\epsilon)$ (as required in Theorem~\ref{thm:mcpartition}). 

\begin{claim}\label{claim:mcoatoapproxmc}
    Let $\X$ be a finite domain, $\F$ a family of functions $f:\X \rightarrow [0,1]$, $g: \X \rightarrow [0,1]$ an arbitrary function, and $\epsilon, \gamma \in (0, 1)$.
    There exists a $(\F, \epsilon, \gamma)$-approximate MC predictor $h$ for $g$ such that $h$ has $O(1/\epsilon)$ level sets.
\end{claim}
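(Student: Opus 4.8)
The plan is to produce $h$ in two stages: first invoke the efficient realizability of MCoA predictors with a suitably small error parameter, and then collapse its (possibly very many) level sets to $O(1/\epsilon)$ of them by rounding their values to a coarse grid, much as in the proof of Claim~\ref{claim:numlevelsets} — but carrying out the rounding bookkeeping directly in the language of approximate multicalibration rather than of MCoA. Concretely, I would set $\lambda := \epsilon/2$ and $\epsilon_0 := \epsilon\gamma/4$, and apply the MCoA realizability statement to $\F, g, \D$ with error $\epsilon_0$ to obtain an $(\F,\epsilon_0)$-MCoA predictor $h_0$ for $g$ on $\D$; as the H\'ebert-Johnson et al.\ algorithm does, I may take the value of $h_0$ on each of its level sets $X_v=\{x:h_0(x)=v\}$ to equal the conditional mean $v=\E_{x\sim\D|_v}[g(x)]$ (otherwise re-set it, costing at most a constant factor in $\epsilon_0$ via the $f=\mathbf 1$ case of the MCoA guarantee). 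Then round each value $v$ to its nearest point $w\in\Lambda[0,1]$, merge the level sets that round to a common $w$ into $X'_w:=\bigcup_{v:\mathrm{Round}(v)=w}X_v$, and let $h$ be the predictor with level sets $\{X'_w\}$ and value $v_{X'_w}:=\E_{x\sim\D|_w}[g(x)]$ on $X'_w$. By construction $h$ has at most $|\Lambda[0,1]|=O(1/\lambda)=O(1/\epsilon)$ level sets, so it remains only to check that $h$ is $(\F,\epsilon,\gamma)$-approximately multicalibrated for $g$ on $\D$ in the sense of Definition~\ref{def:mcpartition}.

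For the verification I would fix $f\in\F$ and a level set $X'_w$ with $\Pr_{x\sim\D}[x\in X'_w]\ge\gamma$, let $I_w=\{v:\mathrm{Round}(v)=w\}$, and note that $I_w$ lies in an interval of length $\lambda$, so $|v-v_{X'_w}|\le\lambda$ for all $v\in I_w$ (using the ``value $=$ conditional mean'' normalization, which makes $v_{X'_w}$ a convex combination of the $v\in I_w$). Expanding $\E_{x\sim\D|_w}[\cdot]$ as the corresponding convex combination of the $\E_{x\sim\D|_v}[\cdot]$ and using $|f|\le 1$ and the triangle inequality gives
\begin{align*}
\Big|\E_{x\sim\D|_w}[f(x)(g(x)-v_{X'_w})]\Big|
&\le \sum_{v\in I_w}\frac{\Pr_{x\sim\D}[x\in X_v]}{\Pr_{x\sim\D}[x\in X'_w]}\Big(\big|\E_{x\sim\D|_v}[f(x)(g(x)-v)]\big|+|v-v_{X'_w}|\Big) \\
&\le \frac{1}{\gamma}\sum_{v\in I_w}\Pr_{x\sim\D}[x\in X_v]\,\big|\E_{x\sim\D|_v}[f(x)(g(x)-v)]\big| \;+\;\lambda
\;\le\; \frac{\epsilon_0}{\gamma}+\lambda,
\end{align*}
where the second step uses $\Pr_{x\sim\D}[x\in X'_w]\ge\gamma$ together with $\sum_{v\in I_w}\Pr_{x\sim\D}[x\in X_v]=\Pr_{x\sim\D}[x\in X'_w]$, and the last step uses the MCoA guarantee for $h_0$, namely $\sum_v\Pr_{x\sim\D}[x\in X_v]\,|\E_{x\sim\D|_v}[f(x)(g(x)-v)]|\le\epsilon_0$, which upper bounds the partial sum over $v\in I_w$. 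With $\epsilon_0=\epsilon\gamma/4$ and $\lambda=\epsilon/2$ the bound is $\le 3\epsilon/4\le\epsilon$, exactly the approximate-MC requirement (level sets of mass $<\gamma$ carry no constraint).

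The main obstacle — and the reason one cannot simply quote Claim~\ref{claim:numlevelsets} followed by the generic ``$(\F,\epsilon')$-MCoA $\Rightarrow$ $(\F,\epsilon'/\gamma,\gamma)$-approximate MC'' reparametrization — is the need to keep the two error contributions separate. That route would yield an $(\F,\epsilon_0+\lambda/2)$-MCoA predictor with $O(1/\lambda)$ level sets, and dividing by $\gamma$ would then inflate the rounding slack $\lambda/2$ to $\lambda/(2\gamma)$, forcing $\lambda=O(\epsilon\gamma)$ and hence $O(1/(\epsilon\gamma))$ level sets. Bounding a single not-too-small level set $X'_w$ on its own is precisely what disentangles the sources: the rounding perturbation $|v-v_{X'_w}|\le\lambda$ enters with coefficients $\Pr[x\in X_v]/\Pr[x\in X'_w]$ that sum to $1$ and so contributes only $\lambda$, undivided by $\gamma$, while only the genuine MCoA slack $\epsilon_0$ receives the $1/\gamma$ amplification — so the grid width may be taken to be $\lambda=\Theta(\epsilon)$ and the level-set count is $O(1/\epsilon)$. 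I expect the only genuinely fiddly bits to be the endpoint normalization of $\Lambda[0,1]$ and the justification of the ``value $=$ conditional mean'' assumption on $h_0$; both are routine and cost at most constant factors, which I have absorbed into the choice of $\epsilon_0$.
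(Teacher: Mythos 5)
Your proof is correct, and it follows the same high-level route as the paper's (realize MCoA with small error, round to a $\Theta(\epsilon)$-width grid, pass to approximate MC). But you have also, correctly, identified that the paper's terse sketch — apply Claim~\ref{claim:numlevelsets} to get an $(\F,\epsilon'+\lambda/2)$-MCoA predictor and then invoke the generic reparametrization — does not by itself yield $O(1/\epsilon)$ level sets: that route divides the combined slack $\epsilon'+\lambda/2$ by $\gamma$, forcing $\lambda=O(\epsilon\gamma)$ and hence $O(1/(\epsilon\gamma))$ pieces. The per-piece bookkeeping you supply is the missing ingredient: expanding $\E_{x\sim\D|_w}[\cdot]$ as a convex combination over the fine level sets $X_v$ with $v\in I_w$, the rounding perturbation $|v-v_{X'_w}|\le\lambda$ enters with coefficients summing to $1$ (so it contributes $\lambda$, undivided by $\gamma$), while only the genuine MCoA slack $\epsilon_0$ picks up the $1/\gamma$ amplification, via $\Pr[X'_w]\ge\gamma$ and the partial-sum bound $\sum_{v\in I_w}\Pr[X_v]\,\big|\E_{x\sim\D|_v}[f(x)(g(x)-v)]\big|\le\epsilon_0$. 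Your ``value $=$ conditional mean'' normalization (either by construction, or by re-setting at a constant-factor cost in $\epsilon_0$ using $f=\mathbf{1}$) is also right and is genuinely needed so that $v_{X'_w}$ is a convex combination of the $v\in I_w$ and hence within $\lambda$ of each. The one small point you should state explicitly is that re-setting and rounding may cause distinct level sets of $h_0$ to coalesce; this is harmless — it only reduces the level-set count, and merging level sets only improves the MCoA quantity by the Jensen/triangle-inequality step already used in Claim~\ref{claim:numlevelsets} — but it deserves a sentence.
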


\begin{proof}
    Given $\epsilon, \gamma$, we define $\epsilon'$ and the discretization parameter $\lambda$ as follows:
    \[
        \epsilon' = \dfrac{\epsilon \gamma}{2}, \quad \lambda = \dfrac{\epsilon}{2}.
    \]
    We build an MCoA predictor $h$ with parameter $\epsilon'$ using the MCoA algorithms from the multicalibration literature \cite{gkrsw21, gkr23}.
    Next, we use the $\lambda$-discretization to transform this MCoA predictor $h$ into an approximate MC predictor $h'$ by merging the values of $h$ so that they only take values in $\Lambda[0,1]$ for $\lambda = \epsilon/2$.
    By the re-parametrization above, this gives an approximate MC predictor with $O(1/\epsilon)$ level sets. 
\end{proof}

We can now prove Theorem~\ref{thm:mcpartition}:

\begin{proof}[Proof of Theorem~\ref{thm:mcpartition}]
    We begin by applying Claim~\ref{claim:mcoatoapproxmc} to obtain an $(\F, \epsilon, \gamma)$-approximate MC predictor $h$ that has $O(1/\epsilon)$ level sets.
    By the MC theorem given by H\'ebert-Johnson et al., $h \in \F_{t, q}$, where $t = O(1/(\epsilon^4 \gamma) \cdot \log(|\X|/\epsilon))$ and $q = O(1/\epsilon^2)$ \cite{hkrr18}.
    We remark that each iterations in the \cite{hkrr18} algorithm requires a scalar multiplication, a finite-precision addition, and projection onto $[0,1]$, which can be handled by $O(\log(1/\epsilon) + \log(1/|\X|))$ gates. 
    The $\log(1/\epsilon)$ term corresponds to the fact that we perform computations up to a fixed precision of $\Theta(\epsilon)$, and the $\log(1/|\X|)$ term corresponds to bitlength of the elements in $\X$, given that each $x \in \X$ requires $\log(1/|\X|)$ bits to represent. 
    For full details on the circuit implementation, see \cite[\S 5]{dkrry21}.
    The partition $\Pa$ stated in Theorem~\ref{thm:mcpartition} is then given by the collection of level sets of this approximate MC predictor $h$.
\end{proof}

\section{From Hardcore Distributions to Hardcore Sets}\label{sec:setihcl}

\begin{lemma}[From distributions to sets {\cite[Lemma 24]{ks03}}]\label{lemma:setmeasure}
Let $\X = \{0,1\}^n$, $g\colon \X \rightarrow \{0,1\}$ an arbitrary function, $\epsilon, \delta > 0$, such that $\delta \geq 1/|\X|^{1/2}$, and 
$\F$ a class of functions $f\colon X \rightarrow \{0,1\}$ such that $\log(|\F|) \leq |\X| \epsilon^2 \delta^2$.
Suppose that $\mu$ is an $(\F, \epsilon)$-hardcore distribution for $g$ such that $d(\mu) \geq \delta$.
Then there exists a set $H$ with $|H| \geq \delta \cdot |\X|$ such that $H$ is an $(\F, 4\epsilon)$-hardcore set for $g$. 
\end{lemma}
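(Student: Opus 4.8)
The plan is to use the standard probabilistic-method conversion from hardcore distributions to hardcore sets, due to Impagliazzo~\cite{imp95} and made quantitative by Klivans and Servedio~\cite{ks03}. Write $N = |\X|$ and $M = \max_{y \in \X} \mu(y)$. Since $\mu$ has density $d(\mu) \ge \delta$ in the uniform distribution on $\X$, Definition~\ref{def:densitydist} gives $M \le 1/(\delta N)$, i.e.\ $1/M \ge \delta N$. I would construct a random set $H \subseteq \X$ by placing each $x \in \X$ into $H$ independently with probability $p_x := \mu(x)/M \in [0,1]$. Then $\E[|H|] = \sum_{x}p_x = 1/M \ge \delta N$, and for a fixed $f \in \F$, setting $S_f := |\{x \in H : f(x) = g(x)\}|$, we have $\E[S_f] = (1/M)\cdot\Pr_{x\sim\mu}[f(x)=g(x)] \le (1/M)(1/2+\epsilon)$, using that $\mu$ is an $(\F,\epsilon)$-hardcore distribution.

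Next I would apply Hoeffding's inequality. Both $|H|$ and each $S_f$ are sums of $N$ independent $\{0,1\}$-valued random variables, so for any $f$, $\Pr\big[\,S_f \ge \E[S_f] + \epsilon/M\,\big] \le \exp(-2\epsilon^2/(M^2 N))$ and $\Pr\big[\,|H| \le 1/M - \epsilon/M\,\big] \le \exp(-2\epsilon^2/(M^2 N))$. Because $1/M \ge \delta N$, the exponent satisfies $\epsilon^2/(M^2 N) \ge \epsilon^2\delta^2 N$, so each bad event has probability at most $\exp(-2\epsilon^2\delta^2 N)$. A union bound over the $|\F|$ events ``$S_f$ too large'' together with the single event ``$|H|$ too small'' gives total failure probability at most $(|\F|+1)\exp(-2\epsilon^2\delta^2 N) < 1$, where the last inequality uses the hypothesis $\log|\F| \le N\epsilon^2\delta^2$ (together with $\delta \ge 1/\sqrt N$, which keeps the exponent $\epsilon^2\delta^2 N$ large enough to dominate the $\log(|\F|+1)$ term after fixing constants). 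Hence some outcome $H$ satisfies all of these bounds at once.

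For such an $H$, on the one hand $|H| \ge (1-\epsilon)/M \ge (1-\epsilon)\delta N$, which is the claimed density up to the harmless factor $1-\epsilon$ (removed either by slightly shrinking the target $\delta$ at the start, or by padding $H$ with a small number of arbitrary points while tracking the resulting $O(\epsilon)$ perturbation). On the other hand, for every $f \in \F$,
\[
\Pr_{x \sim \U_H}[f(x) = g(x)] \;=\; \frac{S_f}{|H|} \;\le\; \frac{(1/M)(1/2 + \epsilon) + \epsilon/M}{(1-\epsilon)/M} \;=\; \frac{1/2 + 2\epsilon}{1-\epsilon} \;\le\; \frac12 + 4\epsilon
\]
for $\epsilon$ below a small absolute constant, which is exactly the statement that $g$ is $(\F, 1/2-4\epsilon)$-hard on the uniform distribution over $H$, i.e.\ $H$ is an $(\F, 4\epsilon)$-hardcore set. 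The main obstacle is the quantitative bookkeeping in the middle step: one has to track that the number of independent trials is $N = |\X|$ (so Hoeffding yields a factor $\exp(-\Theta(\epsilon^2\delta^2 N))$ and not something weaker), and that the density bound $M \le 1/(\delta N)$ is what simultaneously makes $\E[|H|]$ large and makes the additive slack $\epsilon/M$ small relative to $\E[|H|]$, so that the single hypothesis $\log|\F| \le N\epsilon^2\delta^2$ suffices to beat the union bound; pinning down the constant ``$4$'' precisely requires being slightly careful with all of these slacks.
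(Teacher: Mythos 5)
Your proof is correct and uses essentially the same approach as the paper's: a probabilistic construction of $H$, with Hoeffding's inequality controlling both $|H|$ and each $S_f$, and a union bound over $\F$ using the hypothesis $\log|\F| \le |\X|\epsilon^2\delta^2$. The one cosmetic difference is that you sample $x$ into $H$ with probability $\mu(x)/\max_y\mu(y)$, treating $\mu$ as a genuine probability distribution (consistent with Definition~\ref{def:densitydist}); the paper's version treats $\mu$ as a $[0,1]$-valued \emph{measure} and samples with probability $\mu(x)$ directly, but these are the same construction under the standard measure--distribution correspondence (a measure $\mu$ of density $d(\mu)$ normalizes to the distribution $\mu/|\mu|$, and conversely a $\delta$-dense distribution rescales by $\max_y\mu(y)\le 1/(\delta|\X|)$ to a measure of density $\ge\delta$). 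Your explicit bookkeeping of the $(1-\epsilon)$ slack in $|H|$ and the final $\frac{1/2+2\epsilon}{1-\epsilon}\le \frac12 + 4\epsilon$ computation is in fact cleaner than the paper's compressed inequality chain.
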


\begin{proof}

Given the distribution $\mu$, we construct the claimed set $H$ as follows: for each $x \in \X = \{0,1\}^n$, include $x$ to $H$ with probability $\mu(x)$. This is a non-constructive set $H$, but we are still able to argue about its density. We denote the characteristic function of $H$ by $\chi_H$. 

Let $f \in \F$ and let $t(x)$ be an arbitrary function. We now use the following fact, where without loss of generality we view functions $f$ and $t$ as taking values in $\{-1,1\}$:

\begin{lemma}\label{lemma:fact}

Let $\rho>0$. Then, 
\[
    \Pr_{\D_{\mu}}[f(x) = t(x)] = \dfrac{1}{2} + \rho \iff \sum_{x \in \{0,1\}^n} \mu(x) f(x) t(x) = 2 \rho |\mu|,
\]
where $\D_{\mu}$ denotes the probability distribution induced by $\mu$ and $|\mu| = \sum_{x \in \X} \mu(x)$.
\end{lemma}

This is a direct consequence from the definition of $|\mu|$. Next, by the definition of $\chi_H$, it follows that
\[
    \E_{x \sim H}[\chi_H(x)] = \mu(x).
\]
By linearity of expectation, it follows that
\[
    \E_{x \sim H} \Big[ \sum_{x \in \X} \chi_H(x) f(x) g(x) \Big] = \sum_{x \in \X} \mu(x) f(x) g(x).
\]
Then, by applying Lemma~\ref{lemma:fact} with $t:=g$, it follows that
\[
    \sum_{x \in \X} \mu(x) f(x) g(x) \leq 2 \epsilon |\mu|,
\]
since by assumption $\mu$ is an $(\F_t, \epsilon)$-hardcore distribution for $g$. By definition, 
$\Pr_{x \sim D_{\mu}}[f(x)=g(x)] \leq \epsilon$, and hence the parameter $\rho$ in Lemma~\ref{lemma:fact} corresponds to $\epsilon$. 

Next, we use Hoeffding's inequality:

\begin{claim}[Hoeffding's tail bound]\label{claim:hoeffding}

Let $X_1, \ldots, X_N$ be independent random variables with $X_i \in [a, b]$ for all $i$. Then, for all $t \geq 0$,
\[
    \Pr \Big[ \frac{1}{n} \sum_{i=1}^n \left(X_i - \E[X_i]\right) \geq t \Big] \leq \exp \Big( \dfrac{-2nt^2}{(b-a)^2}  \Big). 
\]
\end{claim}

In our case, for each value of $x \in \X$ the quantity $ X_x := \chi_H(x) f(x) g(x)$ is a random variable in the interval $[-1, 1]$. Hence, each of these random variables corresponds to one $X_i$ in Hoeffding's bound.
We just showed that
\[
    \E_{x \sim \X}\Big[\sum_x \chi_H(x) f(x) g(x)\Big] \leq 2 \epsilon |\mu| = 2 \epsilon \cdot |\X| \cdot d(\mu),
\]
since by definition of density of a distribution we know that 
\[
    d(\mu) = \dfrac{|\mu|}{|\X|}.
\]
Set $t := 2 \epsilon d(\mu)$. Since $X_x \in [-1,1]$ for all $x \in \X$, it follows that $(b-a)^2 = 4$. Then, by Hoeffding's tail bound it follows that
\[
    \Pr \Big[ \dfrac{1}{|\X|} \sum_x X_x \geq 4 \epsilon d(\mu) \Big] \leq \exp \Big( \dfrac{-2 \cdot |\X| (2 \epsilon d(\mu))^2}{4}     \Big).
\]
Since by assumption $d(\mu) \geq \delta$, it follows that the above probability is less than $\exp(-2 \cdot |\X| \cdot \epsilon^2 \delta^2)$. Since by assumption $|\F| \leq 2^{|\X| \cdot \epsilon^2 \cdot \delta^2} \ll \frac{1}{10} \exp(2 \cdot |\X| \cdot \epsilon^2 \delta^2)$, and hence by the the union bound this implies that the probability that there exists some $f \in \F$ such that $\sum_x \chi_H(x) f(x) g(x) \geq 4 \epsilon |\mu|$ is less than $1/10$.

Next, applying the Hoeffding bound to $
|H|$ (which is a sum of $|\X|$ independent random variables), by similar calculations and using the assumptions that $d(\mu) \geq \delta$ and $\delta \geq 1/|\X|^{1/2}$, it follows that $|H| \geq 2 \delta \cdot |\X|$.

Lastly, putting everything together, we conclude that there exists some set $H$ such that
\begin{align*}
   & (1) \quad |H| \geq \delta \cdot |\X|, \\[0.2cm]
&    (2) \quad \sum_x \chi_H(x) f(x) g(x) \leq 4 \epsilon |\mu| \leq 8 \delta |H| = 8 \epsilon |\chi_H|. 
\end{align*}
Using Lemma~\ref{lemma:fact} with $t:=g$ and $\rho:= 4\epsilon$ we thus obtain that 
\[
    \Pr_{x \sim H}[f(x) = g(x)] \leq \dfrac{1}{2} + 4\epsilon
\]
for all $f \in \F$. Hence, $H$ is an ($\F, 4\epsilon)$-hardcore set for $g$, as we wanted to show.
\end{proof}

{\rm

For further details, we defer to reader to Section~6 in \cite{imp95}, Section~2.1.1 in \cite{hol05}, or Section 4.4 in \cite{ks03}.

\section{Generalizing the IHCL Statement from \cite{rttv08}}\label{sec:rttvproof}

As summarized in Section~\ref{sec:rttv08}, we claim that we can remove the assumption that $g$ is $\delta$-hard in the IHCL statement shown by Reingold, Trevisan, Tulsiani, and Vadhan (\cite[Thm. 3.2]{rttv08}).
More specifically, Theorem 3.2 in \cite{rttv08} gives a partition $\Pa$ of $\X$ of complexity $\exp(\poly(1/\epsilon, 1/\tau))$ such that \emph{either}:
\begin{enumerate}
    \item $\bigcup_{P \in \Pa} P^{\mathrm{min}} \leq \tau \cdot |\X|$, or
    \item For every $f \in \F$,
    \[
        \E_{P \in D_{\min}} \big[| \Pr_{x \in P^{\mathrm{maj}}}[f(x)=1] - \Pr_{x \in P^{\mathrm{min}}}[f(x)=1]|\big],
    \]
    where $D_{\min}$ is the distribution that selects $P \in \Pa$ with probability proportional to $|P^{\min}|$.
\end{enumerate}

Indeed, if Condition (1) does not hold, then there is a distinguisher that can compute $g$ on more than a $1-\tau$ fraction of inputs.
On the other hand, in the energy potential argument given in \cite{rttv08}, Condition (2) can only hold if $\sum_P |P^{\mathrm{min}}| / |\X| \geq \tau$, which contradicts Condition (1).
Moreover, we remark that Condition (1) is analogous to our condition that we cannot make guarantees about the pieces $P \in \Pa$ such that the balance parameter $b_P$ or the size parameter $\eta_P$ are too small (indeed, the indistinguishability parameter in our IHCL$++$ degrades with $b_P$, and by the approximate MC theorem we only consider the pieces that satisfy $\eta_P \geq \gamma$), given that it implies that $b_P \cdot \eta_P \leq \tau$ for all $P \in \Pa$ when $\D$ is the uniform distribution.
Indeed, we claim that Condition~(1) implies that
\[
    b_P \cdot \eta_P \leq \dfrac{\big|\bigcup_{P \in \Pa} P^{\mathrm{min}}\big|}{|\X|},
\]
which, by our definitions of $b_P$ and $\eta_P$, is equivalent to
\[
    \min\{\E_P[g(x)], 1-\E_P[g(x)]\} \cdot \dfrac{|P|}{|\X|} \leq \dfrac{\big|\bigcup_{P \in \Pa} P^{\mathrm{min}}\big|}{|\X|}.
\]
If $b_P \leq 1/2$, then the minority element in $P$ is 1 and $\E_P[g(x)] \leq |\{x \in P \mid g(x)=1\}| / |P|$.
Hence, the inequality simplifies to $ |P^{\mathrm{min}}| \leq \big|\cup_{P \in \Pa} P^{\mathrm{min}}\big|$, which is clearly true. 
If $b_P > 1/2$, then the minority element in $P$ is 1 and and $\E_P[g(x)] \leq |\{x \in P \mid g(x)=0\}| / |P|$, so the expression again simplifies to $ |P^{\mathrm{min}}| \leq \big|\cup_{P \in \Pa} P^{\mathrm{min}}\big|$.

Lastly, we show that Reingold's et al. Theorem 3.2 modified as above to hold for an arbitrary function $g$ implies the original IHCL.
Hence, we bring back the assumption that $g$ is $(\F, \delta)$-weakly hard for $s = \poly(1/\epsilon, 1/\delta)$ and $\delta > \tau$. 
This implies that $\cup_{P \in \Pa} P^{\mathrm{min}} \geq \delta \cdot |\X|$, because otherwise this would contradict the assumption that $g$ is $\delta$-weakly hard.
Thus, Condition (1) does not hold, and by our modified version of Reingold et al.'s Theorem 3.2, this implies that Condition (2) must hold, which yields a hardcore distribution $D_{\min}$.

\end{document}